\newtheorem{theorem}{Theorem}[section]
\newtheorem{corollary}[theorem]{Corollary}
\newtheorem{lemma}[theorem]{Lemma}
\newtheorem{claim}[theorem]{Claim}
\newtheorem{definition}[theorem]{Definition}
\newtheorem{proposition}[theorem]{Proposition}
\newtheorem{observation}[theorem]{Observation}
\newtheorem{remark}[theorem]{Remark}
\newtheorem{example}[theorem]{Example}
\newenvironment{proof}{\noindent\bf{Proof.}\rm}{\hfill$\blacksquare$\bigskip}
\newcommand{\mbc}[1]{\mbcomment{#1}}
\newcommand{\OLD}[1]{}
\newcommand{\bid}{r}
\newcommand{\vals}{\textbf{v}}
\newcommand{\bval}{z}
\newcommand{\budgets}{\textbf{b}}
\newcommand{\items}{{\cal{M}}}
\newcommand{\agents}{{\cal{N}}}
\newcommand{\prices}{{\cal{P}}}
\newcommand{\anyprice}[3]{AnyPrice(#1,#2,#3)}
\newcommand{\goodsetsiz}{{{\cal{G}}_i(z)}}
\newcommand{\MMSfull}[1]{MMS_{{#1}}(v_i,\items)}
\newcommand{\MMSnum}[1]{MMS_{#1}}
\newcommand{\MMSn}{\MMSnum{n}}
\newcommand{\MMSTwo}{\MMSnum{2}}
\newcommand{\disutility}{disutility}
\newcommand{\disutilities}{disutilities}
\newcommand{\anypricei}{\anyprice{b_i}{v_i}{\items}}
\newcommand{\anypriceic}{\anyprice{b_i}{c_i}{\items}}
\newcommand{\anypricevali}[1]{AnyPrice(#1)}
\newcommand{\anypricebi}{\anypricevali{b_i}}
\newcommand{\pqshare}{{\ell\mbox{-out-of-}d\mbox{-}Share}}
\newcommand{\proportionalbi}{Proportional(b_i)}
\newcommand{\pessimisticvali}[3]{Pessimistic(#1,#2,#3)}
\newcommand{\proportional}[3]{Proportional(#1,#2,#3)}
\newcommand{\pessimisticb}[1]{Pessimistic(#1)}
\newcommand{\pessimisticbi}{Pessimistic(b_i)}
\newcommand{\TPSbi}{TPS(b_i)}
\newcommand{\vect}[1]{\bold{#1}}
\newcommand{\truncated}[3]{TPS(#1,#2,#3)}
\newcommand{\truncatedi}{\truncated{b_i}{v_i}{\items}}
\newcommand{\FullVersion}[1]{#1}
\begin{document}

\title{Fair-Share Allocations for Agents with Arbitrary  Entitlements}
	\author{Moshe Babaioff\thanks{Microsoft Research ---  E-mail: \texttt{moshe@microsoft.com}.}, Tomer Ezra\thanks{Tel Aviv University ---  E-mail: \texttt{tomer.ezra@gmail.com}.}, Uriel Feige\thanks{Weizmann Institute ---  E-mail: \texttt{uriel.feige@weizmann.ac.il}. Part of the work was conducted at Microsoft Research, Herzeliya.}}

\maketitle

\begin{abstract}
	

We consider the problem of fair allocation of indivisible goods to $n$ agents, with no transfers. When agents have equal entitlements, the well established notion of the maximin share (MMS) serves as {an attractive} fairness criterion, where to qualify as fair,  an allocation needs to give every agent at least a substantial fraction of her MMS. 

In this paper we consider the case of arbitrary (unequal) entitlements. 
We explain shortcomings in previous attempts that extend the MMS to 
unequal
entitlements. 
Our conceptual contribution is the introduction of a new notion of a share, the {\em AnyPrice share} (APS), that is appropriate for settings with arbitrary entitlements. Even for the equal entitlements case, this notion is new, and satisfies $APS \ge MMS$, where the inequality is sometimes strict. We present two equivalent definitions for the APS (one as a minimization problem, the other as a maximization problem), and provide comparisons between the APS and previous notions of fairness. 

Our main result concerns additive valuations and arbitrary  entitlements, for which we provide 
a polynomial-time algorithm 
that gives every agent at least a $\frac{3}{5}$-fraction of her APS.
This algorithm can also be viewed as providing strategies in 
a certain natural bidding game, and these strategies secure each agent at least a $\frac{3}{5}$-fraction of her APS.



\end{abstract}


\section{Introduction}\label{sec:intro}



There is a vast literature on the problem of fairly allocating indivisible items to agents that have preferences over the items
(see, for example, the classic books \citep{moulin2004fair,brams1996fair} and the recent survey \citep*{BouveretCM16}).
While most literature considers the important special case that agents have equal entitlements to the items, 
there is also a fast growing recent literature on the more general model in which agents may have arbitrary, possibly unequal,  entitlements to the items, see, for example, [\citealp{BabaioffNT2020}; \citealp{farhadi2019fair}; \citealp{chakraborty2019weighted}; \citealp{aziz2019weighted};  \citealp{aziz2020polynomial}]. The settings with arbitrary entitlements are sometimes referred to in the literature as having agents with ``asymmetric shares/weights" or ``weighted/unequal entitlements". 

{The general model of arbitrary entitlements captures many settings that are not captured by the equal entitlement model. For example, it captures ``agents" that each  is actually a representative of a population, {and population sizes are different} 
(e.g., states in the US have different population sizes). It can also capture situations in which agents have made unequal 
prior investments in the goods (as partners that have invested different amounts in a partnership), and can also capture other asymmetries between agents (for additional motivation see examples presented, e.g., in \citep*{chakraborty2019weighted}).
The definition of a ``fair division" when agents have unequal entitlements} is more illusive than the equal entitlements case. In this paper we explain shortcomings of  recent work that aims to capture fairness in allocation of indivisible goods
when agents have unequal
entitlements. We then suggest a new notion of a share and prove existence of fair allocations with respect to that notion.


A fundamental problem in this space is the allocation of a set $\items$ of $m$ indivisible {goods (items)} to $n$ agents with additive valuations over the items, with no monetary transfers. 
Every agent $i$ has an additive valuation $v_i$ and entitlement of $b_i>0$ to the items, assuming the entitlements sum up to $1$ ($\sum_i b_i=1$). We want to partition the items between the agents in a ``fair" way according to their entitlements, without using money.  
How should we interpret the meaning of ``entitlements" in this setting? 
If items were divisible then there is a natural interpretation -- each agent should get at the least as high value as if getting a $b_i$ fraction of each item, which for additive valuations is the same as her \emph{proportional} share 
$b_i \cdot v_i(\items)$, which we denote by $\proportionalbi$. 
Yet when items are indivisible, a proportional allocation (one  which gives every agent her proportional share)  might not exist, nor any finite approximation to it (consider two agents with equal entitlements over a single item -- one must get nothing). 

To address the problem of indivisibility with equal entitlements, the notion of \emph{maximin} share (MMS)  was suggested {by \citet{Budish11}}. This share is equal to the value an agent can ensure she gets if she partitions the items into $n$ bundles, and gets the worst one. 
Although for some instances it is not possible to give all agents their maximin share, 
in every instance it is possible to give at least a large constant fraction of it \citep*{KurokawaPW18}. 
Yet, this notion is clearly inappropriate for agents with unequal 
entitlements. Consider two agents with entitlements $9/10$ and $1/10$. With ten identical items, clearly the first agent should not get only five of the items (the maximum she can ensure by splitting the items into two and getting the smaller of the two bundles). 


To address this, \citet*{farhadi2019fair} have suggested the notion of \textit{Weighted-max-min-share} (WMMS).
{The WMMS of agent $i$ with valuation $v_i$ when the vector of entitlements is $(b_1,b_2,\ldots,b_n)$	is defined to be the maximal value $z$ such that there is an allocation that gives each agent $k$ at least $\frac{z\cdot b_k}{b_i}$ according to $v_i$ (see Appendix \ref{sec:missing-defs} for a formal definition).} 
In contrast to the maximin share, the WMMS is  influenced by the partition of the entitlements  among the other agents. This leads to scenarios in which an agent with arbitrary large entitlement might have a WMMS of $0$ even when there are many items. 
For example, an agent that has entitlement of 99\% ($b_i=0.99$) for 100 items that she values the same, has a WMMS of 0, if the total number of agents is~$101$.
This suggests that the WMMS is sometimes too small.
Moreover, that paper presents an impossibility result suggesting the WMMS is sometimes too large: there are instances where in every allocation some agent receives at most a $\frac{1}{n}$-fraction of her WMMS. Due to the above examples, we do not view the WMMS as a concept that is aligned well with our intended intuitive understanding of fair allocations when entitlements are unequal.

\citet*{BabaioffNT2020} 
have suggested a different generalization of MMS to the case of arbitrary entitlements: the \emph{$l$-out-of-$d$ share} of an agent is the value she can ensure herself by splitting the set of goods into $d$ bundles and getting the worse $l$ out of them. The \emph{Pessimistic} share of an agent with entitlement of $b_i$, which we denote\footnote{This share is implicit in \citet{BabaioffNT2020}, which did not give it a name.} by $\pessimisticbi$,  is defined to be the highest  $l$-out-of-$d$ share of all integers $l$ and $d$ such that $l/d\leq b_i$. 
For the equal entitlement case with $n$ agents, $b_i=1/n$ and the $\pessimisticbi$ share is the same as the MMS.  
For the example above, an agent with  entitlement of $9/10$ will split the items to ten singletons, and will get nine of them.  
\citet{BabaioffNT2020} did not prove that there always exists an allocation  that  concurrently gives each agent some fraction of the pessimistic share. In this paper we present a new share that is always (weakly) larger, {and 
prove} existence of a share-approximating allocation\footnote{{For some positive constant fraction, a \emph{share approximating allocation} gives every agent at least that fraction of her share.} }
{even} for our larger share.
\citet{BabaioffNT2020} have related the pessimistic share to Competitive Equilibrium (CE), which we discuss next.  

A different approach for fair allocation is to try to come up with a fair allocation for each particular instance, instead of defining the share of each agent. 
For equal entitlements, {\citet{varian1973equity}} has suggested to use \emph{Competitive Equilibrium from Equal Income (CEEI)} in which each agent has  a budget of $1/n$ and the allocation is according to the Competitive Equilibrium allocation with these budgets. {For equal entitlements such a CE allocation is attractive as it ensures several nice fairness properties -- it is envy free and thus proportional (and gives each agent at least her MMS).}

More generally, the notion of \emph{Competitive Equilibrium (CE)} is defined for any entitlements, by treating  the entitlement of an agent as her budget, and looking for item prices that clear the market.  
CE allocations are intuitively fair as all agents have access to the same (anonymous) item prices, and in a CE every agent gets her favorite bundle within her budget. 
For the case of arbitrary entitlements, \citet{BabaioffNT2020} have proven that the allocation of a CE ensures that an agent with budget $b_i$ will get her pessimistic share $\pessimisticbi$. {Interestingly, unlike the case of equal entitlements, with unequal entitlements a CE allocation need not give every agent her proportional share (see Proposition \ref{prop:TPS-CE}).}

While CE allocations have some nice fairness properties, unfortunately, even for equal entitlements, CE allocations might fail to exist, 
as one can immediately see from the simple case of two agents with equal budgets and a single item.\footnote{\label{footnote:generic}
	\citet{BabaioffNT2020} have tried to overcome the CE existence problem through budget perturbations (see also \citep{segal2020competitive}). 
	We note that a CE for such perturbed budgets, even if it exists, no longer provides guarantees with respect to the original pessimistic shares. 
}
Thus, as there are instances with no CE, we cannot rely on CE allocations as a method for fair division of indivisible items.  

An alternative instance-based approach is to allocate items using the allocation that maximizes the Nash social welfare (NSW), for the equal entitlements case, or the weighted Nash social welfare, for the arbitrary  entitlements case. Recall that for valuations $\vals=(v_1,v_2,\ldots,v_n)$ and entitlements $\budgets=(b_1,b_2,\ldots,b_n)$,  the \emph{weighted Nash social welfare} of an allocation $A=(A_1,A_2,\ldots, A_n)$ is $\prod_{i \in \agents} v_i(A_i)^{b_i}$. 
\citet*{CaragiannisKMPS19} showed that an allocation that maximizes NSW is always Pareto optimal, envy-free up to one good (EF1) and always gives each agent at least $O(\frac{1}{\sqrt{n}})$ fraction of her maximin share, showing that for equal entitlements maximizing NSW ensures a combination of some efficiency and fairness properties.

Unfortunately, for some simple unequal entitlements cases, maximizing the weighted NSW seems to produce unfair allocations. 
As an example consider an instance with three agents and three items, two of the items are good (value $1$ each) and the last item is unattractive. 
Assume the first two agents have a value 0 for it, while the last agent has a value $\varepsilon>0$. No matter how high the entitlement of the last agent is, in every allocation that maximizes the weighted NSW she will receive the unattractive item. This seems very unfair to the last agent, as although she has the largest entitlement (possibly by far), she {only gets her least} preferred item.
  


\subsection{Our Contribution}
Our first, conceptual contribution, is the definition of a new share, the \emph{AnyPrice share}, a share definition that is suitable not only for equal entitlements but also for arbitrary entitlements, and makes sense beyond additive valuations. 
We present two alternative definitions for this share, one based on prices (inspired by CE) and  another on item bundling ({inspired by MMS}).

The first definition, based on prices, defines the share of an agent $i$ to be the value she can guarantee herself whenever her budget is set to her entitlement  $b_i$ (when $\sum_i b_i=1$) and she buys her highest value affordable set when items are {adversarially} priced with a total price of $1$.  
Let $\prices=\{(p_1,p_2,\ldots,p_m) | p_j\geq 0\ \forall j\in\items,\ \  \sum_{j\in \items} p_j=1\}$ be the set of item-price vectors that total to $1$.   
The price-based definition of AnyPrice share (APS) is the following:  
\begin{restatable}{definition}{defanyprice}[AnyPrice share]
	 \label{def:anyprice-prices}
	Consider a setting in which agent $i$ with valuation $v_i$ has entitlement $b_i$ to a set of indivisible items $\items$.
	The \emph{AnyPrice share (APS)} of agent $i$, denoted by $\anypricei$, is the value she can guarantee herself whenever the items in $\items$ are adversarially priced with a total price of $1$, and she picks her favorite affordable bundle: 
	$$\anypricei = \min_{(p_1,p_2,\ldots,p_m)\in \prices}\ \ \max_{S\subseteq \items} \left\{v_i(S) \Big | \sum_{j\in S} p_j\leq b_i\right\}$$
	
	When $\items$ and $v_i$ are clear from context we denote 	 the APS share of an agent $i$ with entitlement $b_i$ 
	by $\anypricebi$, instead of $\anypricei$.	
\end{restatable}

Observe that the AnyPrice share is well-defined for any instance, and is a definition of a share that 
depends on the agent valuation and entitlement, but not on the valuations and entitlements of the other agents. 
This can be contrasted with the value the agent gets under a CE allocation, a value that depends on the entire instance, and additionally, is not well defined as a CE might not exist. The AnyPrice share is defined using prices, but unlike CE, it does \emph{not} require finding prices that clear the market  (which might not exist) but rather use (worst case) prices that define a share for an agent based on her own  valuation and entitlement. 
From the definition it is clear that when a Competitive Equilibrium exists (with entitlements serving as budgets),
every agent gets at least her APS (as the CE price vector is one possible price vector in the minimization). 

We further present an alternative definition that turns out useful in proving claims regarding APS.  
Using LP duality we show that the AnyPrice share (APS) of agent $i$ also equals to the maximum value $z$ such that there exists a distribution over bundles for which $i$ has a value of at least $z$, such that no item is allocated more than $b_i$ times in expectation (see Definition~\ref{def:anyprice-bundles} for a formal definition).  
We show (Proposition \ref{prop:small-support}) that there is always a solution to this optimization problem with small support -- at most $m$ sets in the support.
While some of our proofs use the priced-based definition, others use the alternative definition, demonstrating the usefulness of both. 

The next example illustrates the notion of APS.  
\begin{example}\label{example:APS-twice-Pessimistic}
	Consider a 
	setting with five items and an agent $i$ with entitlement  $b_i=\frac{2}{5}$ and an additive valuation with values for the items being $2,1,1,1,0$. Her proportional share is $\frac{2}{5} (2+1+1+1) =2$. Her pessimistic share $\pessimisticbi$  is $1$, as partitioning the items to five bundles and getting the worst two bundles, or partitioning to four or three bundles and getting the worst bundle, gives a value of only~1.  
	Her APS is at least $2$ as for any pricing of the items, either the item of value $2$ has price at most $\frac{2}{5}$, or there are two items of value $1$ whose total price is at most $\frac{2}{5}$. 
	Her APS is not larger than 2, as by pricing each item at fifth of its value, the agent cannot afford a bundle of value more than $2$. 	
\end{example}
The example illustrates that when valuations are additive, the APS can be larger than the pessimistic share (even 
twice as large).
This is no coincidence, as we show that the APS is always at least the pessimistic share, for any valuations (even non-additive). 

\OLD{
===================================

\mbc{to be REMOVED:}

We list few properties of APS: 
\begin{itemize}
	\item APS defined a share for agents with arbitrary entitlements.
	\item The definition of APS does not rely on the valuations being additive, and is well defined for any valuations. This is in contrast to the $l$-out-of-$d$ share, which is the basis for the definition of the $\pessimisticbi$ share, that is based on the agent getting multiple sets (yet the value of the union of the sets is not equal to the sum \mbc{not clear why this is a problem}, when valuations are not additive). \mbc{remove}
	\item The APS is indeed a new share definition: even for equal entitlements\footnote{That is, for an entitlement $b_i=1/k$ for an integer $k$.}, the APS is different than MMS.
	\item For any $v_i$ the APS satisfies $\anypricebi\geq \pessimisticbi$.
	\item In every CE every agent gets at least her AnyPrice share (see PROP REF).
\end{itemize}

Properties of APS for additive valuations: 
\begin{itemize}
	\item For any $v_i$ the APS satisfies $\proportionalbi \geq \anypricebi \geq \pessimisticbi \geq \frac{1}{2}\cdot \anypricebi$. 
	\item Like MMS there are instances for which we cannot simultaneously give all agents their exact share (no MMS allocation\footnote{For a share $S$, and \emph{$S$ allocation} is an allocation that gives every agent her share $S$.
	An \emph{approximate $S$ allocation} is an allocation that gives every agent a constant fraction of her share $S$.} and thus no APS allocation).
	\item While for arbitrary entitlements there are no allocations that are approximately proportional\footnote{That is, there are instances in which in every allocation some agent is not getting any positive fraction of her proportional share.}, 
	we show that there always exist approximated APS allocations (see our main results). First approximation to a share for arbitrary entitlements. (also shows approx to pessimistic) 
	\item while \citet{BabaioffNT2020} did not prove that an allocation that approximate the $pessimistic$ shares always exists, we prove approximation to the larger APS share. 
	\item Like MMS it is NP-hard to compute. Yet, it can be approximated arbitrarily well in polynomial time (PTAS or FPTAS?).   
\end{itemize}

=================================\mbc{START OLD}

\mbc{OLD (now I list the pros and cons above). Need to edit below:}

The definition of APS does not rely on the valuations being additive, and is well defined for any valuations.
Yet, in the reminder of this paper we focus on proving results for the additive case, leaving the study of more general valuations for future research.   
\mbc{Should we discuss why these definitions are good beyond additive? and explain why pessimistic is not?}


We show that for additive valuations, in every CE every agent gets at least her AnyPrice share (see PROP REF), and that the AnyPrice share is at most the proportional share (see PROP REF). 
 
\mbc{Should we talk about computation? It is NP-hard to compute that APS (as for two equal entitlement agents it is equal to the MMS, which requires partition to be solved.) Note that the definition looks like the share is computed by an LP- but it has exponentially many variables (and apparently no efficiently computable separation oracle ). }

 We further show (REF PROP) that the APS share is always at least as large as the
 $\pessimisticbi$ share. Additionally, for an entitlement $b_i=1/k$ for an integer $k$ (corresponding to equal entitlements with $k$ agents),  the APS is at least as large as the MMS, and sometimes strictly larger (REF PROP). 

\mbc{discuss pros and cons of APS}

\mbc{CHANGE} A conceptual contribution of this work is that of introducing a new notion of share that is 
combining the idea of a ``share" (that are instance independent and only depend on the valuation $v_i$ and entitlement $b_i$ of the considered agent $i$, but not the valuations of the others) and a relaxation of CE, to define a new share for each agent. 
Our aim is to come with a share that has several good properties, and makes sense even beyond additive valuations, to all sub-additive valuations\footnote{\label{footnote:super-add}Focusing on ex-post allocations  is problematic for super-additive valuations. Consider for example agents that have no value unless getting all items. Splitting the items makes little sense, and any reasonable solution will allocate all items to one agent (possibly at random, to get some ex-ante fairness). As we are interested in ex-post allocations, it thus seems that we should not aim to handle super-additive valuations, and focus on a share definition that is appropriate for sub-additive valuations (hopefully beyond just additive).            
}. First, unlike proportional share, it should be possible to give all agents concurrently some approximation to their shares. 
Secondly, the definition should be general enough to make sense beyond additive valuations. \mbc{add more. Guaranteed by CE? can be given if proportional exists? "sensitive" to the entitlements in the sense that it gets "better" with an increase in entitlement?   }

=================================\mbc{END OLD}
} 

The definition of APS does not rely on the valuations being additive, and is well defined for any valuations. 
Yet, in this paper we mainly focus on proving results for the additive case, leaving the study of more general valuations for future research. 
We start by making several observations regarding properties of the APS for additive valuations. First, we observe that it is at most the proportional share. Secondly, we show that while the APS is NP-hard to compute (like the MMS), it has a pseudo-polynomial time algorithm, while the MMS does not.
We then turn to prove results about the approximation to the APS that can be given to all agents concurrently. 
We observe that for two agents, it is always possible to give both of them their APS, and turn to study the more involved  case of more than two agents. 
Our main result is that for any number of agents with additive valuations and arbitrary entitlements, it is always possible to give each agent at least a $\frac{3}{5}$ fraction of her AnyPrice share, and moreover, 
at least  $\frac{1}{2-b_i}$ fraction of the APS (which is more than $\frac{3}{5}$ if her entitlement is larger than $1/3$). Note that the fraction $\frac{1}{2-b_i}$  goes to $1$ as $b_i$ grows to $1$. 
\begin{theorem}\label{thm:intro-unequal}
	For any instance with $n$ agents with additive valuations \vals\ and entitlements \budgets,  there exists an allocation in which every agent gets at least 
	$\max \{\frac{3}{5}, \frac{1}{2-b_i} \}$ fraction of her AnyPrice share.
	Moreover, such an allocation can be computed in polynomial time.
	
	
	
\end{theorem}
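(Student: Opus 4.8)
The plan is to prove the bound by designing an explicit allocation procedure and analyzing it through the two equivalent characterizations of the APS: the adversarial-pricing definition (Definition~\ref{def:anyprice-prices}) to certify the value that the agent currently being served can grab, and the dual distributional characterization together with the small-support property (Proposition~\ref{prop:small-support}) to certify that the agents not yet served retain their shares. As a first normalization I would scale each valuation $v_i$ so that $\anypricebi = 1$; it then suffices to hand every agent a bundle of value at least $\max\{\frac{3}{5}, \frac{1}{2-b_i}\}$. Since $\frac{1}{2-b_i} \ge \frac{3}{5}$ exactly when $b_i \ge \frac{1}{3}$, the theorem splits into a uniform guarantee of $\frac{3}{5}$ for everyone and an improved guarantee for the high-entitlement agents.

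The backbone would be a family of \emph{valid reduction} rules in the spirit of the MMS literature. I would first isolate conditions under which an agent $i$ and a bundle $B$ can be removed so that (i) $v_i(B)$ already meets $i$'s target fraction, and (ii) in the residual instance on items $\items\setminus B$, with every remaining entitlement rescaled to $b_k/(1-b_i)$ so that they again sum to $1$, each remaining agent's APS is at least its original (normalized) value. Step (ii) is where the dual characterization is essential: I would take a small-support distribution witnessing $\anyprice{b_k}{v_k}{\items}=1$ for a remaining agent $k$, and argue that deleting the items of $B$ and reweighting can be absorbed without dropping below value $1$, because the removed bundle occupies only a bounded fraction of $k$'s witnessing mass. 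With such rules in hand, the proof proceeds by induction on the number of agents: repeatedly apply a reduction, and invoke the two-agent base case (for which the excerpt already asserts an exact APS allocation exists) once two agents remain.

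The heart of the argument is showing that a valid reduction of the required strength always exists. Here I would exploit the pricing definition via a \emph{priced bag-filling} step: maintain item prices totalling $1$, and let the agent under consideration buy a cheapest collection of items whose value first reaches her threshold, treating the accumulated price as a potential that tracks how much of the total budget has been consumed. The factor $\frac{3}{5}$ should emerge from the worst case in which the affordable bundle must be completed by at most one ``expensive'' item, and the refined factor $\frac{1}{2-b_i}$ from quantifying how a larger budget lets agent $i$ afford that completing item outright; this is exactly the regime that makes $\frac{1}{2-b_i}\to 1$ as $b_i\to 1$. This priced procedure is also what gives the bidding-game reading promised in the abstract: the prices play the role of the opponents' bids, and each agent's rule to buy her cheapest value-reaching bundle is a strategy that \emph{secures} the guarantee against any opponent behaviour.

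The main obstacle, I expect, is meeting the two requirements of a reduction simultaneously: pushing the served agent's value up to $\frac{3}{5}$ (or $\frac{1}{2-b_i}$) while proving that the \emph{removal does not erode anyone else's APS} after renormalization. These pull in opposite directions, since a generous bundle for agent $i$ is precisely what threatens the others' witnessing distributions, so the delicate part is an exchange/averaging argument over the dual witnesses showing that some choice of agent and bundle keeps both sides satisfied; I would guess the correct agent to serve is the one whose threshold is cheapest to meet under the current prices, and that entitlement-weighting of the prices is what makes the accounting close. Finally, for the polynomial-time claim I would replace exact APS values by the pseudo-polynomial (or PTAS) computation mentioned earlier, check that each reduction and each bag-filling step runs efficiently, and verify that the accumulated approximation error never breaches the $\frac{3}{5}$ threshold.
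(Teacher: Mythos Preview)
Your plan diverges from the paper's proof in a structural way, and the divergence is exactly where your acknowledged ``main obstacle'' lies unresolved.

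The paper does \emph{not} proceed by valid reductions. It analyses the bidding game directly and shows that \emph{each agent separately} has a strategy guaranteeing her target fraction \emph{regardless of how the other agents play}. Because the guarantee is robust to arbitrary opponent behaviour, there is no need to maintain any invariant about the other agents' APS after items are removed: all $n$ agents simply play their respective strategies simultaneously, and each one succeeds. The $\frac{1}{2-b_i}$ bound (actually proved for the larger TPS) comes from one explicit bidding rule; the $\frac{3}{5}$ bound comes from a different three-phase strategy whose last phase invokes a lemma securing $\frac{3}{2}\cdot\anyprice{b_i/2}{v_i}{\items}$. The constant $\frac{3}{5}$ does not arise from ``completing an affordable bundle by one expensive item''; it arises from balancing the loss when items of value in $(\frac{1}{2},\frac{3}{5})\cdot APS$ are taken by opponents against the gain from the $\frac{3}{2}\cdot APS(b_i/2)$ lemma in the residual instance.

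Your sequential-reduction scheme, by contrast, must establish that after serving agent $i$ with bundle $B$ and rescaling every remaining entitlement to $b_k/(1-b_i)$, each remaining agent's APS is still at least its original value. You assert this will follow by reweighting a small-support dual witness, but you give no argument, and the natural supporting lemma in the paper (Lemma~\ref{lem:kaps}) uses the different scaling $b_k/(1-|B|\cdot b_k)$: the denominator depends on the \emph{number of items removed} and on $b_k$, not on $b_i$. Your rescaling dominates that one only when $|B|\cdot b_k \le b_i$ for every remaining $k$, which you have no control over; when the served agent needs many items to reach her threshold and some remaining agent has comparable entitlement, the inequality fails and the invariant is unproven. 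This is precisely the tension you flag (``these pull in opposite directions''), and you have not resolved it. The paper's per-agent robust-strategy formulation sidesteps the difficulty entirely, which is why that formulation, rather than a reduction/induction, is the right skeleton here.
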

As far as we know, this 
is the first positive result that guarantees a constant fraction of \emph{any} share to agents with arbitrary entitlements. (See Section~\ref{sec:discussion} for clarifications regarding this statement.) 
Recall that for equal entitlements, with at least three agents it is not possible to give all agents their MMS \citep*{KurokawaPW18}.
As the APS is at least as large as the MMS,
we thus conclude that when there are more than two agents  one cannot hope to give all agents their APS share exactly, even in the equal entitlements case, and constant approximation is the best we can hope for (we leave open the problem of finding the exact constant).  

The proof of the theorem is based on 
analyzing a natural ``bidding game", 
in which entitlements serve as budgets, and at each round the highest bidder wins, taking any items she wants and paying her bid for each item taken. We show that in this game, no matter how other players bid, an agent has a bidding strategy that gives her at least	$\max \{\frac{3}{5}, \frac{1}{2-b_i} \}$ fraction of her APS. 
If the bidding game is used as the mechanism for allocating the items, then in every equilibrium (e.g., of the full-information game), every agent receives at least $\max \{\frac{3}{5}, \frac{1}{2-b_i} \}$ fraction of her AnyPrice share.


Finally, we consider the APS in the special case of equal entitlements. 
{While for equal entitlements the pessimistic share coincides with the MMS, the APS does not.} 
Rather, the APS is 
at least as large as the MMS, and sometimes strictly larger. Hence existing algorithms that guarantee every agent a positive fraction of the MMS need not guarantee the same fraction with respect to the APS. Nevertheless, we prove that in the equal entitlements case an algorithm of~\citep{BK20} gives every agent at least a $\frac{2}{3}$ fraction of her APS (which is better than the  $\frac{3}{5}$ fraction that we get for the arbitrary entitlements case). 
\begin{theorem}
		For any instance with $n$ agents with additive valuations \vals\  and  equal entitlements ($b_i=1/n$ for every $i$), there exists an allocation in which every agent $i$ gets at least a
	$\frac{2n}{3n-1}$ 
	fraction of her AnyPrice share. Moreover, such an allocation can be computed in polynomial time. 
\end{theorem}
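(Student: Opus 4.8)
The plan is to show that the entire content lies in a new \emph{analysis} of the polynomial-time algorithm of~\citep{BK20}, whose native guarantee for $n$ equal-entitlement agents is a $\frac{2n}{3n-1}$ fraction of the maximin share $\MMSn$. Because $\anypricebi \ge \MMSn$ and the inequality can be strict, this maximin guarantee does not transfer for free, so I would argue directly that the \emph{same} allocation delivers the same fraction of the (weakly larger) AnyPrice share. Existence and the polynomial running time are then inherited from~\citep{BK20} verbatim; in particular the algorithm never needs to compute the APS, which is NP-hard. For the analysis I would normalize each agent's valuation so that her APS equals $1$, reducing the claim to: the algorithm awards every agent a bundle of value at least $\frac{2n}{3n-1}$.

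First I would isolate the two properties of an APS-normalized instance that play the roles of the maximin facts used in~\citep{BK20}. Property (i): since the APS is at most the proportional share, $\anypricebi=1$ forces $v_i(\items)\ge n$; this global value bound is what prevents the bag-filling phase from exhausting value before all $n$ agents are served, and its exactness (rather than an asymptotic estimate) is the source of the finite-$n$ improvement of $\frac{2n}{3n-1}$ over the nominal $\frac{2}{3}$. Property (ii): an \emph{APS reduction lemma}, stating that for every item $j$, removing $j$ together with one agent keeps the resulting $(n-1)$-agent APS at least $1$. I would prove (ii) from the price-based definition: given any price vector $p$ on $\items\setminus\{j\}$ summing to $1$, extend it to a price vector on $\items$ that prices $j$ just above the budget $\frac{1}{n}$ and scales the remaining prices accordingly; then every bundle affordable under budget $\frac{1}{n}$ avoids $j$, and as the overpricing of $j$ tends to $0$ its cost under $p$ tends to at most $\frac{1}{n-1}$, so by finiteness of the collection of bundles some value-$\ge 1$ bundle is affordable under budget $\frac{1}{n-1}$. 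This substitutes for the unconditional monotonicity of $\MMSn$ under removing one item and one agent, and it legitimises the reduction steps that strip off highly valued single items.

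With (i) and (ii) in hand I would run the argument of~\citep{BK20} on the APS-normalized instance: reduction steps are justified by (ii), and the threshold bag-filling phase, fed the bound $v_i(\items)\ge n$ from (i) in place of the maximin counting, yields exactly the fraction $\frac{2n}{3n-1}$. Undoing the per-agent normalization then gives, for every agent $i$, a bundle worth at least $\frac{2n}{3n-1}\,\anypricebi$, which is the claimed allocation.

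The main obstacle is transporting the accounting of~\citep{BK20} faithfully to the APS benchmark. The item-level lemma (ii) is clean, but bag-filling hands out \emph{bundles}, and the delicate point is to show that removing an \emph{awarded bundle} together with its recipient still preserves each remaining agent's APS at $1$: the pricing argument above does not extend verbatim when the bundle has size comparable to $n$, and one must instead combine it with the fact that a non-recipient values each awarded bundle strictly below the threshold. Making this bundle-level preservation and the overshoot bookkeeping line up with the precise constant $\frac{2n}{3n-1}$, rather than a weaker constant such as $\frac{n}{2n-1}$ produced by a crude loss estimate, is the crux; the remaining steps are routine.
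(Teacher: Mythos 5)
Your two structural properties are both correct, and both in fact appear in the paper: property (i) is the inequality $\anypricebi \le \proportionalbi$ (Proposition~\ref{prop:shares-comp}), and your property (ii), the item-plus-agent removal lemma, is precisely Lemma~\ref{lem:kaps} with $|K|=1$, proved there by essentially the same overpricing argument you sketch. You also name the right algorithm: the paper runs the greedy-EFX procedure of \citep{BK20} on ordered instances (after the reduction of Theorem~\ref{thm:BL14}). However, that algorithm is envy-cycle elimination; it has no ``reduction steps'' and no ``threshold bag-filling phase,'' so the accounting you propose to transplant does not exist in the analysis whose guarantee you want to inherit.

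The genuine gap is the central claim that (i) and (ii) suffice to ``run the argument of \citep{BK20}'' against the APS benchmark. The MMS analysis reasons about the MMS certificate itself --- a partition of $\items$ into $n$ bundles, each of value at least $\MMSn$ --- via pigeonhole facts and validity of specific reductions. None of this transfers: the APS certificate is a fractional family of possibly overlapping sets, each item covered with weight at most $1/n$ (Definition~\ref{def:anyprice-bundles}), and the APS can strictly exceed the MMS (Lemma~\ref{lem:APSlargerPESS}), so partition-based upper bounds on the benchmark simply do not bound the APS. Your fallback, the ``bundle-level preservation'' claim (removing an awarded bundle together with its recipient keeps each remaining agent's share intact), is false even for the MMS: with $n=3$ and item values $9,1,6,4,5,5$ one has $\MMSnum{3}=10$ (partition $\{9,1\},\{6,4\},\{5,5\}$), yet removing the bundle $\{1,4,5\}$, of value exactly $10$, together with one agent leaves $\{9,6,5\}$, whose two-agent MMS is $9<10$. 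This is exactly why the paper's proof of the key lemma (Lemma~\ref{lem:BK17}) does not reuse the MMS argument: in the case $v_q(e_k)\le \frac{3}{4}$ it bounds the TPS by a total-value count (Claim~\ref{claim:3/4}), which is close in spirit to your property-(i) reasoning, but in the complementary cases it upper-bounds the APS by exhibiting explicit adversarial price vectors tailored to the structure of the greedy-EFX output (Claims~\ref{claim:atMost2}, \ref{claim:5/6} and~\ref{claim:middle}). That price-construction step is the actual mathematical content of the theorem, and it is precisely the part your proposal labels as ``routine'' in one place and concedes as an unresolved ``crux'' in another, without supplying an argument.
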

As mentioned above, giving every agent her APS is not possible, even for equal entitlements. Finding the best approximation is open even for the case of equal entitlements. 

We remark that our results are actually somewhat stronger than the theorems stated above. 
Some bounds are proven with respect to a notion of a share called the \emph{Truncated Proportional Share} (TPS) (see Section~\ref{sec:tps}) which for additive valuation is at least as large as the APS, and sometimes larger. 
\medskip 

The 
paper is organized as follows. We first present some additional related work in Section \ref{sec:related}. 
We then present the model and some prior fairness definitions in Section \ref{sec:model}. 
We introduce the notion of AnyPrice share and discuss its properties in Section \ref{sec:aps}. 
We then present our main result for APS with arbitrary entitlements in Section \ref{sec:allocation-game}, and our result for the special case of equal entitlements in Section \ref{sec:greedy-efx}.   
All proofs missing from the body of the paper appear in the appendix. The main part of this paper considers only allocation of goods. Extensions of the AnyPrice share to chores and to mixed manna are presented in Appendix~\ref{app:chores}.

\subsection{Additional Related Work}
\label{sec:related}




One approach taken for fairness when agents have equal entitlement is to require the allocation to be envy free (EF). With indivisible items, such envy-free allocation might not exist, and relaxations as envy-free up to one good (EF1) were considered~\citep{Budish11}. {EF1 allocations were shown to exist in~\citep*{LiptonMMS04}.  Clearly, when agents have different entitlements, naively using this notion is inappropriate.  \citet*{chakraborty2019weighted} have extended envy-freeness to settings with agents having arbitrary entitlements (introducing notions of strong and weak weighted EF1, WEF1), and proved existence result for Pareto optimal allocations satisfying these notions. We remark that even in the equal entitlements settings, 
{there are instances with an EF1 allocation (that is Pareto optimal) that has an agent that only gets an $O(\frac{1}{n})$ fraction of her MMS. }
Hence WEF1 allocations do not guarantee agents a constant fraction of their APS. }

While we consider fair allocation of indivisible goods to agents with arbitrary entitlement, \citet*{aziz2019weighted} considered the related problem of fair assignment of indivisible chores (negative value items).
\citet*{aziz2020polynomial} considered the case that every item value might have mixed signs (a good for some agents, a bad for others) when agents have arbitrary entitlements,   {and show that there always exists an allocation that is not Pareto dominated by any fractional allocation and also gives each agent her proportional share up to one item. Moreover, they show that such an allocation can be found in polynomial time.} 



Our result for agents with equal entitlements shows that we can give each agent $2/3$ of her APS. As the APS is at least the MMS, this result relates to the literature on MMS and approximate MMS allocations. 
{The maximin share (MMS) was first introduced by \citet{Budish11} as a relaxation of the proportional share. \citet*{KurokawaPW18} showed that for agents with additive valuations, an allocation that gives each agent her MMS may not exists.	A series of papers 
[\citealp{KurokawaPW18}; \citealp{amanatidis2017approximation}; \citealp{BK20}; \citealp{GhodsiHSSY18}; \citealp{garg2019approximating}; \citealp{GT20}] considered the best fraction of the MMS that can be concurrently guaranteed to all agents, and the current state of the art (for additive valuations) is $\frac{3}{4} +\Omega(\frac{1}{n}) $ fraction of the MMS.}


{In our proofs we consider a natural bidding game in which agents have budgets, and bid for the right to select items (as many items as their budget allows, given the bid). We design and analyse strategies for the agents that guarantee that they obtain a high value in the game. Our bidding game is a variant of the \textit{Poorman game} introduced by \citet*{lazarus1995richman} where the agents start with budgets, and in each turn, the agents perform a first price auction in order to determine who selects the next action of the game.
Variants of these games were later studied in the context of  allocating items in multiple papers, for example, \citep{huang2012sequential,meir2018bidding}. 
}

\section{Model} 
\label{sec:model}

We consider fair allocation of a set  $\items$ of $m$ indivisible goods (items)  to a set $\agents$ of $n$ agents, with  entitlements to the items that are potentially different. 
Each agent $i$ is also associated with an \emph{entitlement} 
$b_i > 0$, and the sum of entitlements of the agents is $1$ (i.e., $\sum_{i\in \agents} b_i = 1$).
We use $\budgets= (b_1,\ldots,b_n)$ to denote the vector of agents' entitlements, 
{and say that agents have \emph{arbitrary entitlements}. 
If $b_i=1/n$ for every $i$ then we say that agents have  \emph{equal entitlements}. }
In this work we consider deterministic allocations. An \emph{allocation} $A$ is a partition of the items to 
$n$ disjoint bundles  $A_1, \ldots, A_n$ (some of which might be empty), where $A_i \subseteq {\items}$ for every $i\in \agents$.\footnote{One can extend the terminology and notation so that an allocation may leave some of the items not allocated. This extension is not needed in this paper, as we only consider goods (and not bads), and allocating additional items to an agent cannot hurt her.} 
{We denote the set of all allocations of $\items$ to the $n$ agents by $\mathcal{A}$.} 
A valuation $v$ {over the set of goods $\items$} is called \textit{additive} if each item $j\in \items$ is associated with a value $v(j)\geq 0$, and the value of a set of items $S \subseteq \items$ is $v(S)=\sum_{j\in S}v(j)$.
A valuation $v$ is called \textit{subadditive} if for every two sets of items $S,T\subseteq \items$, it holds that $v(S)+v(T)\geq v(S\cup T)$. 
In this work we assume 
that all valuations are monotone (i.e., $v(S) \geq v(T)$ for every $T\subseteq S$), and normalized (i.e., $v(\emptyset)=0$.)
{Let $v_i(\cdot)$ denote the valuation of agent $i$, and let $\vals =(v_1,\ldots,v_n)$ denote the vector of agents' valuations.}
We assume that the valuation functions of the agents as well as the entitlements are known to the social planner, and that there are no transfers (no money involved). 

We formally define some shares from the literature. 
{The \emph{proportional share} of agent $i$ with valuation $v_i$ and entitlement $b_i$ for items $\items$ is 
$\proportional{b_i}{v_i}{\items}= b_i\cdot v_i(\items)$. 
When $v_i$ and $\items$ are clear from context we omit them from the notation and denote this share by $\proportionalbi$.
}
We next define the maximin share (MMS), which is defined for the equal entitlements case. 

\begin{definition}[maximin Share (MMS)]
	\label{def:pes}
	The \emph{maximin share (MMS)} of agent $i$ with valuation $v_i$ over $\items$, when there are $n$ agents, which we denote  by $\MMSfull{n}$, 
	is defined to be the highest  value she can ensure herself by splitting the goods to $n$ bundles and getting the worst one. Formally: 
	$$\MMSfull{n} =   \max_{(A_1,\ldots,A_n)\in \mathcal{A}} \min_{j \in [n] } \left\{v_i(A_j)	
	\right\},$$

\end{definition}
When $v_i$ and $\items$ are clear from context, we denote this share by $\MMSn$, and when $n$ is also clear from context, we simply use the term MMS.


In order to formalize the notion of pessimistic share, we first present the definition of  the $\ell$-out-of-$d$ share, a generalization of the MMS.
\begin{definition}[$\ell$-out-of-$d$ share]
	\label{def:lds}
	The \emph{$\ell$-out-of-$d$ share} of agent $i$ with valuation $v_i$ over $\items$ is the value she can ensure herself by splitting the goods to $d$ bundles and getting the worse $\ell$ out of them. Formally: 
	
		$$\pqshare(v_i,\items) =   \max_{(P_1,\ldots,P_d)\in\mathcal{Z}  } \min_{J \subseteq [d]:~ |J| = \ell } \left\{v_i(\cup_{j\in J}P_j) 
		\right\},$$
	where $\mathcal{Z}$ is the set of all partitions of $\items$ into $d$ disjoint sets. 
\end{definition}

\begin{definition}[Pessimistic Share]
	\label{def:pes}
	The \emph{Pessimistic share} of agent $i$ with valuation $v_i$ over $\items$, and entitlement $b_i$, denoted by $\pessimisticvali{b_i}{v_i}{\items}$, is defined to be the highest  $\ell$-out-of-$d$ share for all integers $\ell$ and $d$ such that $\frac{\ell}{d} \leq b_i$. Formally: 
	
	{
		$$\pessimisticvali{b_i}{v_i}{\items}
		 =   \max_{\ell,d\in \mathbb{N}: ~\frac{\ell}{d}\leq b_i} \Big\{\pqshare(v_i,\items)\Big\}.$$}
\end{definition}
When $v_i$ and $\items$ are clear from context we omit them from the notation and denote this share by $\pessimisticbi$.

\section{The AnyPrice Share (APS)}
\label{sec:aps}
In this section we introduce a new share definition that is well defined for arbitrary entitlements (even for non-additive valuations), presents some of its properties and compare it to previous suggested shares.

\subsection{The Definition of the AnyPrice Share}

We present two definitions of $\anypricei$, the AnyPrice share (APS) of an agent $i$ that has valuation $v_i$ and entitlement $b_i$ (when $\sum_i b_i=1$) to items $\items$. 
The definitions are general and well defined for every valuations, not requiring that the valuations are necessarily additive.
{Our first definition is based on prices and was presented as Definition \ref{def:anyprice-prices} in Section \ref{sec:intro}.
}
\FullVersion{
The first definition, based on prices, defines the share of an agent $i$ to be the value she can guarantee herself whenever her budget is $b_i$ and she buys her highest value affordable set when items are adversarially priced with total price of $1$.  
Let $\prices=\{(p_1,p_2,\ldots,p_m) | p_j\geq 0\ \forall j\in\items,\ \  \sum_{j\in \items} p_j=1\}$ be the set of item-price vectors that total to $1$.   
We restate the price-based definition of APS:  
\defanyprice*
}

We next present the second definition of the AnyPrice share, using item packing (in Proposition \ref{prop:anyprice-eqe} we 
show that the two definitions are equivalent).

\begin{definition}[AnyPrice share, dual definition] 	\label{def:anyprice-bundles} 	
	Consider a setting in which agent $i$ with valuation $v_i$ has entitlement $b_i$ to a set of indivisible items $\items$.
	The \emph{AnyPrice share (APS)} of $i$, denoted by $\anypricei$, is 
	the maximum value $z$ she can get 
	by coming up with non-negative
	weights $\{\lambda_T\}_{T\subseteq \items}$ that total to $1$ (a distribution over sets), 
	such that any set $T$ of value below $z$ has a weight of zero, and
	any item appears in sets of total weight at most $b_i$:
	$$ \anypricei = \max z $$ subject to the following set of constraints being feasible for $z$: 
	\begin{itemize}
		\item $\sum_{T\subseteq \items} \lambda_T=1$
		\item  $\lambda_T\geq 0\ \forall {T\subseteq \items}$
		\item  $\lambda_T= 0\ \forall {T\subseteq \items}$ s.t. $v_i(T)<z$ 
		\item $\sum_{T: j\in T} \lambda_T \leq b_i \ \forall j\in \items	$
	\end{itemize}

	
\end{definition}
An equivalent set of constraints that will sometimes be convenient to use is the following. 
Let $\goodsetsiz$ be the family of sets $T\subseteq \items$ such that $v_i(T)\geq z$. 
The set of constraints now becomes $\sum_{T\in \goodsetsiz} \lambda_T=1$,
$\lambda_T\geq 0\ \forall {T\in \goodsetsiz}$,  and $\sum_{T\in \goodsetsiz: j\in T} \lambda_T \leq b_i \ \forall j\in \items	$.
{For the case of equal entitlements this definition can be viewed as a fractional relaxation of the definition of MMS (and hence implying that the APS is at least as large as the MMS). Specifically, one can get the definition of the MMS (with $n$ agents, hence $b_i=\frac{1}{n}$) by 
restricting every weight $\lambda_T$ to be either $b_i$ or~0 for any non-empty set.
As the total weight is $1$, there must be at most $n$ bundles with non-zero weight. 
The constraint that every item belongs to bundles with total weight of at most $b_i$ implies that all 
non-empty positive-weight bundles are disjoint. 
The APS on the other hand, relaxes the constraint $\lambda_T \in \{0,b_i\}$ for non-empty sets (which can be thought of as an integer constraint of the form $\lambda_T \in \{0,1\} \cdot b_i$) to the fractional constraint $0 \le \lambda_T \le b_i$, and maintains the constraint that every item belongs to bundles with total weight of at most $b_i$.}

While the price-based definition (Definition \ref{def:anyprice-prices}) is very useful in proving that the APS is not larger than some value, by simply presenting prices for which the agent cannot afford any bundle of that value, the weight-based definition (Definition \ref{def:anyprice-bundles}) is very useful in proving the APS is at least as large as some value $z$ -- by simply presenting weights that satisfy the definition for some value $z$. 
Indeed, let us go back to  Example \ref{example:APS-twice-Pessimistic} to illustrate this point. 
Recall that the example considers a setting with five items and an agent $i$ with entitlement  $b_i=2/5$ and an additive valuation with values for the items being $2,1,1,1,0$.
Using the price-based definition one could easily see that the APS of the agent is at most $2$, by pricing each item at fifth of its value. 
To see that the APS is at least $2$ using the alternative definition, 
consider a distribution over four sets of value $2$, one set contains the item of value $2$ and weight  $2/5$, and three sets each contains a different pair  of items of value $1$, and each of the sets has a weight of $1/5$. Each of the four sets has a value of $2$, and the total weight of the sets that contain any item is at most $2/5$, thus showing the APS is at least $2$.

We start with several simple observations. We first {show, using linear programming duality,} that Definitions~\ref{def:anyprice-bundles} and \ref{def:anyprice-prices} are indeed equivalent. 
\begin{restatable}{proposition}{propDefEq}		
	\label{prop:anyprice-eqe}
	The two shares defined in Definition~\ref{def:anyprice-bundles} and Definition~\ref{def:anyprice-prices} are equivalent.
\end{restatable}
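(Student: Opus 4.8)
The plan is to show that the two definitions compute the same quantity by exhibiting them as the optimal values of a primal-dual pair of linear programs and invoking LP duality (strong duality), together with a minimax argument to handle the fact that Definition~\ref{def:anyprice-prices} is written as an explicit $\min$-$\max$ over prices and bundles. The central idea is that both definitions are really asking ``how large a guaranteed value $z$ can the agent secure?'', and each is the LP relaxation of a zero-sum game between the agent (choosing a bundle, or a distribution over bundles) and an adversary (choosing a price vector). I would first fix the target value $z$ and reformulate both sides as feasibility questions, then relate them.

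First I would set up the weight-based program of Definition~\ref{def:anyprice-bundles} as a linear program in the variables $\{\lambda_T\}_{T\in\goodsetsiz}$: we ask whether, for a given $z$, the constraints $\sum_{T\in\goodsetsiz}\lambda_T=1$, $\lambda_T\ge 0$, and $\sum_{T\in\goodsetsiz:\,j\in T}\lambda_T\le b_i$ for all $j$ are feasible. The APS under this definition is the supremum of $z$ for which this system is feasible. I would then write down the LP dual of this feasibility system, treating the equality constraint (with a free multiplier, say $\mu$) and the packing constraints (with non-negative multipliers $p_j$) as the dual variables. The key observation is that the dual variables $p_j$ attached to the item-packing constraints $\sum_{T:\,j\in T}\lambda_T\le b_i$ play exactly the role of item prices, and by Farkas' lemma / LP duality the feasibility of the primal for value $z$ is equivalent to a statement about prices: for every price vector in $\prices$, there exists a bundle $S\in\goodsetsiz$ with $\sum_{j\in S}p_j\le b_i$. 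That is precisely the condition that $\anypricei\ge z$ under the price-based definition.

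Concretely, I would argue both inclusions. For the easy direction, if the weight system is feasible for $z$, then for any price vector $(p_j)\in\prices$ we have $\sum_{T\in\goodsetsiz}\lambda_T(\sum_{j\in T}p_j)=\sum_j p_j\sum_{T\ni j}\lambda_T\le \sum_j p_j b_i=b_i$, so some set $T$ in the support has $\sum_{j\in T}p_j\le b_i$; since $v_i(T)\ge z$, the adversary cannot push the agent's attainable value below $z$, giving $\anypricei\ge z$ in the price definition. This averaging argument is clean and does not even need duality. The reverse direction — showing that if no feasible weight system exists for $z$ then the adversary has a price vector defeating the agent — is where I would invoke strong LP duality (or the separating-hyperplane / Farkas lemma form): infeasibility of the packing LP yields a price vector $(p_j)\ge 0$, which after normalising to sum to $1$ lies in $\prices$ and forces every set of value at least $z$ to cost strictly more than $b_i$, witnessing $\anypricei<z$ under the price-based definition. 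Taking the supremum over feasible $z$ on both sides then gives equality of the two shares.

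The main obstacle I anticipate is bookkeeping around the strictness of inequalities and the exact normalisation when passing the dual multipliers to a genuine element of $\prices$: one must ensure the dual price vector is non-negative, not identically zero, and can be scaled to total price exactly $1$ without destroying the defeating inequality, and one must confirm that the supremum over $z$ is actually attained (so the two optimal values coincide rather than merely having matching suprema). A secondary subtlety is that $\goodsetsiz$ changes with $z$, so the LP's constraint structure depends on $z$; I would handle this by fixing $z$ throughout the duality argument and only afterwards optimising over $z$, using monotonicity of feasibility in $z$ and a compactness argument (finitely many subsets $T$, and $\prices$ compact) to guarantee attainment and to interchange the $\min$ and $\max$ in the price-based definition.
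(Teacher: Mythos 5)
Your proposal is correct and follows essentially the same route as the paper's own proof: the paper likewise handles the direction from weight-feasibility to the price-based guarantee by the same averaging argument (expected price of a random bundle from the distribution is at most $b_i$), and handles the converse by taking the LP dual of the packing program over $\{\lambda_T\}_{T\in\goodsetsiz}$, rescaling the dual covering variables into prices, and perturbing so the total price is at most $1$ while every bundle of value at least $z$ becomes unaffordable. The technical subtleties you flag (strictness, normalisation into $\prices$, and fixing $z$ before dualising since $\goodsetsiz$ depends on $z$) are exactly the points the paper addresses by choosing $z$ strictly between the two candidate values and adding a small $\epsilon$ to each price.
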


We further observe that there is always a solution with small support to the feasibility problem of Definition~\ref{def:anyprice-bundles} -- at most $m$ sets in the support. 
\begin{restatable}{observation}{propSmallSupport}		
	\label{prop:small-support}
	For any valuation $v_i$ and entitlement $b_i$ there is a solution to the optimization problem presented in 
	Definition~\ref{def:anyprice-bundles} in which there are at most $m$ non-zero weights.
\end{restatable}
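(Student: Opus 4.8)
The plan is to fix $z^\ast = \anypricei$, the optimal value of the program in Definition~\ref{def:anyprice-bundles}, and to produce a \emph{basic} (vertex) solution of the feasibility system at $z = z^\ast$, relying on the fact that a basic solution of a system of linear (in)equalities cannot have more positive coordinates than it has binding non-trivial constraints. The variables are the weights $\{\lambda_T\}_{T \in \mathcal{G}_i(z^\ast)}$, where $\mathcal{G}_i(z^\ast)$ is the (finite) family of sets $T$ with $v_i(T)\ge z^\ast$, and the constraints are the single normalization equality $\sum_T \lambda_T = 1$, the $m$ packing inequalities $\sum_{T\ni j}\lambda_T \le b_i$ (one per item $j\in\items$), and non-negativity.

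The subtle point is that counting binding constraints naively --- one normalization equality plus up to $m$ packing inequalities --- only yields a support of size at most $m+1$, which is one too many. To shave off the extra unit I would drop the normalization constraint and instead \emph{maximize} the total mass: consider the linear program $\max \sum_{T\in\mathcal{G}_i(z^\ast)} \lambda_T$ subject to $\lambda\ge 0$ and the $m$ packing inequalities $\sum_{T\ni j}\lambda_T\le b_i$ only. The original feasibility system at $z^\ast$ has a solution if and only if the optimum of this program is at least $1$ (given a packing-feasible $\lambda$ of total mass $\ge 1$ one rescales it down to total mass exactly $1$, which only decreases the left-hand sides of the packing inequalities), so by the definition of $z^\ast$ this optimum is at least $1$. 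This program has only $m$ inequality constraints, hence a basic optimal solution has at most $m$ positive coordinates; rescaling such a solution down to total mass $1$ preserves its support size and all the packing inequalities, and keeps every set in the support inside $\mathcal{G}_i(z^\ast)$. The result is a witness for Definition~\ref{def:anyprice-bundles} at value $z^\ast$ supported on at most $m$ sets.

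The main obstacle I anticipate is exactly this off-by-one accounting, and a small loose end is guaranteeing that the auxiliary program is bounded so that its optimum is attained at a vertex. When $z^\ast>0$ every $T\in\mathcal{G}_i(z^\ast)$ is non-empty and therefore contains some item, so $\sum_T |T|\,\lambda_T = \sum_{j\in\items}\sum_{T\ni j}\lambda_T \le m\,b_i$ bounds the total mass; the remaining case $z^\ast=0$ is trivial, as a single set already witnesses value $0$. I would also double-check the standard sparsity fact I am invoking --- that $\{\lambda\ge 0 : \sum_{T\ni j}\lambda_T \le b_i\ \forall j\}$ admits an optimal vertex with at most $m$ nonzeros --- by introducing one slack variable per item and noting the resulting equality system has rank $m$, so that any basic feasible solution has at most $m$ nonzero entries among weights and slacks combined.
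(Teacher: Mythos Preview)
Your proposal is correct and follows essentially the same approach as the paper: both drop the normalization constraint, maximize the total mass $\sum_T \lambda_T$ subject only to the $m$ packing inequalities and non-negativity, observe that the optimum is at least $1$, and invoke the standard fact that a basic solution has at most $m$ nonzeros. Your version is in fact more careful than the paper's terse argument---you explicitly handle boundedness of the auxiliary LP (distinguishing $z^\ast>0$ from $z^\ast=0$) and spell out the rescaling step, both of which the paper leaves implicit.
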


We observe  that in any Competitive Equilibrium\footnote{For a formal definition of CE see Appendix \ref{sec:missing-defs}.} in which the entitlement of an agent is her budget,
every agent gets her AnyPrice share. This follows immediately from Definition \ref{def:anyprice-prices} as the APS of an agent is the value the agent  can secure for \emph{any} possible prices, and in particular, at least that value  can be secured for the CE prices.  
\begin{observation}
	\label{obs:anyPrice-CE}
	Assume that the pair $(A,p)$ is an allocation-prices pair that forms a Competitive Equilibrium  when agents have valuations $\vals$ and budgets (=entitlements) $\budgets$.
	Then the allocation $A$ is an APS allocation, that is, for every agent $i$ it holds that $v_i(A_i)\geq \anypricei$.
\end{observation}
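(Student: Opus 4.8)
The plan is to derive the statement directly from the price-based definition of the APS (Definition~\ref{def:anyprice-prices}), exactly as the intuition in the excerpt suggests: the APS is a \emph{worst-case} guarantee over all price vectors in $\prices$, so it suffices to exhibit \emph{one} feasible price vector at which agent $i$'s favorite affordable bundle has value $v_i(A_i)$.

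First I would unpack the two objects at play. By Definition~\ref{def:anyprice-prices},
$$\anypricei = \min_{q \in \prices}\ \max_{S \subseteq \items}\left\{ v_i(S) \ \Big|\ \sum_{j \in S} q_j \le b_i \right\}.$$
On the equilibrium side, in the Competitive Equilibrium $(A,p)$ each agent $i$ receives a bundle $A_i$ that maximizes her value among bundles affordable at prices $p$ under budget $b_i$; in particular $\sum_{j\in A_i} p_j \le b_i$ and
$$v_i(A_i) = \max_{S \subseteq \items}\left\{ v_i(S) \ \Big|\ \sum_{j \in S} p_j \le b_i \right\}.$$
The one point requiring care is that $\prices$ consists of price vectors summing to exactly $1$, whereas the equilibrium prices $p$ need not. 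Since $A$ is a partition of $\items$ and each agent spends at most her budget, $\sum_{j\in \items} p_j = \sum_{i\in \agents}\sum_{j\in A_i} p_j \le \sum_{i\in \agents} b_i = 1$, so the prices sum to some $c \le 1$.

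I would then split on $c$. If $c = 1$ then $p \in \prices$ and the conclusion is immediate: the maximum defining $v_i(A_i)$ is one of the terms over which the APS takes its minimum, hence $v_i(A_i) \ge \anypricei$. If $c < 1$ (including the degenerate all-zero case, where every bundle is affordable and $v_i(A_i) = v_i(\items)$ trivially dominates the APS), I would scale to $p' = p/c \in \prices$. Scaling up multiplies every coordinate by $1/c \ge 1$, so every bundle affordable at $p'$ is also affordable at $p$; consequently the inner maximum at $p'$ is at most the inner maximum at $p$, i.e.\ at most $v_i(A_i)$. Since $p' \in \prices$, the APS, being the minimum over $\prices$, is at most this quantity, yielding $\anypricei \le v_i(A_i)$.

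The argument is essentially a one-liner once the definitions are unpacked, and I expect the normalization to be the sole obstacle — though a mild one. Its only nontrivial ingredient is the monotonicity-in-prices observation (raising every price can never enlarge the set of affordable bundles) that lets one pass from the possibly cheap equilibrium prices to a vector in $\prices$ without increasing the value of the best affordable bundle, combined with the market-clearing identity $\sum_{j} p_j \le 1$.
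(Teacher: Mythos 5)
Your proof is correct and follows essentially the same route as the paper, which justifies this observation in one line: by Definition~\ref{def:anyprice-prices} the APS is a guarantee against \emph{any} price vector, and in particular against the CE prices, at which agent $i$'s bundle $A_i$ is by definition a value-maximizing affordable bundle. The only difference is that you explicitly patch the normalization issue (the CE prices sum to some $c \le 1$ rather than exactly $1$, so they need not lie in $\prices$) by scaling to $p/c$ and using monotonicity of affordability in prices — a subtlety the paper's informal argument glosses over, and which your treatment handles correctly, including the degenerate all-zero case.
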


We next show the APS of an agent is at least as large as the pessimistic share. 
\begin{restatable}{proposition}{propAPSPES}		
\label{prop:APS-pes}
	For any valuation $v_i$ and entitlement $b_i$ it holds that 
	$$\anypricebi\geq \pessimisticbi.$$
\end{restatable}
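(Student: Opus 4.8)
The plan is to prove the inequality directly, working with the weight-based formulation of the APS (Definition~\ref{def:anyprice-bundles}), by exhibiting a feasible distribution over bundles whose guaranteed value matches the pessimistic share. Since $\pessimisticbi$ is a maximum over all pairs $\ell,d\in\mathbb{N}$ with $\ell/d\le b_i$, it suffices to fix one such pair, let $z=\pqshare(v_i,\items)$ be the corresponding $\ell$-out-of-$d$ share, and show that $\anypricebi\ge z$; taking the maximum over all admissible $\ell,d$ then yields $\anypricebi\ge\pessimisticbi$.

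First I would invoke the witness of the $\ell$-out-of-$d$ share: a partition $(P_1,\dots,P_d)$ of $\items$ into $d$ disjoint parts such that every union of $\ell$ of the parts has value at least $z$, i.e. $v_i(\cup_{j\in J}P_j)\ge z$ for every $J\subseteq[d]$ with $|J|=\ell$. For each such $\ell$-subset $J$ define the bundle $T_J=\cup_{j\in J}P_j$, and place weight $1/\binom{d}{\ell}$ on it; equivalently, draw $J$ uniformly at random among the $\binom{d}{\ell}$ subsets of size $\ell$ and take the bundle $T_J$. (If distinct subsets $J$ produce the same bundle, because some part is empty, we simply aggregate their weights; this does not affect the argument.) By construction $\sum_T\lambda_T=1$, $\lambda_T\ge 0$, and every bundle in the support has value at least $z$, so the first three constraints of Definition~\ref{def:anyprice-bundles} hold.

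The crux is the per-item packing constraint $\sum_{T:\,j\in T}\lambda_T\le b_i$. Since $(P_1,\dots,P_d)$ is a partition, each item $x\in\items$ lies in exactly one part, say $P_{j_0}$, and $x\in T_J$ if and only if $j_0\in J$. Under the uniform distribution over $\ell$-subsets of $[d]$, the probability that a fixed index $j_0$ is chosen is $\binom{d-1}{\ell-1}/\binom{d}{\ell}=\ell/d$, so the total weight of bundles containing $x$ equals $\ell/d\le b_i$. Thus all four constraints are satisfied with value $z$, certifying $\anypricebi\ge z$. I expect this marginal computation --- a one-line double-counting identity --- to be the only quantitative step, and there is no real obstacle: the sole thing to be careful about is that bundles may coincide when parts are empty, which is handled by aggregating weights while noting that the inclusion probability of each item is unaffected.
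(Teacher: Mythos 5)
Your proof is correct, and at the combinatorial core it is the same argument as the paper's: both take the optimal partition $(P_1,\dots,P_d)$ witnessing the $\ell$-out-of-$d$ share and exploit the fact that, under the uniform distribution over the $\binom{d}{\ell}$ unions of $\ell$ parts, each item is included with probability exactly $\binom{d-1}{\ell-1}/\binom{d}{\ell}=\ell/d\le b_i$. The difference is the definitional interface. You feed this distribution into the packing (dual) formulation, Definition~\ref{def:anyprice-bundles}, so the proof becomes a one-line feasibility check; but this implicitly relies on the equivalence of the two definitions (Proposition~\ref{prop:anyprice-eqe}, proved via LP duality), since the proposition is a statement about $\anypricebi$ as defined through prices. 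The paper instead argues directly from the price-based Definition~\ref{def:anyprice-prices}: for an arbitrary price vector of total $1$, the expected price of a uniformly random union of $\ell$ parts is $\ell/d\le b_i$, hence some union is affordable, and its value is at least the $\ell$-out-of-$d$ share. That version is self-contained (no duality needed), while yours follows exactly the heuristic the paper itself advocates --- use the dual definition for lower bounds on the APS --- and your only extra bookkeeping, aggregating weights when distinct index sets yield the same bundle because some parts are empty, is handled correctly since aggregation changes neither the total weight nor any item's inclusion weight. Both arguments work verbatim for arbitrary (not necessarily additive) valuations.
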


Computationally, the problem of computing the APS is NP-hard, much like the problem of computing the MMS.  
\begin{proposition}
	\label{prop:computation}
	The problem of computing the AnyPrice share of an agent 
	is NP-hard.  This is true even for $b_i=\frac{1}{2}$ (the case of two equally entitled agents) and even when the valuations are additive. 
\end{proposition}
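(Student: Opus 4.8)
The plan is to prove NP-hardness by reduction from \textsc{Partition}, mirroring the classical argument that computing the MMS is NP-hard. Recall that in \textsc{Partition} we are given positive integers $a_1,\ldots,a_m$ with total sum $2H$, and we must decide whether some subset sums to exactly $H$. I would construct an instance for a single agent $i$ with entitlement $b_i=\tfrac12$, valuation $v_i(j)=a_j$ for each item $j\in\items$, and $v_i(\items)=2H$. The claim to establish is that the \textsc{Partition} instance is a ``yes'' instance if and only if $\anypricebi = H$, and otherwise $\anypricebi < H$; since $\anypricebi$ is always at most the proportional share $b_i\cdot v_i(\items)=H$, computing the exact APS therefore decides \textsc{Partition}.

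\medskip

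The key steps are as follows. First I would observe the easy direction: if a subset $S$ sums to exactly $H$, then its complement also sums to $H$, so the partition $(S,\items\setminus S)$ into $d=2$ bundles each of value $H$ witnesses, via the dual (weight-based) Definition~\ref{def:anyprice-bundles}, that $\anypricebi\ge H$ (place weight $\tfrac12$ on $S$ and $\tfrac12$ on its complement; every item then appears with total weight $\tfrac12=b_i$, and both sets have value $\ge H$). Combined with the upper bound $\anypricebi\le H$, this gives $\anypricebi=H$. For the converse, I would show that if no subset sums to exactly $H$, then $\anypricebi<H$. This is the direction that does the real work, and I expect it to be the main obstacle, because the APS is defined through a fractional/distributional relaxation rather than a single partition, so one cannot simply argue about one bundle.

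\medskip

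To handle the hard direction I would use the price-based Definition~\ref{def:anyprice-prices}: it suffices to exhibit a single price vector $(p_1,\ldots,p_m)\in\prices$ under which no affordable bundle (total price $\le b_i=\tfrac12$) has value $H$ or more. The natural candidate is to price each item proportionally to its value, $p_j=a_j/(2H)$, so that price equals value scaled by $1/(2H)$; then a bundle is affordable exactly when its value is at most $H$, and has value $\ge H$ exactly when its value equals $H$. If no subset has value exactly $H$, then every affordable bundle has value strictly less than $H$, forcing $\anypricebi<H$. Thus the two directions together pin down $\anypricebi=H$ precisely in the ``yes'' case. I would then note that this entire construction uses only additive valuations and the single entitlement value $b_i=\tfrac12$, which corresponds to the symmetric two-agent setting, so the hardness persists even under these restrictions, as claimed.

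\medskip

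One technical point I would address carefully: to make the ``no'' case yield a strict inequality usable for a decision procedure, I would remark that because all item values are integers, whenever no subset sums to $H$ the best affordable bundle under the proportional pricing has value at most $H-1$, giving a concrete gap; hence an exact computation of $\anypricebi$ (or even an approximation finer than the integrality gap) distinguishes the two cases. This confirms that exact computation of the APS is NP-hard even for additive valuations and $b_i=\tfrac12$.
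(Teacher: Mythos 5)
Your proof is correct, and it reduces from the same source as the paper (\textsc{Partition} with $b_i=\tfrac12$), but it is organized along a genuinely different route. The paper's own proof is a two-line corollary: it invokes Observation~\ref{obs:APS-two-equal}, which states that for $b_i=\tfrac12$ and additive valuations the APS coincides with $\MMSTwo$, and then cites the standard \textsc{Partition}-hardness of computing the two-agent MMS. You instead work directly with the two definitions of the APS and never use that identity: in the ``yes'' case you certify $\anypricebi \ge H$ via Definition~\ref{def:anyprice-bundles} (weight $\tfrac12$ on a perfect half and $\tfrac12$ on its complement), you cap $\anypricebi \le H$ by the proportional-share bound of Proposition~\ref{prop:shares-comp}, and in the ``no'' case you certify $\anypricebi \le H-1$ via Definition~\ref{def:anyprice-prices} with prices proportional to values. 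In mathematical content this re-derives, for the special instances produced by the reduction, exactly the two inequalities behind Observation~\ref{obs:APS-two-equal}: your weight construction is the $APS \ge MMS$ direction, and your proportional pricing is precisely how one shows $\anypricevali{\frac{1}{2}} \le \MMSTwo$. What your version buys is self-containedness (no appeal to the MMS literature or to the observation) plus an explicit integrality gap of $1$ between the ``yes'' and ``no'' cases, which makes it transparent that even the decision version, and any approximation finer than that gap, is hard. What the paper's version buys is modularity: the identity $\anypricevali{\frac{1}{2}} = \MMSTwo$ is of independent interest and is reused elsewhere (e.g., in Section~\ref{sec:greedy-efx}), after which the hardness claim is immediate.
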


Proposition~\ref{prop:computation} follows from the fact that for $b_i=\frac{1}{2}$ the APS and the MMS are identical (Observation \ref{obs:APS-two-equal}), and as MMS is hard to compute in the case of two additive agents with equal entitlements (need to solve \emph{PARTITION}), so is the APS.

The APS definition is general and applies to all valuations, not only additive. 
Yet, ex-post fairness is problematic for some super-additive valuations {(even for equal entitlements)}. 
To illustrate the  problem,
consider the simple setting with agents that each has zero value for her set, unless she gets all items. Splitting the set of items makes little sense, and any reasonable solution will allocate all items to one agent (possibly at random, to get some ex-ante fairness). Thus we view ex-post fairness notions (APS, as well as all other ex-post notions like MMS, CE, etc.) as unsuitable for some super-additive valuations. 
In contrast, for sub-additive valuations,
splitting the set of items does make sense, and so do ex-post fairness notions, and the APS in particular. 
Before moving to our main interest in this paper, the case of  additive  valuations, we briefly illustrate the AnyPrice share on unit-demand valuations (which, like additive valuations, is a simple family of sub-additive valuations), illustrating that the APS is indeed a reasonable notion of a share for some sub-additive valuations. 

\subsubsection{Unit-Demand valuations}\label{sec:unit-demand}
A valuation is \emph{unit demand} if it assigns a value for each item, and the value of any set is the maximal value of any item in the set. 
We first note that unit-demand valuations illustrate that the proportional share is not attractive as a share definition  for non-additive valuations, even for equal entitlements. 
Consider for example the simple case of an agent that has a unit-demand valuation over $n$ identical items (same value of $1$ for each, and no additional value for more than one item).
While $n$ such agents can get a value of $1$ each, the proportional share of such an agent is only $1/n$,\footnote{This example shows that for non-additive valuations the proportional share might be smaller than the APS. Yet, Proposition \ref{prop:shares-comp} shows that is never the case for additive valuations. } so the proportional share is not even as large as the worst of the $n$ items.
In contrast, the APS share in this example is $1$. 

More generally, we observe that for unit-demand valuations with any entitlements, the value of the APS of an agent with entitlement $b_i$ is the value of her $\lceil 1/b_i \rceil$ ranked item.\footnote{It is easy to see that for unit-demand valuations the pessimistic share is the same as the APS.}
There is always an allocation that gives each player her APS: simply let the agents pick items sequentially in the order of their entitlements, breaking ties arbitrarily.



\subsection{Basic Properties of APS with Additive Valuations}
We next focus on additive valuations. 
We start by relating the different notions of shares to each other. 
\begin{restatable}{proposition}{propShareComp}		
\label{prop:shares-comp}
	For any additive valuation $v_i$ over a set of items $\items$, and any entitlement $b_i$ it holds that 
	$$\proportionalbi \geq \anypricebi \geq \pessimisticbi \geq \frac{1}{2}\cdot \anypricebi$$
	

	Moreover, for each of the above inequalities there is a setting in which it holds as equality. 
	Finally, even for the case of equal  entitlements ($b_i=\frac{1}{k}$ for an integer $k$), in which the pessimistic share is equal to the MMS, for each inequality above there is a setting in which it holds as a strict inequality.
\end{restatable}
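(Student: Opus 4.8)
The plan is to establish the chain $\proportionalbi \ge \anypricebi \ge \pessimisticbi \ge \frac{1}{2}\anypricebi$ one link at a time, and then to exhibit witnessing instances for tightness and strictness. The middle inequality $\anypricebi \ge \pessimisticbi$ is already available as Proposition~\ref{prop:APS-pes}, so I would simply invoke it. The two remaining links are where the work lies: the outer (proportional) bound is a short direct argument, while the factor-$\frac{1}{2}$ bound is the main obstacle.

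For $\proportionalbi \ge \anypricebi$ I would use the price-based Definition~\ref{def:anyprice-prices}. Assuming $v_i(\items)>0$ (otherwise all shares are $0$ and the claim is trivial), consider the single adversarial price vector $p_j = v_i(j)/v_i(\items)$, which lies in $\prices$. For additive $v_i$ and any affordable set $S$ (i.e. $\sum_{j\in S}p_j \le b_i$) one has $v_i(S) = v_i(\items)\sum_{j\in S}p_j \le b_i\, v_i(\items) = \proportionalbi$. Since the APS is a minimum over price vectors, this one choice already caps the guaranteed value, giving $\anypricebi \le \proportionalbi$.

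The heart of the proof is $\pessimisticbi \ge \frac{1}{2}\anypricebi$, which I would read as a factor-$2$ rounding of the fractional (configuration-style) optimum captured by the dual Definition~\ref{def:anyprice-bundles} into an integral $\ell$-out-of-$d$ witness. Concretely, write $z=\anypricebi$ and run a bag-filling process with threshold $z/2$: scan the items and greedily close a bundle as soon as its value reaches $z/2$, so every completed bundle has value at least $z/2$ while the leftover has value below $z/2$. If I can show this produces at least $d:=\lceil 1/b_i\rceil$ completed bundles, then, since merging surplus bundles and dumping the leftover only raises values, I obtain a partition into $d$ bundles each of value $\ge z/2$; taking $\ell=1$ gives $\ell/d = 1/d \le b_i$, so the $1$-out-of-$d$ share, and hence $\pessimisticbi$, is at least $z/2$. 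The crux, and the step I expect to be hardest, is the lower bound $d$ on the number of completed bundles. My plan is to prove it by contraposition through the price-based definition: if only $g$ bundles of value $\ge z/2$ (almost) exhaust the items, price the items proportionally within each bundle so that each bundle, with the leftover treated as one extra part, carries an equal share of the unit price; because every bundle then has value below $z$ per its full price, a budget of $b_i$ buys strictly less than $z$, contradicting $\anypricebi=z$ once $g$ is too small. Making this accounting tight, in particular handling a single item of value exceeding $z$ and getting the rounding of $\lceil 1/b_i\rceil$ exactly right, is the delicate part.

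Finally, for tightness and strictness I would exhibit explicit instances. Equality in the outer bound is realized by $n$ identical items with $b_i=1/n$, where uniform pricing makes the APS meet the proportional share; the middle inequality is an equality whenever $b_i=1/2$, where $\anypricebi = \pessimisticbi$ by Observation~\ref{obs:APS-two-equal}; and Example~\ref{example:APS-twice-Pessimistic} already realizes $\pessimisticbi = \frac{1}{2}\anypricebi$. For strictness under \emph{equal} entitlements, a single item shared by two equally entitled agents ($b_i=1/2$) gives $\proportionalbi = 1/2 > 0 = \anypricebi$, and two identical items with $b_i=1/2$ give $\pessimisticbi = \anypricebi = 1 > 1/2 = \frac{1}{2}\anypricebi$. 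The one genuinely subtle case is an equal-entitlement instance ($b_i=1/k$) with $\anypricebi > \pessimisticbi$: here the gap is exactly a configuration-LP integrality gap for max-min partitioning of a single additive valuation, so I would construct a tailored instance with enough items and suitably chosen distinct values that a fractional rotation of high-value sets (permitted by the dual Definition~\ref{def:anyprice-bundles}) strictly beats every integral partition. Verifying such an instance is the fiddliest bookkeeping in the statement.
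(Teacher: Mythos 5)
Your treatment of the easy parts coincides with the paper's proof and is correct: the proportional-pricing argument for $\proportionalbi \geq \anypricebi$, invoking Proposition~\ref{prop:APS-pes} for the middle inequality, and your equality examples and two of the strictness examples (single item with $b_i=\frac{1}{2}$; Example~\ref{example:APS-twice-Pessimistic}; Observation~\ref{obs:APS-two-equal}) are all fine. The genuine gap is in the step you yourself flag as the crux, and it does not close as described. Write $z=\anypricebi$ and suppose your greedy process closes $g$ bundles, leaving a leftover $L$ with $v_i(L)<z/2$. Even granting that every closed bundle has value at most $z$ (this already requires scanning items in decreasing order of value and capping item values at $z$; otherwise a bundle can overshoot far above $z$), your equal-share pricing of the $g+1$ parts bounds the value of any affordable set by $b_i(g+1)z$, so a contradiction with $\anypricebi=z$ arises only when $b_i(g+1)<1$. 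The contraposition therefore yields only $g\ge \frac{1}{b_i}-1$, i.e.\ $g\ge\lceil 1/b_i\rceil-1$ --- one bundle short of the $d=\lceil 1/b_i\rceil$ parts you need, since the leftover cannot serve as the $d$-th part, and merging it away leaves $g$ parts, which is admissible only if $g\ge 1/b_i$. Pricing the leftover separately sharpens this to $g>1/b_i-v_i(L)/z>1/b_i-\frac{1}{2}$, which closes the gap precisely when $1/b_i$ is an integer (so, for equal entitlements), but not in general: for $b_i=5/12$ you need $d=3$ parts yet are guaranteed only $g\ge 2$, and no pricing of the form you describe can do better, because with $v_i(G_1)=v_i(G_2)=z-\epsilon$ and $v_i(L)=z/2-\epsilon$ the part-level requirements $p(P)>b_i\, v_i(P)/z$ already sum to more than $1$. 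This is exactly why the paper's Lemma~\ref{lem:pes-APS-2} abandons both plain bag-filling and the price-based definition at this point: it works with the dual certificate of Definition~\ref{def:anyprice-bundles}, first discarding and rescaling certificate bundles that contain items of value at least $z/2$, and then, in the residual instance with $k$ bins, either seeds the bins with the top $k$ items (when the $(2k-2)$-nd item is worth less than $z/4$), or pairs the top $2k-2$ items into $k-1$ bins and uses an averaging argument over the $\lambda$-weights to find a certificate bundle containing at most one of those items, whose remainder fills the last bin.

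A second, smaller omission: for strictness of the middle inequality under equal entitlements you only assert that a tailored integrality-gap instance can be constructed. This is the content of the paper's Lemma~\ref{lem:APSlargerPESS}, which exhibits an explicit instance (fifteen items with values $5,5,5,7,7,7,11,17,23,23,23,31,31,31,65$ and $b_i=\frac{1}{3}$, for which the APS equals $97$ while the MMS is strictly smaller) and whose verification takes real work. So as written, your proposal fully establishes the outer and middle inequalities and most of the tightness claims, but the two components that carry essentially all of the proposition's difficulty --- the factor-$\frac{1}{2}$ lower bound and the equal-entitlement strictness of the middle inequality --- remain open.
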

In particular, this claim shows that the APS is at least as large as the pessimistic share of \citep*{BabaioffNT2020}, and thus {obtaining share approximating allocations for the APS} is harder. Note that for equal entitlements ($b_i=\frac{1}{k}$ for an integer $k$) the MMS and the pessimistic shares are the same, so the above says that even for equal entitlements, the APS and the MMS are different. Yet, they are the same in the special case of two agents with equal entitlements ($b_i=\frac{1}{2}$).  	

\begin{restatable}{observation}{obsApsTwo}
\label{obs:APS-two-equal}
For any additive valuation $v_i$ over a set of items $\items$, it holds that \\ $\anyprice{\frac{1}{2}}{v_i}{\items}= \MMSfull{2}$. 
\end{restatable}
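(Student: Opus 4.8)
The plan is to prove the two inequalities $\anyprice{\frac{1}{2}}{v_i}{\items} \ge \MMSfull{2}$ and $\anyprice{\frac{1}{2}}{v_i}{\items} \le \MMSfull{2}$ separately, exploiting the fact that the two equivalent formulations of the APS are respectively tailored to lower and upper bounds, exactly as the text remarks: the weight-based Definition \ref{def:anyprice-bundles} is convenient for showing the APS is at least some value, and the price-based Definition \ref{def:anyprice-prices} for showing it is at most some value. Throughout I may assume $v_i(\items) > 0$, since otherwise both quantities are trivially $0$ and any price vector (e.g.\ the uniform one) witnesses this.

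For the lower bound $\anyprice{\frac{1}{2}}{v_i}{\items} \ge \MMSfull{2}$ I would use Definition \ref{def:anyprice-bundles}. Let $(A_1,A_2)$ be a partition of $\items$ into two bundles attaining the maximin value, so that $v_i(A_1) \ge \MMSfull{2}$ and $v_i(A_2) \ge \MMSfull{2}$. I then exhibit the distribution placing weight $\lambda_{A_1} = \lambda_{A_2} = \frac{1}{2}$. This is feasible for $z = \MMSfull{2}$: the weights sum to $1$; every set carrying positive weight has value at least $z$; and since $A_1,A_2$ partition $\items$, each item lies in exactly one of them, so its total weight is exactly $\frac{1}{2} = b_i$, meeting the packing constraint. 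Hence the optimum of the program is at least $\MMSfull{2}$.

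For the upper bound $\anyprice{\frac{1}{2}}{v_i}{\items} \le \MMSfull{2}$ I would use Definition \ref{def:anyprice-prices} and display a single adversarial price vector that defeats every expensive bundle. The natural choice is the proportional pricing $p_j = v_i(j)/v_i(\items)$, which lies in $\prices$ since the prices are nonnegative and sum to $1$. For any affordable set $S$ (one with $\sum_{j \in S} p_j \le \frac{1}{2}$) this pricing gives $v_i(S) \le \frac{1}{2} v_i(\items)$, and therefore $v_i(\items \setminus S) = v_i(\items) - v_i(S) \ge v_i(S)$ by additivity. Consequently $\min\{v_i(S), v_i(\items \setminus S)\} = v_i(S)$. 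Since $(S, \items \setminus S)$ is one particular partition of $\items$ into two bundles, its smaller part can be no larger than the maximin value, giving $v_i(S) \le \MMSfull{2}$. Thus no affordable bundle exceeds $\MMSfull{2}$ under this pricing, so the inner maximization is at most $\MMSfull{2}$, and hence so is the APS.

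The two bounds together yield the claimed equality. The only genuinely substantive step is the upper bound, and specifically the observation that proportional pricing forces every affordable set to be the cheaper, hence smaller-valued, side of the partition it induces; this converts ``affordable'' into ``is the min side of a two-partition,'' which is precisely the quantity that $\MMSfull{2}$ maximizes. The lower bound is essentially immediate once one recognizes that a maximin partition is exactly a $\frac{1}{2}$-$\frac{1}{2}$ distribution meeting the packing constraint with equality. I expect no serious obstacle beyond dispatching the degenerate case $v_i(\items) = 0$ at the outset.
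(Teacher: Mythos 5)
Your proof is correct, but it takes a more self-contained route than the paper's. The paper dispatches both inequalities as corollaries of Proposition~\ref{prop:shares-comp}: for $b_i=\frac{1}{2}$ the pessimistic share coincides with $\MMSfull{2}$, so $\anypricevali{\frac{1}{2}} \geq \pessimisticb{\frac{1}{2}} = \MMSTwo$; for the upper bound, it notes that $\MMSTwo$ is the largest value of any bundle worth at most $\frac{v_i(\items)}{2}$, so $\anypricevali{\frac{1}{2}} > \MMSTwo$ would force the APS above the proportional share, contradicting the same proposition. You instead give explicit certificates from first principles: for the lower bound, the two bundles of a maximin partition with weight $\frac{1}{2}$ each, which is feasible for Definition~\ref{def:anyprice-bundles} because every item lies in exactly one bundle; for the upper bound, proportional pricing under Definition~\ref{def:anyprice-prices}, observing that any affordable set is the min side of the two-partition it induces. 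The approaches are cousins --- unrolling Proposition~\ref{prop:shares-comp}, the paper's upper bound also rests on proportional pricing, and its lower bound rests on a price-averaging argument that is the LP dual of your weight certificate --- but yours buys two things: it is self-contained and uses each of the two equivalent APS definitions exactly in the role the paper advertises for them, and it sidesteps a step the paper's contradiction argument leaves implicit, namely that the APS is always attained as the value of some bundle (without which ``$APS > \MMSTwo$ implies $APS$ exceeds the proportional share'' does not follow). The paper's version, in exchange, is two lines long given the machinery already established.
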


With two agents, the cut-and-choose procedure ensures that each of the two agents gets her MMS (when $b_1=b_2=1/2$). We next observe that for the case of two agents, even with arbitrary entitlements, there is always an allocation that gives each agent her AnyPrice share. 
\begin{restatable}{observation}{obsTwo}

\label{obs:APS-two-agents}
    For any setting with two agents with additive valuations $(v_1,v_2)$ and arbitrary entitlements $(b_1,b_2)$ there is an allocation $A=(A_1,A_2)$ that is an APS allocation. 
    That is, for every agent $i\in \{1,2\}$ it holds that $v_i(A_i)\geq \anypricei$. 
\end{restatable}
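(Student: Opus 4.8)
The plan is to combine the two equivalent formulations of the APS asymmetrically: I would invoke the weight-based definition (Definition~\ref{def:anyprice-bundles}) for one agent and the proportional-share bound for the other. Write $\alpha_1 = \anyprice{b_1}{v_1}{\items}$ and $\alpha_2 = \anyprice{b_2}{v_2}{\items}$, and recall $b_1 + b_2 = 1$. By Definition~\ref{def:anyprice-bundles} applied to agent $1$, there is a distribution $\{\lambda_T\}$ over bundles with $\sum_T \lambda_T = 1$, supported only on bundles $T$ with $v_1(T) \ge \alpha_1$, and satisfying the packing constraint $\sum_{T : x \in T} \lambda_T \le b_1$ for every item $x \in \items$. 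By Observation~\ref{prop:small-support} the support is finite; call it $B_1,\dots,B_k$ with weights $\lambda_1,\dots,\lambda_k$. I would then construct the candidate allocations as the pairs $\bigl(B_j,\ \items \setminus B_j\bigr)$ and argue that at least one of them is an APS allocation.

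The heart of the argument is to bound, from agent $2$'s perspective, the \emph{average} value of the complements $\items \setminus B_j$ under agent $1$'s distribution. Using that $v_2$ is additive and the packing constraint on agent $1$'s weights,
\[
\sum_{j} \lambda_j\, v_2(B_j) \;=\; \sum_{x \in \items} v_2(x) \sum_{j:\, x \in B_j} \lambda_j \;\le\; b_1 \sum_{x \in \items} v_2(x) \;=\; b_1\, v_2(\items),
\]
so that, invoking Proposition~\ref{prop:shares-comp} ($\proportional{b_2}{v_2}{\items} \ge \anyprice{b_2}{v_2}{\items}$) at the last step,
\[
\sum_j \lambda_j\, v_2(\items \setminus B_j) \;=\; v_2(\items) - \sum_j \lambda_j\, v_2(B_j) \;\ge\; (1-b_1)\, v_2(\items) \;=\; \proportional{b_2}{v_2}{\items} \;\ge\; \alpha_2 .
\]
Since the weighted average of $v_2(\items \setminus B_j)$ over the support is at least $\alpha_2$, an averaging argument yields an index $j^\ast$ with $v_2(\items \setminus B_{j^\ast}) \ge \alpha_2$. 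Setting $A_1 = B_{j^\ast}$ and $A_2 = \items \setminus B_{j^\ast}$ gives a genuine partition with $v_1(A_1) = v_1(B_{j^\ast}) \ge \alpha_1$ (as $B_{j^\ast}$ lies in the support) and $v_2(A_2) \ge \alpha_2$, which is exactly the desired APS allocation.

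I do not expect a serious obstacle here; the only real ideas are (i) to use the weight/packing form of the APS for agent $1$ rather than the price form, and (ii) to upper-bound agent $2$'s expected loss by the packing constraint so that her complement bundle averages at least her \emph{proportional} share, which in turn dominates her APS for additive valuations. The one point worth checking carefully is that the support bundles genuinely satisfy $v_1(B_j) \ge \alpha_1$ (guaranteed by the third constraint of Definition~\ref{def:anyprice-bundles}) so that the chosen $A_1$ meets agent $1$'s target. I would also remark that the construction is effective once a witnessing distribution for agent $1$'s APS is available, even though computing the APS itself is NP-hard (Proposition~\ref{prop:computation}).
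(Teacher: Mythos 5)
Your proof is correct and follows essentially the same route as the paper's: both apply the weight-based Definition~\ref{def:anyprice-bundles} to agent~1, use the packing constraint together with additivity of $v_2$ to lower-bound the average value of the complements by $b_2\cdot v_2(\items)$, invoke Proposition~\ref{prop:shares-comp} to dominate agent~2's APS by her proportional share, and conclude by averaging over the support. The only cosmetic difference is your explicit appeal to Observation~\ref{prop:small-support} for finiteness of the support, which is not needed since there are only finitely many bundles.
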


Recall that we have observed that the problem of computing the AnyPrice share of an agent with an additive valuation is NP-hard, like the problem of computing the MMS.
 Yet, we next show that it has a pseudo-polynomial time algorithm, while the MMS does not.\footnote{This implies the APS computation problem has a fully polynomial time approximation scheme (FPTAS), while the MMS does not. Note that the MMS problem does have a polynomial time approximation scheme (PTAS) (\citep{WOEGINGER1997}).}  
That is, for any integer additive valuation $v_i$ and entitlement $b_i$, if the value of the APS of an agent is $z$ 
then the APS can be computed exactly in time polynomial in the input size and $z$ (note that $z$ is at most $v_i(\items)$), which is pseudo-polynomial\footnote{Note that a polynomial algorithm would run in time polynomial in $\log z$, not polynomial in $z$.}. 
This is in contrast to the MMS which is computationally harder -- 
it is known to be strongly NP-hard when the number of agents is large (\citep{WOEGINGER1997}, can be proven by a reduction from $3$-PARTITION) and hence it does not have a pseudo-polynomial time algorithm, unless P=NP.

\begin{restatable}{proposition}{proppseudo}

	\label{prop:computation-psedu-poly}
	Consider the problem of computing, for any integer additive valuation $v_i$ and entitlement $b_i$ (where $b_i \leq 1$  is a rational number), the AnyPrice share $z=\anypricei$.
	There is an algorithm that computes the APS in time polynomial in the input size (representation of the valuations and entitlement) and $z$. 
\end{restatable}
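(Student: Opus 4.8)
The plan is to reduce the computation of $\anypricei$ to $O(\log z)$ feasibility tests, each runnable in time polynomial in the input size and $z$, by uncovering a knapsack problem hidden inside the APS definition. First I would argue that the APS is integer-valued: since $v_i$ is additive with integer item values, every set has integer value, so the family $\goodsetsiz=\{T\subseteq\items : v_i(T)\geq z\}$ appearing in Definition~\ref{def:anyprice-bundles} is unchanged as $z$ ranges over an interval $(k,k+1]$ with $k$ integer. As feasibility of the constraint system is monotone non-increasing in $z$ while the objective only maximizes $z$, the optimum is attained at an integer, so $z=\anypricei$ is a nonnegative integer. Rather than scanning $\{0,\dots,v_i(\items)\}$ (which need not be polynomial in $z$), I would use a galloping search: test thresholds $1,2,4,\dots$ until the first infeasible one, then binary-search the last interval. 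This performs $O(\log z)$ feasibility tests, and every tested threshold is at most $2z$, which is what keeps the running time polynomial in $z$ rather than in $v_i(\items)$.

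It remains to test, for a fixed integer threshold $z'$, whether the system of Definition~\ref{def:anyprice-bundles} is feasible, i.e.\ whether $\anypricei\geq z'$. By the price-based form (Definition~\ref{def:anyprice-prices}) this holds iff for every $p\in\prices$ there is a set $T\in\mathcal{G}_i(z')$ affordable under $p$, which I would phrase as the linear program
$$ W(z') \;=\; \max_{p\in\prices}\ \min_{T\in \mathcal{G}_i(z')}\ \sum_{j\in T} p_j, \qquad\text{so that}\qquad \anypricei\geq z' \iff W(z')\leq b_i. $$
Writing $W(z')$ as $\max\{\,t : \sum_j p_j=1,\ p\geq 0,\ \sum_{j\in T}p_j\geq t\ \forall T\in\mathcal{G}_i(z')\,\}$, this is an LP in the $m{+}1$ variables $(p,t)$ with one constraint per set $T$. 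Its constraint matrix has $0/1$ entries, so its optimum is a rational of polynomially bounded bit-size, and the ellipsoid method solves it in time polynomial in $m$, the bit-size, and the cost of a separation oracle.

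The separation oracle, given $(p,t)$, must find the cheapest set of value at least $z'$ and check whether its price drops below $t$. This is precisely the knapsack problem ``minimize $\sum_{j\in T}p_j$ subject to $v_i(T)\geq z'$'', which a dynamic program over the value levels $0,1,\dots,z'$ solves in time $O(m z')$; since every tested $z'$ is at most $2z$, this is $O(mz)$. Combining, each feasibility test runs in time polynomial in the input and $z$, and the whole computation is pseudo-polynomial. Alternatively, I would note that laying out this value-indexed DP as a layered DAG makes sets of value at least $z'$ correspond to source-to-sink paths, which lets one write $W(z')$ as a single compact LP of size $O(mz')$ and thereby avoid the ellipsoid method altogether.

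The main obstacle is exactly the exponential number of subsets present in both formulations of the APS: the packing LP of Definition~\ref{def:anyprice-bundles} has a variable per subset, and its price dual has a constraint per subset. The crux is to show that this exponential structure collapses to a pseudo-polynomial one, and this is where integrality of the item values is indispensable, as it is what makes the ``cheapest set of value at least $z'$'' problem solvable by a value-indexed knapsack DP. This reliance on integrality is precisely what distinguishes APS from MMS, which is strongly NP-hard and therefore admits no pseudo-polynomial algorithm unless $\mathrm{P}=\mathrm{NP}$.
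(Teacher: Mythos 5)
Your proof is correct and takes essentially the same route as the paper's: establish that the APS is an integer, binary-search the candidate threshold (never testing beyond $2z$), and decide each threshold via a price LP with exponentially many bundle constraints, solved by the ellipsoid method using a pseudo-polynomial knapsack dynamic program as the separation oracle. The differences are cosmetic — you normalize prices to the simplex and maximize the minimum price of a good bundle, where the paper minimizes the total price subject to every bundle of value at least $t$ costing at least $b_i$, and your layered-DAG compact LP is a nice optional extra rather than a change of approach.
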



\section{APS Approximation for Additive Agents: Arbitrary Entitlements}\label{sec:allocation-game}

In this section we present our main result regarding fair-share approximation for additive agents with arbitrary entitlements.  
We start with extending the definition of Truncated Proportional Share (TPS) of  \citet*{babaioff2021bestofbothworlds} to arbitrary entitlements, and show it is at least as high as the APS, {and thus can serve as a tool in proving results for APS}.
Then, we present our main result -- showing that for agents with  additive valuations and arbitrary entitlements 
there is a bidding game in which each agent can always secure herself a $\frac{3}{5}$ fraction of her AnyPrice share, and can also secure at least $\frac{1}{2-b_i}$ fraction of her Truncated Proportional Share (which is at least her APS).
{As the analysis of the strategies for these 
bounds is rather involved, most details of the proofs are deferred to the appendix. To give the reader a sense of some ideas that we use, we  present some weaker bounds that are easier to analyze, along with their analysis. In particular, we present a natural strategy that secures half the TPS, and a strategy that obtains some constant fraction, that is strictly larger than half, of the APS.    }

\subsection{The Truncated Proportional Share (TPS)}
\label{sec:tps}
The \emph{Truncated Proportional Share (TPS)} was defined in \citep*{babaioff2021bestofbothworlds}
for additive agents with equal entitlements. In this section we  extend their definition to arbitrary entitlements and show that for additive agents the TPS is always at least as large as the APS.
We then present a polynomial time algorithm that gives every agent a constant fraction of her TPS, and thus also that fraction of her APS share.
 
\citet{babaioff2021bestofbothworlds}  have defined the TPS for equal entitlements. For equal entitlements setting with $n$ agents and a set of items $\items$, the \emph{truncated proportional share} of agent $i$ with additive valuation function $v_i$ is the largest value $t$ such that $\frac{1}{n} \sum_{j\in \items} \min[v_j, t] = t$. 
We extend this definition to the case of arbitrary entitlements in a natural way, by thinking of $\frac{1}{n}$ as the entitlement of the agent in the equal entitlement setting, and replacing it by the agent's entitlement $b_i$ in the arbitrary entitlement case: 

\begin{definition}\label{def:TPS}
	The \emph{Truncated Proportional Share (TPS)} of agent $i$ with an additive valuation $v_i$ over the set of items $\items$ and entitlement $b_i$, denoted by $\truncatedi$, is:
	\begin{equation}
	\truncatedi = \max\{~z \mid b_i \cdot \sum_{j\in \items} \min(v_i(j),z) = z ~\}. \label{eq:tps}    
	\end{equation}
	 
When $\items$ and $v_i$ are clear from the context we denote this share by $\TPSbi$. 
\end{definition}

{Observe that it is immediate from Definition~\ref{def:TPS} that the TPS is at most the proportional share, and that  the TPS is exactly the proportional share of the instance after capping the value of each item at the TPS.  
Alternative, one can think about the following continuous process to reach the TPS by capping the valuation: a cap on item values is reduced 
until the point that no item has a capped value which is strictly greater than the proportional share of the current capped valuation.

}

{We next show that the TPS is always at least as large as the APS. Thus, by obtaining a guarantee with respect to the TPS, we will also obtain the same guarantee with respect to the APS.} 
 \begin{proposition}\label{prop:TPS-APS}
 	For agent $i$ with an additive  valuation $v_i$ over $\items$ and entitlement $b_i$, it holds that 
 	$\truncatedi \geq \anypricei$.
 \end{proposition}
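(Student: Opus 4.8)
The plan is to prove the inequality $\anypricei \le \truncatedi$ by working with the bundle (``dual'') characterization of the APS from Definition~\ref{def:anyprice-bundles}, rather than with the price-based Definition~\ref{def:anyprice-prices}. It is tempting to look for a single witnessing price vector, e.g.\ pricing each item $j$ at $p_j = b_i\min(v_i(j),z)/z$ with $z=\truncatedi$, which indeed lies in $\prices$ since the TPS fixpoint equation gives $\sum_j \min(v_i(j),z)=z/b_i$. But this breaks down: an item with $v_i(j) > \truncatedi$ then receives price exactly $b_i$, so the agent can buy it alone within budget and collect its full value, which may far exceed the TPS. Capping values for the purpose of pricing makes the most valuable items look cheap, and this is precisely the obstacle that pushes me toward the dual formulation, where the analogous capping becomes harmless.

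First I would fix $z = \anypricei$ and invoke Definition~\ref{def:anyprice-bundles} (legitimate by the equivalence in Proposition~\ref{prop:anyprice-eqe}) to obtain a distribution $\{\lambda_T\}_{T\subseteq\items}$ with $\sum_T \lambda_T = 1$, with $v_i(T)\ge z$ for every $T$ in its support, and with $w_j := \sum_{T\ni j}\lambda_T \le b_i$ for every item $j$. The key elementary step is a ``capping lemma'': for any set $T$ with $v_i(T)\ge z$ one has $\sum_{j\in T}\min(v_i(j),z)\ge z$. This follows by cases: if some $j\in T$ has $v_i(j)\ge z$, then that single term already equals $z$ and the rest are nonnegative; otherwise every item of $T$ has $v_i(j)<z$, so $\min(v_i(j),z)=v_i(j)$ and the sum equals $v_i(T)\ge z$.

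Next I would average this lemma against the distribution and swap the order of summation: $z = z\sum_T\lambda_T \le \sum_T \lambda_T \sum_{j\in T}\min(v_i(j),z) = \sum_j \min(v_i(j),z)\, w_j$. Since each $\min(v_i(j),z)\ge 0$ and $w_j\le b_i$, this yields $z \le b_i\sum_{j\in\items}\min(v_i(j),z)$. Writing $g(z')=b_i\sum_{j}\min(v_i(j),z')$, the function $g(z')-z'$ is concave and piecewise linear, vanishes at $z'=0$, tends to $-\infty$ as $z'\to\infty$, and by Definition~\ref{def:TPS} has $\truncatedi$ as its largest zero; hence $\{z'\ge 0 : g(z')\ge z'\} = [0,\truncatedi]$. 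Our inequality $g(z)\ge z$ therefore forces $z=\anypricei$ into $[0,\truncatedi]$, giving $\anypricei\le\truncatedi$ as required. The main content is the decision to pass to the dual formulation together with the capping lemma; the one place I would take care is this last step, where turning the pointwise inequality $g(z)\ge z$ into the bound $z\le\truncatedi$ relies on the concavity of $g$ and on $\truncatedi$ being the \emph{largest} fixpoint, while the averaging argument itself is routine.
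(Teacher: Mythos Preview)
Your proof is correct and takes a genuinely different route from the paper's. The paper works with the price-based Definition~\ref{def:anyprice-prices}: letting $w=\truncatedi$ and $K=\{j:v_i(j)>w\}$, it handles the case $K=\emptyset$ via $\proportionalbi\ge\anypricebi$, shows $b_i|K|<1$ when $K\neq\emptyset$, and then exhibits explicit prices (each item in $K$ priced at $b_i+\epsilon$, the remaining items priced proportionally to their value) that certify $\anypricebi\le w$. Your argument instead uses the dual Definition~\ref{def:anyprice-bundles} together with the capping lemma $\sum_{j\in T}\min(v_i(j),z)\ge z$ whenever $v_i(T)\ge z$, averages over the APS witness distribution, and reads off $g(z)\ge z$ with $g(z')=b_i\sum_j\min(v_i(j),z')$; the concavity of $g$ then pins down $\{z':g(z')\ge z'\}=[0,\truncatedi]$. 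Your approach avoids all the casework and the $\epsilon$-perturbation, and makes transparent why capping (which kills the naive pricing approach, as you observe) is exactly the right move on the dual side. The paper's approach, in exchange, yields an explicit price certificate, which aligns with how the APS is later bounded elsewhere in the paper.
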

\begin{proof}
Let $w=\truncatedi$.
We denote the set of items that have values strictly more than $w$ by $K$, and let $k=|K|$. If $k=0$ then the $\truncatedi = \proportional{b_i}{v_i}{\items}$, and thus by Proposition~\ref{prop:shares-comp} the proposition holds.
Assume $k>0$. 
By Definition~\ref{def:TPS}, it holds that $w=b_i \cdot \left(k\cdot w +\sum_{j\in \items \setminus K} v_i(j)\right) $.
{It also holds that $b_i \cdot k <1$.
To see this, first observe that if $b_i \cdot k > 1$ then $w = b_i \cdot \left(k\cdot w +\sum_{j\in \items \setminus K} v_i(j)\right)  \geq b_i \cdot k\cdot w >w $, a contradiction.
Finally, consider the case that $b_i \cdot k =1$. As $w = b_i \cdot \left(k\cdot w +\sum_{j\in \items \setminus K} v_i(j)\right) $ we have $\sum_{j\in \items \setminus K} v_i(j)=0$. 
Let $\ell=\min_{j\in K} v_i(j) $. 
Observe that 
$$ b_i \cdot \sum_{j \in \items} \min(v_i(j),\ell)   =  b_i \cdot \sum_{j \in K} \min(v_i(j),\ell) = b_i \cdot k \cdot \ell =\ell.$$
Thus  $\ell$ satisfies Equation~\eqref{eq:tps} and hence the TPS is at least $\ell > w$, which is a contradiction. We thus conclude that $b_i \cdot k <1$.
} 

If $v_i(\items \setminus K)=0$ then if we price all items in $K$  by $\frac{1}{k}$, the agent cannot afford any 
{item she has positive value for 
($v_i(\items \setminus K)=0$  and since $b_i<\frac{1}{k}$ she cannot get any item in $K$),}
and thus $\anypricei =0 \leq \truncatedi$.

Else, we have $v_i(\items \setminus K)>0$. Assume towards contradiction that the value  $z^*=\anypricei$ is strictly more than $w$.
Thus, there exists an $\epsilon>0$ such that  
$$ w < \frac{b_i \cdot v_i(\items \setminus K)}{1-k(b_i+\epsilon)} < z^*.$$ 
{Such an $\epsilon>0$ must exist since when $\epsilon \to 0 $ then this expression goes to $w$, while when $\epsilon \to \frac{1}{k}-b_i$, the expression goes to infinity, and it is continuous in $\epsilon$ for $\epsilon\in(0,\frac{1}{k}-b_i)$.}
Consider the pricing $b_i+\epsilon$ for every item in $K$, and for every item $j \in \items \setminus K$, we give a price $p_j=\frac{v_i(j)\cdot (1-k(b_i+\epsilon))}{v_i(\items \setminus K)}$.
The agent cannot afford any item among $K$. Among the other items, her value is proportional to the budget spent and thus is at most $\frac{b_i}{1-k(b_i+\epsilon)} \cdot v_i(\items \setminus K)$, which is strictly smaller than $z^*$, contradicting the definition of $z^*$ as her APS.
\end{proof}

In Appendix \ref{app:TPS} we discuss some other properties of the TPS. We show that the TPS might be much larger than the APS (Observation \ref{obs-TPS-mush-larger-APS}).
{We observe that while the APS is (weakly) NP-hard to compute, the TPS can be computed in polynomial time (Observation~\ref{obs:tps-computation}).} Yet, we only use TPS as a tool, and do not consider it as the most appropriate share for agents with arbitrary entitlements, as it has some significant drawbacks.
The main drawback of TPS is that, similarly to the proportional share, it seems unattractive beyond additive valuation:  the unit-demand valuations (with identical items)  example in Section \ref{sec:unit-demand}  is one in which the TPS is identical to PS, and the proportional share
(and TPS) seem too small (only $1/n$ instead of $1$).

We also show (Proposition \ref{prop:TPS-CE}) that a competitive equilibrium (CE)  does not necessarily guarantee that every agent gets her TPS, unlike the case for APS (in which in every CE, every agent gets her APS).  
{Finally, we note (see Observation~\ref{obs:TPS-app-UB}) that even for equal entitlements, it is not possible to give every agent more than a $\frac{n}{2n-1}$ fraction of her TPS. 
This is in contrast to the APS for which our main result shows that it is possible to give every agent a $\frac{3}{5}$ fraction of her APS.}

\subsection{Main Result: Approximate APS Allocations}

We next turn to present and prove our main result: 
for agents with additive valuations and arbitrary entitlements, it is always possible to give each agent at least a $\frac{3}{5}$ fraction of her AnyPrice share, and at least $\frac{1}{2-b_i}$ fraction of her Truncated Proportional Share (which is at least her APS).
This $\frac{1}{2-b_i}$ fraction  is more than $\frac{3}{5}$ if the agent entitlement is larger than $\frac{1}{3}$, and it 
goes to $1$ as $b_i$ grows to $1$. Moreover, we show that agent $i$ also gets a value that is 
at least the value of her $\lfloor 1/b_i\rfloor$  
ranked item. 
Note that this result implies Theorem \ref{thm:intro-unequal} (it is a stronger version of it). 
{Our result follows from presenting a natural ``bidding game", and showing that each agent has a strategy to secure each of the  guarantees mentioned above. }
   
Consider the following  ``bidding game" among the agents: 
Every agent $i$ starts with a budget of $b_i$.
As long as there are still items left, in each round $t$, every agent bids an amount $\bid_i^{(t)}$ between $0$ and $b_i^{(t)}$. 
An agent with the highest bid (breaking ties arbitrarily) is the winner of the round, we denote that winning agent by $w^{(t)}$.
{ 
Within her remaining budget $b_{w^{(t)}}^{(t)}$, 
agent $w^{(t)}$ selects available items she wants to take, paying  $\bid_{w^{(t)}}^{(t)}$ for each item she takes.}

\begin{algorithm}
	\caption{The Bidding Game }
	\begin{algorithmic}[1]
	\STATE Input: Set of items $\items$, entitlements $\vect{b}=(b_1,\ldots,b_n)$.
	\\ We have the following notations for the beginning of round $t$:  
	\\ \quad\quad\quad\quad\quad
	$\items^{(t)}$ - the available items at the beginning of round $t$.
	\\ \quad\quad\quad\quad\quad $s_i^{(t)}=v_i(\items^{(t)})$ - the value of agent $i$ for $\items^{(t)}$, the items available. 
	\\ \quad\quad\quad\quad\quad $x_i^{(t)}$ -- the highest value $i$ assigns to any item in $\items^{(t)}$.
	\\ \quad\quad\quad\quad\quad $y_i^{(t)}$ -- the second highest value $i$ assigns to any item in $\items^{(t)}$.
	\\ \quad\quad\quad\quad\quad $b_i^{(t)}$ -- the  budget available to  $i$ at the beginning of round $t$.
	\\ \quad\quad\quad\quad\quad $B^{(t)}=\sum_k b_k^{(t)}$ -- the total budget remaining at the beginning of round $t$.
	\STATE Initialize: $t=1$, for every agent $i$, $b_i^{(1)}=b_i$, and $\items^{(1)} =\items$ 
	\WHILE{$\items^{(t)} \neq \emptyset $}
	\STATE Every agent $i$ bids an amount $\bid_i^{(t)} \in [0,b_i^{(t)}]$
	\STATE Let $w^{(t)} \in \arg\max \bid_i^{(t)}$ (breaking ties arbitrarily) be the winning agent of round $t$.
	\STATE Agent $w^{(t)}$ selects a {non-empty set 
	$W^{(t)} \subseteq \items^{(t)}$ 
	that she can afford: $|W^{(t)}|\cdot \bid_{w^{(t)}}^{(t)}\leq b_{w^{(t)}}^{(t)} $.}
    \STATE{Update budgets:} ~\quad\quad\quad\quad\quad  For every agent $i \neq w^{(t)}$ set $b_i^{(t+1)} = b_i^{(t)}$.
	\STATE \quad \quad \quad\quad\quad\quad \quad \quad\quad\quad\quad\quad Set $b_{w^{(t)}}^{(t+1)} = b_{w^{(t)}}^{(t)} - |W^{(t)}| \cdot \bid_{w^{(t)}}^{(t)}$
	\STATE{Remove allocated items: } ~\quad $\items^{(t+1)} = \items^{(t)}\setminus W^{(t)}$
	\STATE $t\leftarrow t+1$
	\ENDWHILE
	\end{algorithmic}
\end{algorithm}

\begin{restatable}{theorem}{thmappAPS}		   
	\label{thm:app-APS}

	There exists an allocation mechanism  
	for settings with $n$ agents that have additive valuations and arbitrary entitlements, which has the following properties. 
	In that mechanism {(the Bidding Game)} 
	every agent $i$ has a strategy that regardless of the 
	strategies of the other agents, ensures she gets a bundle of value at least the largest of the following three values:
	\begin{itemize}
		\item$60\%$ 
		of her AnyPrice share.
		\item $\frac{1}{2-b_i}$ fraction of her Truncated Proportional Share.
		\item the value of her  $\lfloor 1/b_i\rfloor$  
		ranked item.
	\end{itemize}  
	Moreover, that mechanism runs in polynomial time. 
\end{restatable}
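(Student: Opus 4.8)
The plan is to fix a single agent $i$ and, for each of the three target quantities, design a bidding strategy that secures at least that quantity no matter how the other agents bid. Write $A=\anypricebi$ and $T=\TPSbi$; the three targets $\frac{3}{5}A$, $\frac{1}{2-b_i}T$, and the value of the $\lfloor 1/b_i\rfloor$-ranked item all depend only on $v_i$ and $b_i$, so they are known to agent $i$ in advance. Consequently it suffices to attain each one individually: to guarantee the \emph{largest} of the three, agent $i$ either runs one carefully chosen strategy that I verify attains all three bounds at once, or (since the targets are own-data-dependent) commits upfront to the strategy for whichever target is largest. The single accounting identity driving every case is worth stating first: budgets sum to $1$ and a round-winner pays her bid per item taken, so the total payment over the whole game — equivalently, the vector of per-item prices at which items are claimed — sums to at most $1$. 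This is precisely the unit budget of the price-based Definition~\ref{def:anyprice-prices}, and it is the bridge between the game and the shares.

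The $\lfloor 1/b_i\rfloor$-ranked-item bound is the warm-up. Writing $k=\lfloor 1/b_i\rfloor$, agent $i$ bids exactly $b_i$ in every round in which some available item has value at least that of her $k$-th most valued item, and bids $0$ otherwise. Every competing win against a bid of $b_i$ costs the winner at least $b_i$ per item; since the other agents' combined budget is $1-b_i$, they can remove at most $\lfloor (1-b_i)/b_i\rfloor = k-1$ items while outbidding her. As at least $k$ items meet the threshold, one always survives, and a short budget computation shows that after the others have spent enough to remove $k-1$ such items their remaining budget falls below $b_i$, so agent $i$ then wins a round and takes a surviving item of value at least her $k$-th ranked item.

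For the Truncated Proportional Share I would first cap every value at $T$; by Definition~\ref{def:TPS} this preserves the identity $\sum_{j\in\items}\min(v_i(j),T)=T/b_i$ and loses nothing, since any single item of value $\ge T$ already meets every target. The idea is to have agent $i$ bid an amount proportional to the value of the best available (capped) item, scaled by her remaining budget and by the value she still needs, so that the realized per-item prices form a price vector of total weight at most $1$; the capping identity then forces that either she herself accumulates value $\frac{1}{2-b_i}T$, or the opponents sink so much budget into the items she is denied that what she does win already suffices. I would present first a simpler rule securing $\frac12 T$ (the factor $\frac12$ reflecting a worst-case even split of the capped value), and then tighten the bookkeeping — exploiting that agent $i$ owns budget $b_i$ out of the total $1$ — to the constant $\frac{1}{2-b_i}$, which is where the dependence on $b_i$ enters.

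The genuinely hard part is the $\frac{3}{5}$ guarantee for the AnyPrice share, and it is needed only when $b_i<\frac13$, since for $b_i\ge\frac13$ one has $\frac{1}{2-b_i}\ge\frac35$ and the TPS bound together with $\truncatedi\ge\anypricei$ (Proposition~\ref{prop:TPS-APS}) already delivers it. Here I would invoke the packing form (Definition~\ref{def:anyprice-bundles}): there is a distribution over bundles, each of value $\ge A$, in which every item carries total weight at most $b_i$, and by Observation~\ref{prop:small-support} the support has size at most $m$. The plan is to convert this fractional packing into an adaptive bidding strategy and run an amortized/potential argument: as items are claimed, maintain that either the value already secured plus the value still guaranteeable from the surviving bundles stays above $\frac35 A$, or the opponents have overpaid relative to the total weight of the items they removed, and then close both cases against the total-price-$\le 1$ identity. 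Balancing ``grab now'' against ``let opponents overpay'' is exactly what yields the constant $\frac35$, and making this potential argument tight is the main obstacle I anticipate. Finally, for the polynomial-time claim I would note that the game lasts at most $m$ rounds and that each strategy is specified through quantities computable in polynomial time from the currently available items and remaining budgets; in particular the APS-based strategy should be defined via observable quantities, so that running it never requires computing the NP-hard value $A$, which enters only the analysis. The residual subtleties — consistent tie-breaking, ensuring the targeted item is still present in the round agent $i$ finally wins, and the exact-versus-approximate handling of $A$ — I would dispatch inside the respective case analyses.
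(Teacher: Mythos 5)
Your overall decomposition --- three separate strategies, one per guarantee, with the agent selecting the one whose target is largest --- is exactly the paper's structure, and two of your three pieces are on track: the $\lfloor 1/b_i\rfloor$-ranked-item argument is essentially the paper's (bid $b_i$; the others can outbid you at most $\lfloor 1/b_i\rfloor - 1$ times before exhausting their budget $1-b_i$), and your ``simpler rule securing $\frac{1}{2}T$'' is the paper's bid-your-max-value strategy (Lemma~\ref{lem:safeStrategy} and Corollary~\ref{cor:safeStrategy}). But at the two places where the theorem is actually hard, you offer a plan rather than a proof. For the TPS bound, ``tighten the bookkeeping to $\frac{1}{2-b_i}$'' is the entire content of Lemma~\ref{lem:1bi}, whose proof is a delicate induction maintaining the invariant $x_i^{(t)}/s_i^{(t)} \le b_i^{(t)}/B^{(t)}$ across four separate win/lose cases; nothing in your sketch indicates how that is carried out. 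More seriously, for the $60\%$ APS bound you explicitly concede that ``making this potential argument tight is the main obstacle I anticipate'' --- and that obstacle \emph{is} the theorem. The paper's proof is not a generic amortized argument: it is a concrete three-step strategy (Lemma~\ref{lem:35}) --- bid the full budget while some item is worth at least $\frac{3}{5}z$, bid half the budget while some pair is worth at least $\frac{3}{5}z$, then rescale budgets --- whose crux is a combinatorial surgery on the packing of Definition~\ref{def:anyprice-bundles} (discarding sets hit by a step-1 item or by two step-2 items, splitting each untouched set into two halves each worth at least $\frac{2}{5}z$) that proves $\anyprice{\frac{b_i}{2B'}}{v_i}{\items'} \ge \frac{2}{5}\cdot\anypricebi$, followed by an appeal to Lemma~\ref{lem:3/4}, itself an involved two-stage strategy with its own charging argument. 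Your observation that the APS bound is only needed when $b_i<\frac{1}{3}$ (since otherwise $\frac{1}{2-b_i}\ge\frac{3}{5}$ and $\truncatedi\ge\anypricei$) is correct and a nice simplification, but it does not reduce this difficulty.

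There is also a concrete error in your handling of the polynomial-time claim. You propose that ``the APS-based strategy should be defined via observable quantities, so that running it never requires computing the NP-hard value $A$.'' But the only strategy exhibited for the $60\%$ bound (the paper's, and in spirit yours) uses the value $z=\anypricebi$ in its bidding thresholds, so it is not defined by observable quantities; moreover, even selecting \emph{which} of the three strategies to run requires comparing $\frac{3}{5}A$ against the other two targets, hence computing $A$. The paper resolves this in Appendix~\ref{app:V} by a genuinely different mechanism: a binary search over candidate values $z$, together with a polynomial-time test of whether a given $z$ is ``good,'' which is possible because against the known strategy a worst-case adversary can be restricted to roughly $\binom{m}{2}$ pure strategies (choosing the two rounds in which agent $i$ wins). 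Without this device, or an actual observable-quantity strategy (which you do not supply), your polynomial-time claim is unsupported.
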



Before moving to the proof, we remark that for the case of equal entitlements ($b_i=1/n$ for every $i$), the approximation of $\frac{1}{2-b_i}$ for the TPS in the theorem is tight and cannot be improved.  
Indeed, consider $n$ agents with equal entitlements and  $2n-1$ identical items, each of value $\frac{1}{2n-1}$. 
One of the $n$ agents must get at most one item, getting a value of only $\frac{1}{2n-1}$ while her TPS is $\frac{1}{n}$ (identical to her proportional share as no item has value larger than her {proportional} share). Thus, that agent gets only $\frac{n}{2n-1}$ fraction of her TPS, which equals a $\frac{1}{2-b_i}$ fraction of her TPS, as $b_i=1/n$. Note that $\frac{n}{2n-1}$ approaches  $\frac{1}{2}$ when $n$ goes to infinity, and thus any constant approximation to the TPS that is larger than $50\%$ is impossible, while we are able to obtain $60\%$ of the weaker benchmark of APS. 

\vspace{0.2in}
The proof of the theorem is based on presenting three strategies, each guaranteeing one of the bounds promised by the theorem. By picking the one with the highest guarantee, the agent gets all three guarantees.  
While the last bound (with respect to the ranked items) is simple, the other two bounds {are substantially more difficult}.  
{As a warm-up, we will start by presenting a simple strategy that ensures that agent $i$ gets at least half of her TPS.}
We leave the proof  that  agent $i$ can secure the stronger bound  $\frac{1}{2-b_i}$ fraction of her TPS to the appendix. 
{The claim that there is a strategy for agent $i$ that guarantees $60\%$ of her APS appears in Lemma \ref{lem:35}.} As presenting the strategy and its analysis in full is {rather complex}, 
in the body of the paper we show how the most involved part can be substituted with a simpler step (presented in Lemma \ref{lem:safeStrategy}) that is enough to guarantee a smaller fraction of $8/15$ (while leaving the more {difficult} 
proof of the stronger bound of $3/5$ to the appendix).
We note that this fraction of $8/15$ still illustrates an important qualitative point - that it is possible for an agent to guarantee herself some constant fraction that is strictly more than 50\% of her APS. 
        

{To illustrate the use of the bidding game, and likewise, the usefulness of the TPS as an upper bound on the APS, we first present a relatively easy proof that there is an allocation that gives every agent at least half of her TPS (and hence of her APS).} 

The \emph{bid-your-max-value strategy} is the strategy in which at each round agent $i$ bids her normalized value of her highest-value remaining  item, unless it is higher than her remaining budget -- in this case she bids her entire remaining budget. If she wins she picks that item. 
Formally: 
If at the beginning of round $t$, the remaining budget of $i$ is $b_i^{(t)}$, and the highest value remaining  item has value $x_i^{(t)}$, then at round $t$ agent $i$ bids $\min\left(\frac{x_i^{(t)} }{v_i(\items)},b_i^{(t)}\right)$ and chooses the maximal value remaining item when winning.

{It is immediate from the definition of the strategy that if an agent $i$ is able to spend her entire budget, she will get her proportional share. The following lemma plays a key role in reasoning about those instances in which $i$ fails to spend her full budget. 
It considers only the special case that no single item has a value larger than the proportional share, but other cases can be reduced to this special case, as will be seen in Corollary~\ref{cor:safeStrategy}. For this special case,
any time that agent $i$ bids her max value but does not win, the total budget of the other agents decreases by at least her max value. It follows that the total available budget of all agents decreases at a rate that is at least as fast as the rate of decrease of the total value of the remaining items. Hence either agent $i$ wins some items, or the other agents run out of budget before they manage to buy all the items that agent $i$ values.}

\begin{lemma}
	\label{lem:safeStrategy}
	The bid-your-max-value strategy  of agent $i$ with additive valuation $v_i$ and entitlement $b_i$ provides the following guarantee, regardless of the  strategies of the other agents. 
	If for some positive integer $k$ 
    no subset of $k$ items has a value larger than the proportional share $b_i \cdot v_i(\items)$,
	then the value agent $i$ gets is at least a {$\frac{k}{k+1}$-fraction of her proportional share (i.e., at least $\frac{k}{k+1}\cdot b_i \cdot v_i(\items)$).}
	
	

	
	
\end{lemma}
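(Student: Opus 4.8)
The plan is to reduce the claim to a statement purely about how much of her budget agent $i$ manages to spend. First I would record the basic accounting of the strategy: whenever agent $i$ wins a round she bids $\min(x_i^{(t)}/v_i(\items),\,b_i^{(t)})$ and takes the highest-value remaining item, so the value she gains is at least $v_i(\items)$ times the amount she pays (with equality when she bids $x_i^{(t)}/v_i(\items)$, and strictly more when she is forced to bid her whole remaining budget $b_i^{(t)}<x_i^{(t)}/v_i(\items)$). Summing over her winning rounds, the total value $V_i$ she obtains is at least $v_i(\items)$ times her total spending. Hence it suffices to show that she spends at least a $\frac{k}{k+1}$-fraction of $b_i$, equivalently that her leftover budget at the end is at most $\frac{b_i}{k+1}$. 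I would also immediately dispose of one easy case: if at some winning round she is \emph{budget-capped} (bids her entire $b_i^{(t)}$ because $x_i^{(t)}/v_i(\items)>b_i^{(t)}$), then she empties her budget on that round, spends all of $b_i$, and already secures her full proportional share; so I may assume all of her wins are \emph{uncapped}.

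Next I would split into two cases according to whether agent $i$ is ever budget-capped. If she is never capped, then in every round she loses she bids exactly $x_i^{(t)}/v_i(\items)$, the winner pays at least this much per item taken, and the value removed that round is at most $x_i^{(t)}$ per item; thus the other agents' budget drops by at least the normalized value of the items removed. Since the other agents' budgets total only $1-b_i$ and must cover every item not won by $i$, this gives $1-b_i \ge (v_i(\items)-V_i)/v_i(\items)$, i.e. $V_i \ge b_i\, v_i(\items)$, the full proportional share. (Throughout I use the monotonicity that items only leave, so the current maximum $x_i^{(t)}$ is non-increasing, and $x_i^{(1)}$ is at most the top-$k$ sum, hence at most $b_i\, v_i(\items)$.)

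The interesting case is when agent $i$ is capped. Let $\tau$ be the first capped round; by the reduction it is a losing round, and every round before it is uncapped. By then she has won some $p\ge 1$ items, and since she always takes the current maximum and maxima are non-increasing, each won item has value at least $x_i^{(\tau)}$. Being uncapped before $\tau$, she paid $a_j/v_i(\items)$ for the $j$-th, so her spending is at least $p\,x_i^{(\tau)}/v_i(\items) > p\,b_i^{(\tau)}$ (using $x_i^{(\tau)}/v_i(\items)>b_i^{(\tau)}$); as this spending equals $b_i-b_i^{(\tau)}$, we get $b_i^{(\tau)} < \frac{b_i}{p+1}$. Because budgets only decrease, her final leftover is at most $b_i^{(\tau)}$. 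If $p\ge k$ this already gives leftover $<\frac{b_i}{k+1}$ and we are done; in particular the entire $k=1$ case is settled, since capping forces $p\ge 1$.

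The remaining, and main, obstacle is the case $p<k$ (possible only for $k\ge 2$): she got stuck having won fewer than $k$ items. Here I would use the hypothesis fully. The $p$ won items together with one copy of the current maximum form $p+1\le k$ items each of value at least $x_i^{(\tau)}$, so the no-$k$-items condition forces $(p+1)\,x_i^{(\tau)}\le b_i\, v_i(\items)$, bounding every remaining item by $x_i^{(\tau)} \le \frac{b_i\, v_i(\items)}{p+1}$. I would then analyse the continuation of the game (same bid-your-max strategy, same normalization $v_i(\items)$, remaining budget $b_i^{(\tau)}$, remaining items all small) by induction on $k$, with a reduced parameter of order $k-p$ and ``budget'' $b_i^{(\tau)}$. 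The arithmetic closes: combining a recursive guarantee $\frac{k-p}{k-p+1}\,b_i^{(\tau)}\,v_i(\items)$ from the continuation with the already-won value $(b_i-b_i^{(\tau)})\,v_i(\items)$ yields at least $\big(b_i - \frac{b_i^{(\tau)}}{k-p+1}\big)v_i(\items)$, which is at least $\frac{k}{k+1}\,b_i\, v_i(\items)$ precisely because $b_i^{(\tau)}<\frac{b_i}{p+1}$ and $(p+1)(k-p+1)\ge k+1$ (equivalently $p(k-p)\ge 0$). The delicate point, where I expect the real work to lie, is setting up this induction cleanly: one must state a generalized version of the lemma (arbitrary normalization and budget, with no reliance on the total budget being exactly $1$) and verify that the continuation satisfies the correct no-$k'$-items hypothesis, handling the subtlety that items agent $i$ has already won may exceed the new threshold $b_i^{(\tau)}\,v_i(\items)$ — so the right recursion parameter must be read off from the top items of the remaining set rather than obtained by naively subtracting the won items.
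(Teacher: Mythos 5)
Your reduction to spending, your never-capped case, and your capped case with $p \ge k$ are all correct, and up to that point you follow essentially the paper's route (your $\tau$ is the paper's $t^*+1$, your $p$ is the paper's $|F_i|$). The genuine gap is the case $p < k$, which you defer to an induction that you never set up --- and which, as sketched, cannot be set up. The recursive call you describe (same strategy, budget $b_i^{(\tau)}$, same normalization, parameter $k-p$) would need the continuation instance to satisfy the corresponding hypothesis, namely that no $k-p$ remaining items together are worth more than $b_i^{(\tau)} \cdot s_i$, where $s_i = v_i(\items)$. But the very fact that round $\tau$ is capped means $x_i^{(\tau)} > b_i^{(\tau)} \cdot s_i$: the single top remaining item already exceeds the new threshold, so the hypothesis of any such generalized lemma fails for every positive parameter. (Your closing remark about ``items agent $i$ has already won'' exceeding the threshold misidentifies the obstacle --- it is the \emph{remaining} items that break the recursion.)

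The way out is that the case $p<k$ never occurs, and you already hold both ingredients needed to see this. You know the exact accounting $b_i - b_i^{(\tau)} = v_i(F_i)/s_i$, where $F_i$ is the set of $p$ items she has won (all her wins are uncapped). Now apply the hypothesis of the lemma to the set $F_i \cup \{j^*\}$, where $j^*$ is the top remaining item at round $\tau$: this is a set of at most $k$ items, so $v_i(F_i) + x_i^{(\tau)} \le b_i \cdot s_i$, hence $x_i^{(\tau)} \le s_i\bigl(b_i - v_i(F_i)/s_i\bigr) = s_i \cdot b_i^{(\tau)}$, contradicting the capping condition $x_i^{(\tau)} > s_i \cdot b_i^{(\tau)}$. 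Your misstep was weakening $v_i(F_i)$ to $p \cdot x_i^{(\tau)}$ \emph{before} invoking the hypothesis (getting only $(p+1)\,x_i^{(\tau)} \le b_i s_i$), which throws away exactly the information that yields the contradiction. This one-line argument is what the paper does (its case $|F_i| < k$), and with it your proof closes with no induction at all: a capped round forces $p \ge k$, and your $p \ge k$ argument finishes.
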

\begin{proof}
      {We denote $v_i(\items)$ by $s_i$.} Recall that we consider the strategy in which at round $t$ agent $i$ bids $\min\Big\{\frac{x_i^{(t)} }{s_i},b_i^{(t)}\Big\}$, and selects the item of highest value if she wins. We shall say that agent $i$ bids her {\em max value} at round $t$ if $\frac{x_i^{(t)} }{s_i} \le b_i^{(t)}$, and that $i$ bids her budget if $\frac{x_i^{(t)} }{s_i} > b_i^{(t)}$.

	Let $t^*$ be the latest round such that for every round up to and including $t^*$, agent $i$ bids her max value. That is, $\frac{x_i^{(t)} }{s_i } \leq b_i^{(t)}$ for every $t \le  t^*$. 
	Let $u_i^{(t^*+1)}$ denote the total value accumulated 
	by agent $i$ up to the beginning of round $t^* + 1$. 
	
	Observe that $\frac{u_i^{(t^*+1)}}{s_i} = b_i - b_i^{(t^*+1)}$, {as up to (and including) round $t^*$  agent $i$ always bids exactly her max value}, and hence  $u_i^{(t^*+1)} = s_i\cdot (b_i - b_i^{(t^*+1)})$. Thus the value accumulated by agent $i$ by the end of the bidding game is at least $\left(b_i - b_i^{(t^*+1)}\right)\cdot v_i(\items)$. Hence if $b_i^{(t^*+1)} \le \frac{b_i}{k+1}$, the Lemma is proved.
	
	
	
It remains to consider the case that $b_i^{(t^*+1)} > \frac{b_i}{k+1}> 0$. {We claim that in this case $s_i^{(t^* + 1)}   >  0$. The claim follows from the following argument. In every round $t \le t^*$ agent $i$ bids her max value $\frac{x_i^{(t)}}{s_i}$. Hence in every round $t \le t^*$, regardless of which agent won the round, the payment per item consumed in round $t$ was at least $\frac{x_i^{(t)}}{s_i}$, whereas every item consumed in round $t$ had value at most $x_i^{(t)}$ (according to $v_i$). As the initial total budget is~1, and $B^{(t^*+1)} \ge b_i^{(t^*+1)} > 0$, the total budget consumed up to and including round $t^*$, namely, $1 - B^{(t^*+1)}$, satisfies $1 - B^{(t^*+1)} < 1$. Hence the total value (according to $v_i$) consumed is at most $(1 - B^{(t^*+1)})s_i < s_i$, implying that $s_i^{(t^* + 1)}   >  0$, as claimed.}

By the maximality of $t^*$, in round $t^{*} + 1$ agent $i$ bids her budget. This implies that $b_i^{(t^*+1)} < \frac{x_i^{(t^*+1)}}{s_i}$. 
	
	Let $F_i$ denote the set of items  that agent $i$ received up to round $t^*+1$. 
	If $|F_i|< k$ then by the assumption of the lemma, it holds that 
	$v_i(F_i) + x_i^{(t^*+1)} \leq b_i\cdot v_i(\items) = b_i \cdot s_i$. 
	Thus, it holds that $\frac{x_i^{(t^*+1)}}{s_i}\leq b_i^{(t^*+1)} $, and agent $i$ can afford to bid her max value, a contradiction to the definition of $t^*$.
	
    We are left to handle the case of $|F_i| \geq k$.
    Since $b_i^{(t^*+1)} < \frac{x_i^{(t^*+1)}}{s_i} $ and since $b_i^{(t^*+1)} =b_i - \frac{v_i(F_i)}{s_i} $,  
    it holds that  $v_i(F_i)+x_i^{(t^*+1)} > b_i \cdot s_i$.
	Hence, 
	$$v_i(F_i) =  \frac{|F_i| \cdot v_i(F_i) + v_i(F_i)}{|F_i|+1} \geq 
	|F_i|\cdot \frac{v_i(F_i) + x_i^{(t^*+1)}}{|F_i|+1} > \frac{k}{k+1} \cdot  b_i  \cdot s_i ,$$   
	where the first inequality holds since 
	every item in $F_i$ has a value of at least $x_i^{(t^*+1)}$ and thus $ v_i(F_i) \geq |F_i| \cdot x_i^{(t^*+1)}$. The second inequality holds since $|F_i| \geq k$ and $v_i(F_i) + x_i^{(t^*+1)} > b_i \cdot s_i$.
	This concludes the proof. 
\end{proof}


An immediate corollary of Lemma~\ref{lem:safeStrategy} (for $k=1$) is that the bid-your-max-value strategy gives the agent half her TPS.

\begin{corollary}
	\label{cor:safeStrategy}
	Agent $i$ with additive valuation $v_i$ and a budget of $b_i$ has a strategy  in the bidding game 
	that guarantees herself a value of at least $\frac{1}{2} \cdot TPS(b_i )$, regardless of the strategies of the other agents. Hence for any additive valuations and arbitrary entitlements 
	there is an allocation that gives every agent at least half of her TPS.
\end{corollary}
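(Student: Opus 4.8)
The plan is to reduce to Lemma~\ref{lem:safeStrategy} with $k=1$ by passing to the valuation in which every item value is capped at the TPS. Concretely, set $t = \TPSbi$ and define the additive valuation $\hat v_i$ by $\hat v_i(j) = \min(v_i(j), t)$ for each item $j \in \items$. By the observation following Definition~\ref{def:TPS}, the TPS is exactly the proportional share of this capped instance, that is, $b_i \cdot \hat v_i(\items) = b_i \sum_{j\in\items} \min(v_i(j), t) = t$. After capping no item has value exceeding $t$, so in particular no single item of $\hat v_i$ exceeds the proportional share $b_i \cdot \hat v_i(\items) = t$; this is precisely the hypothesis of Lemma~\ref{lem:safeStrategy} for $k=1$.

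Next I would have agent $i$ play the bid-your-max-value strategy with respect to $\hat v_i$ rather than $v_i$. This is a legitimate strategy in the bidding game, since both the bid at each round and the item chosen upon winning can be computed from any additive valuation the agent wishes to use. Applying Lemma~\ref{lem:safeStrategy} to $\hat v_i$ with $k=1$ shows that, regardless of the other agents' strategies, agent $i$ accumulates a bundle of $\hat v_i$-value at least $\frac{1}{2} b_i \hat v_i(\items) = \frac{1}{2} t = \frac{1}{2}\TPSbi$. Since capping only decreases values, $v_i(S) \ge \hat v_i(S)$ for every bundle $S$, so the true value of the bundle agent $i$ obtains is also at least $\frac{1}{2}\TPSbi$. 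This establishes the single-agent guarantee.

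For the ``hence'' part I would observe that the bidding game always terminates with a complete allocation: in each round the winner removes a non-empty set of items (she can always afford at least one item, since her bid is at most her remaining budget, and a zero bid lets her take any items for free), so the while loop ends after at most $m$ rounds with $\items^{(t)} = \emptyset$. If every agent simultaneously plays her own bid-your-max-value strategy, each with respect to her own capped valuation, then the per-agent guarantee, being robust to all strategies of the others, holds for every agent at once. Hence the resulting partition is an allocation in which every agent receives at least half of her TPS.

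The argument is essentially immediate once the lemma is in hand, so there is no real obstacle; the only points needing care are the identity that the proportional share of the capped instance equals the TPS (which is exactly how the TPS is defined) and the monotonicity step $v_i(S) \ge \hat v_i(S)$ that transfers the capped guarantee back to the true valuation.
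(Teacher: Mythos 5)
Your proof is correct and follows essentially the same route as the paper: cap each item's value at the TPS, note that the capped proportional share equals the TPS, apply Lemma~\ref{lem:safeStrategy} with $k=1$ to the capped valuation, and transfer the guarantee back via monotonicity of the capping. The only (harmless) addition is your explicit argument that the bidding game terminates in a full allocation, which the paper leaves implicit.
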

\begin{proof}
Let $z=\truncated{b_i}{v_i}{\items}$.
Let $v_i'$ be the additive valuation where $v_i'(j) = \min( z,v_i(j)) $ for every item $j\in \items$.
By Definition~\ref{def:TPS} it holds that the proportional share with respect to $v_i'$ is $b_i \cdot  v_i'(\items) = z$. 
Consider the bid-your-max-value strategy with respect to valuation $v_i'$. Since every item $j\in \items$ has a value of at most $z = b_i \cdot v_i'(\items)$,  Lemma~\ref{lem:safeStrategy} shows that agent $i$ receives a bundle $A_i$ of value $v_i'(A_i) \geq \frac{1}{2} \cdot b_i \cdot v_i'(\items) = \frac{z}{2}$.
Since for every item $j \in \items$ it holds that $v_i(j) \geq v_i'(j) $, then it holds that $v_i(A_i) \geq v_i'(A_i) \geq \frac{z}{2} $, and thus agent $i$ gets at least half of $\truncated{b_i}{v_i}{\items}$.
\end{proof}
 
While the analysis of the strategy that gives a $\frac{1}{2-b_i}$ fraction of the  TPS is rather involved and deferred to the appendix, the strategy itself is not very different from the  bid-your-max-value strategy presented above. 
At each round, {bid as follows}. If there is one item that suffices by itself to obtain the goal, bid your whole budget. Else, if two items suffice, bid half your budget (and take both items if you win). 
Otherwise, use the bid-your-max-value strategy. 

We next turn to present the proof of Theorem \ref{thm:app-APS}. 

\begin{proof}(\emph{of Theorem \ref{thm:app-APS}})
{We present three strategies for agent $i$  that plays in the bidding game, 
each ensures one of the required guarantees. 
Among the suggested strategies, depending on her valuation and entitlement, the agent can choose the strategy for which the corresponding guarantee has the highest value.   
	In Lemma~\ref{lem:1bi} we present a strategy that guarantees agent $i$ at least  $\frac{1}{2-b_i}$-fraction of her TPS. 
	In Lemma~\ref{lem:35} we present a strategy that guarantees agent $i$ at least  $60\%$ of her APS.
{	In order to guarantee the $\lfloor 1/b_i\rfloor$ 
ranked item the agent can use the strategy of bidding $b_i$, and selecting the maximal available item if she wins. She is guaranteed to select one of the top $\lfloor 1/b_i\rfloor$  ranked items, since otherwise the other agents spent a budget of {$\lfloor 1/b_i\rfloor \cdot b_i$} 
which is strictly more than their budget $1-b_i$. 
}

One can readily see that all aspects of the strategies above can be implemented in polynomial time, except for one issue.
{While Observation \ref{obs:tps-computation} shows that the TPS can be computed efficiently,}
computing the APS is NP-hard, and hence potentially the agent cannot compute beforehand which of the three guarantees is highest (as she cannot compare $\frac{3}{5}\cdot APS$ with the other two guarantees). Moreover, to run the strategy of Lemma~\ref{lem:35}, it appears that the agent needs to know her own APS. 

There are two ways of addressing this problem. The simple way is to settle for a pseudo-polynomial time algorithm, and use Proposition~\ref{prop:computation-psedu-poly}. However, there is a {better} solution that does give a truly polynomial time algorithm. The basic idea is to show that the strategy of Lemma~\ref{lem:35} does not just guarantee a value of $\frac{3}{5}\cdot APS$, but in fact a value of $\frac{3}{5}\cdot V$, where $V \ge APS$, and moreover, $V$ can be computed exactly in polynomial time. See more details in Appendix~\ref{app:V}.} 
\end{proof}

Lemma~\ref{lem:35} shows that an agent can secure a $\frac{3}{5}$ fraction of her APS. Let us provide here a brief overview of the proof.
In Step~1 of the algorithm, if there is a single item of value at least $\frac{3}{5}\cdot APS$, the agent bids her whole budget, attempting to win the item. In step~2, if there are two items whose sum of values is at least $\frac{3}{5}\cdot APS$, she bids half her budget, attempting to win both items. A difficulty arises if there are items of value strictly less than $\frac{3}{5}\cdot APS$ but strictly more than $\frac{1}{2}\cdot APS$,  the agent bids $\frac{b_i}{2}$ (attempting to win two items), but fails to win these items. In this case the budget of the remaining agents is consumed at a rate that is slower than the rate at which the total value of items decreases. This difficulty is handled in Step~3 of the algorithm. The difficulty is counter-balanced by a positive effect: each of the remaining items has relatively low value, and we have already seen in Lemma~\ref{lem:safeStrategy} that in such cases {the agent can secure herself strictly more than half her TPS (and hence strictly more than half her APS)}. We show that the positive effect is more significant than the negative effect. For this purpose, we introduce Lemma~\ref{lem:3/4}, that for the APS provides better bounds than  Lemma~\ref{lem:safeStrategy} does, though requires stronger assumptions, and its proof is significantly more involved. 

For those readers who prefer not to follow the more involved proof of Lemma~\ref{lem:3/4}, we remark that one can use Lemma~\ref{lem:safeStrategy} instead, without changing anything in the proof of Lemma~\ref{lem:35}. This gives a bound of $\frac{8}{15}\cdot APS$ {(which is already better than half the APS) } instead of $\frac{3}{5}\cdot APS$. 


Specifically, 
let $\items'$ be the set of remaining items at the beginning of step~3 of the strategy described in Lemma~\ref{lem:35}, and let $B'$ be the total remaining budget of all agents at this point.
By the definition of step~2 of the strategy described in Lemma~\ref{lem:35} 
it holds that there are no two items worth together more than $\frac{3}{5} \cdot \anyprice{b_i}{v_i}{\items}$.
Moreover, it holds that the proportional share of agent $i$ with a normalized  budget of $\frac{b_i}{B'}$ over a set of items $\items'$ satisfies:
\begin{eqnarray*}
\proportional{\frac{b_i}{B'}}{v_i}{\items'}  & = & 2\cdot \proportional{\frac{b_i}{2\cdot B'}}{v_i}{\items'} \geq 
2\cdot \anyprice{\frac{b_i}{2\cdot B'}}{v_i}{\items'} \\ &  \stackrel{\eqref{eq:half-anyprice}}{\geq} & \frac{4}{5} \cdot \anyprice{b_i}{v_i}{\items}. 
\end{eqnarray*}
The equality is by definition of proportionality. The first inequality is by Proposition~\ref{prop:APS-pes}.
The second inequality follows directly from  Equation~\eqref{eq:half-anyprice},
{that states  that $\anyprice{\frac{b_i}{2B'}}{v_i}{\items'} \geq\frac{2}{5}\cdot \anyprice{b_i}{v_i}{\items}$
whenever reaching step 3.} 
Thus, we can apply at this point the strategy of Lemma~\ref{lem:safeStrategy} for $k=2$ and guarantee a value of at least $ \frac{2}{3}\cdot \proportional{\frac{b_i}{B'}}{v_i}{\items'}  \geq \frac{8}{15} \cdot \anyprice{b_i}{v_i}{\items}.$


    \begin{lemma}
    	\label{lem:35}
    	Agent $i$ with additive valuation $v_i$ over $\items$ and entitlement $b_i$ has a strategy that guarantees a value of at least $60\%$ of $\anypricei$, regardless of the strategies of the other agents. 
    \end{lemma}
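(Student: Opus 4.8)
Set $A := \anypricei$; the plan is to exhibit an explicit strategy for agent $i$ in the bidding game and show it secures value at least $\tfrac{3}{5}A$. The strategy re‑evaluates the available set $\items^{(t)}$ each round and acts in three phases. In Phase~1, if some single available item has value at least $\tfrac{3}{5}A$, the agent bids her entire remaining budget and, upon winning, takes that item and stops. In Phase~2, if instead some pair of available items has total value at least $\tfrac{3}{5}A$, she bids half her remaining budget and, upon winning, takes both items and stops. In Phase~3, she switches to the bid‑your‑max‑value strategy of Lemma~\ref{lem:safeStrategy}. Because removing items can only decrease the maximum single‑item value and the maximum pair value, the agent passes through these phases monotonically ($1\to 2\to 3$, never returning), so it suffices to bound the value obtained in whichever phase the game terminates.

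For Phases~1 and~2 the argument is a routine budget count. If the agent ever wins she is immediately done: a winning item in Phase~1 is worth at least $\tfrac{3}{5}A$, and a winning pair in Phase~2 is worth at least $\tfrac{3}{5}A$ (and at bid $b_i^{(t)}/2$ she can afford both). Since only the winner pays, as long as she keeps losing she spends nothing of her own budget. While she loses in Phase~1 every winning opponent pays strictly more than $b_i$ per item, and in Phase~2 strictly more than $b_i/2$ per item; as the opponents jointly hold budget $1-b_i$, each phase is finite, and the only way it ends without the agent winning is that all high‑value items (respectively high‑value pairs) have been removed, which is exactly the transition to the next phase. Consequently the agent enters Phase~3 still holding her full budget $b_i$.

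The crux, and what I expect to be the main obstacle, is Phase~3. Let $\items'$ and $B'$ denote the remaining items and total remaining budget at its start; the agent's budget $b_i$ is a $\tfrac{b_i}{B'}$ fraction of $B'$, and by the failure of the Phase~2 test no two items of $\items'$ sum to more than $\tfrac{3}{5}A$. The key quantitative step is to lower bound the APS of the residual instance, namely $\anyprice{\frac{b_i}{2B'}}{v_i}{\items'} \geq \tfrac{2}{5}A$. I would prove this via the dual, weight‑based definition (Definition~\ref{def:anyprice-bundles}): starting from a fractional packing over subsets of $\items$ that certifies the original value $A$, and charging the value removed from $\items\setminus\items'$ against the budget the opponents spent to remove it, one rescales and restricts to obtain a feasible packing over $\items'$ at budget $\tfrac{b_i}{2B'}$ of value at least $\tfrac{2}{5}A$. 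Combining this with $\proportional{\frac{b_i}{2B'}}{v_i}{\items'} \geq \anyprice{\frac{b_i}{2B'}}{v_i}{\items'}$ (Proposition~\ref{prop:APS-pes}, see also Proposition~\ref{prop:shares-comp}) and doubling the budget yields the residual proportional‑share bound $\proportional{\frac{b_i}{B'}}{v_i}{\items'} \geq \tfrac{4}{5}A$.

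It then remains to convert this residual proportional guarantee into secured value using the max‑value strategy. Since every pair of residual items sums to at most $\tfrac{3}{5}A \leq \tfrac{4}{5}A$, no $2$‑subset exceeds the residual proportional share, so Lemma~\ref{lem:safeStrategy} applies with $k=2$ in the residual game and already gives a $\tfrac{2}{3}$ fraction, i.e. $\tfrac{2}{3}\cdot\tfrac{4}{5}A=\tfrac{8}{15}A$, which beats $\tfrac12 A$. To reach the full $\tfrac{3}{5}A$ I would instead invoke the sharper Lemma~\ref{lem:3/4}, which under these small‑item, APS‑specific assumptions upgrades the fraction from $\tfrac{2}{3}$ to $\tfrac{3}{4}$, giving $\tfrac{3}{4}\cdot\tfrac{4}{5}A=\tfrac{3}{5}A$; its more delicate proof is the second main obstacle. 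Picking among the three strategies (this one, the $\tfrac{1}{2-b_i}$‑TPS strategy, and the $\lfloor 1/b_i\rfloor$‑ranked‑item strategy) the one with the highest guaranteed value then proves the full theorem.
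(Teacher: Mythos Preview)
Your outline matches the paper's proof almost exactly: the same three-phase strategy, the same reduction to the residual instance $(\items',B')$, and the same key inequality $\anyprice{\frac{b_i}{2B'}}{v_i}{\items'}\ge\tfrac{2}{5}A$ proved via the weight-based Definition~\ref{def:anyprice-bundles}. One correction: Lemma~\ref{lem:3/4} does \emph{not} say the agent secures a $\tfrac{3}{4}$ fraction of her proportional share; it says she secures $\tfrac{3}{2}\cdot\anyprice{\frac{b_i}{2}}{v_i}{\items}$, which in general is \emph{at most} $\tfrac{3}{4}$ of the proportional share (since APS $\le$ PS). The paper therefore applies Lemma~\ref{lem:3/4} directly to your bound on the half-budget APS, getting $\tfrac{3}{2}\cdot\tfrac{2}{5}A=\tfrac{3}{5}A$; the $\tfrac{4}{5}A$ proportional-share detour is used only for the weaker $\tfrac{8}{15}$ argument via Lemma~\ref{lem:safeStrategy}. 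Relatedly, make sure your Phase~3 strategy is actually the strategy of Lemma~\ref{lem:3/4}, not bid-your-max-value.
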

    \begin{proof}
    	Let $\bval=\anypricei$. By Definition~\ref{def:anyprice-bundles} there exists a  list of bundles $\mathcal{S}=\{S_j\}_j$ 
    	and associated nonnegative weights $\{\lambda_j\}_j$ such that:
    	
    	\begin{itemize}
    		\item $\sum_j \lambda_j = 1$.
    		\item For every item $e$ we have $\sum_{j | e\in S_j} \lambda_j \leq b_i$.
    		\item $v_i(S_j) \ge \bval$ for every {$S_j \in \mathcal{S}$}. 
    	\end{itemize}
    	
    	
    	We propose the following strategy for the agent.
    	
    	\begin{enumerate}
    		\item If $x_i^{(t)} \ge \frac{3\bval}{5}$, bid $b_i^{(t)}$. If the agent wins, she selects {an item with the highest value, among the available items.}  
    		\item Else, if $x_i^{(t)} +y_i^{(t)} \ge \frac{3\bval}{5}$, bid $\frac{b_i}{2}$. If the agent wins any such round, she selects
    		{a pair of items with the highest total value, among the available items.} 
    		\item Let $t'$ be the first round such that $x_i^{(t')} +y_i^{(t')} < \frac{3\bval}{5}$ (so the pre-conditions for both steps above do not hold). 
    		Let the total remaining budget be $B' =\sum_j b_j^{(t')}$.
    		Consider a new instance in which the budget of every agent $j$ is $b_j=\frac{b_j^{(t')}}{B'}$.
    		Agent $i$ runs the strategy of Lemma~\ref{lem:3/4} on the instance defined by these budgets and the set of items $\items' = \items^{(t')}$. 
    	\end{enumerate}
    	
    	If the agent wins a bid in either step~1 or step~2 then we are done. Hence we may assume that the agent does not win any such bid.
  %
    	We show that if the agent does not win any item in steps~1 and~2, then 
    	\begin{equation}
    	\gamma\stackrel{\Delta}{=} \anyprice{\frac{b_i}{2B'}}{v_i}{\items'} \geq\frac{2}{5} \cdot \anyprice{b_i}{v_i}{\items}. \label{eq:half-anyprice}
    	\end{equation} 
    	
    	If the agent plays according to the strategy defined in Lemma~\ref{lem:3/4},  the agent guarantees a value of $\frac{3}{2}\cdot \gamma \geq \frac{3\bval}{5}$, as desired.
    	
    	To complete the proof it remains to prove Equation~(\ref{eq:half-anyprice}). 
    	To prove that it holds we 
    	use the  bundles $\mathcal{S}=\{S_j\}_j$ and associated nonnegative weights $\{\lambda_j\}_j$,
    	and
    	construct bundles $\mathcal{S'}=\{S'_j\}_j$ and weights $\{\beta_j\}_j$, as required by Definition~\ref{def:anyprice-bundles}, for entitlement $\frac{b_i}{2B'}$ and share that is at least $\frac{2\bval}{5}$. 
    	Let $Q = Q_1 \cup Q_2$ denote the set of items consumed by the adversary in the first two steps ($Q_1$ for rounds of step~1, $Q_2$ for rounds of step~2). 
    	For each $S_j \in \mathcal{S}$ 
    	\begin{itemize}
    		\item If $S_j \cap Q_1 \neq \emptyset$ or $|S_j \cap Q_2 |>1 $, then $S_j$ is discarded.
    		\item If $S_j \cap Q_1 = \emptyset$ and $|S_j \cap Q_2 |= 1 $, then $S_j \setminus Q_2$ is added to $\mathcal{S'}$ with weight $\frac{\lambda_j}{2B'}$.
    		\item If $S_j \cap Q_1 = \emptyset$ and $|S_j \cap Q_2 | = 0 $, then $S_j$ is split to $C_j,D_j$ as defined below, and both $C_j$ and $D_j$ are added to $\mathcal{S'}$, each with weight $\frac{\lambda_j}{2B'}$.
    	\end{itemize}
    	The way we split $S_j$ to $C_j,D_j$ is as follows.
    	We start by setting both $C_j,D_j=\emptyset$, and go over  the items in $S_j$ in order of decreasing values.
    	As long as $v_i(C_j)<\frac{2\bval}{5}$ we add an item to $C_j$. When	$v_i(C_j)\geq \frac{2\bval}{5}$, we add all remaining items to $D_j$. 
    	In order to prove that $\gamma \geq \frac{2\bval}{5}$, we need to show that all sets in $\mathcal{S'}$ have a value of at least $\frac{2\bval}{5}$.
    	By definition $v_i(C_j) \geq \frac{2\bval}{5}$.
    	It holds that $v_i(C_j) \leq \frac{3\bval}{5}$ since the first two items in $S_j$ are worth together less than $\frac{3\bval}{5}$ (otherwise we are not finished with step~2). If their value is at least $\frac{2\bval}{5}$, then no more items will be added to $C_j$. Else, the second item has a value of at most $\frac{\bval}{5}$, and therefore any other item in $S_j$ will increase the value of $C_j$ by  at most $\frac{\bval}{5}$.
    	Thus, $v_i(D_j) = v_i(S_j) - v_i(C_j) \geq \bval -\frac{3\bval}{5}=\frac{2\bval}{5}$.
    	The second type of sets that are added to $\mathcal{S'}$, are sets $S_j \setminus Q_2$ where $|S_j \cap Q_2| = 1$ (and $|S_j \cap Q_1| = 0$).
    	For these sets, notice that since items in $Q_2$ have a value of at most $\frac{3\bval}{5}$, then it holds that $v_i(S_j\setminus Q_2) \geq v_i(S_j) -\frac{3\bval}{5} \geq \bval-\frac{3\bval}{5}=\frac{2\bval}{5}$.
    	Observe that  for every item $e \in \items'$, if it belongs to $S_j$, then the weight for the corresponding sets added to $\mathcal{S'}$ is at most $\frac{\lambda_j}{2B'}$. This is since we either discard this set, or $e$ is added to exactly one set (notice that $e$ cannot be in both $C_j$ and $D_j$).
    	Thus, the total weight of every element is at most $\frac{b_i}{2B'}$ as needed.
    	It remains to show that the sum of weights of the sets added to $\mathcal{S'}$ is at least 1.
    	
    	Let $\alpha_1= \sum _{j\; | \; S_j \cap Q_1 \neq \emptyset}\lambda_j$ denote 
    		the sum of weights of all sets that have items from $Q_1$. Let $\alpha_2= \sum _{j \; | \; 
    		|S_j \cap Q_2| > 1 \wedge  
    		|S_j \cap Q_1| =0} \lambda_j$ denote the sum of weights of all sets that have at least two items from $Q_2$ and no item from $Q_1$. Let $\alpha_3= \sum_{j \; | \; 
    		|S_j \cap Q_2| =1 \wedge 
    		|S_j \cap Q_1| =0}\lambda_j$ denote the sum of weights of all sets that have one item from $Q_2$ and no item from $Q_1$.
    		
    	Observe that $\alpha_1 \leq b_i \cdot |Q_1|$, since every item is in at most $b_i$ weight of the sets.
    	Likewise, $2\cdot \alpha_2 +\alpha_3 \leq b_i \cdot |Q_2|$. 
    	The sum of weights of all sets that have no items from $Q_1 \cup Q_2$ is at least $\alpha_4 \geq 1-\alpha_1-\alpha_2-\alpha_3 $.
    	
    	In addition, it holds that $B'=1-|Q_1| \cdot b_i- |Q_2| \cdot \frac{b_i}{2}$ since every item in $Q_1$ was paid $b_i$, and every item in $Q_2$ was paid $\frac{b_i}{2}$.
    	The sum of weights of $\mathcal{S'}$ is $$\frac{\alpha_3}{2B'} +2 \frac{\alpha_4}{2B'} \geq \frac{\alpha_3 + 2-2\alpha_1-2\alpha_2-2\alpha_3}{2B'} \geq \frac{ 2-2b_i\cdot |Q_1|-b_i\cdot |Q_2|}{2(1-|Q_1| \cdot b_i- |Q_2| \cdot \frac{b_i}{2})}  = 1, $$
    	which proves inequality~(\ref{eq:half-anyprice}) and concludes the proof.
    \end{proof}
    
    The next lemma (whose proof is deferred to Appendix~\ref{app:allocation-game}) 
    states that there is a strategy for agent $i$ that guarantees herself a value of $\frac{3}{2}\cdot \anyprice{\frac{b_i}{2}}{v_i}{\items}$. This Lemma (with a suitable assignment to the 
    parameters $b_i$ and $\items$) is used in step~3 of Lemma~\ref{lem:35}.
     
    \begin{restatable}{lemma}{lemThreeFour}
	\label{lem:3/4}
	Agent $i$ with additive  valuation $v_i$ and  entitlement   $b_i$   
	has a strategy that guarantees herself a value of at least $\frac{3}{2}\cdot \anyprice{\frac{b_i}{2}}{v_i}{\items}$ in the bidding game 
	that allocates $\items$, regardless of the strategies of the other agents.
\end{restatable}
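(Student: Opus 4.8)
The plan is to exhibit an explicit threshold bidding strategy and to track, round by round, a potential that certifies the agent can still reach her goal. Write $z=\anyprice{b_i/2}{v_i}{\items}$ and set the target $G=\frac{3}{2}z$. I would work from the dual (packing) form of the share, Definition~\ref{def:anyprice-bundles}: there is a distribution $\{\lambda_T\}$ over bundles, each of value at least $z$, in which every item is covered with total weight at most $b_i/2$. The extra factor of two between the game budget $b_i$ and the covering budget $b_i/2$ is precisely the slack I want to convert into the extra factor $\frac{3}{2}$ in value; note also that by Proposition~\ref{prop:shares-comp} the full proportional share satisfies $b_i\cdot v_i(\items)\ge 2z$.

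The strategy I would analyze is the natural multi-threshold refinement of bid-your-max: at a round with remaining items $\items^{(t)}$ and remaining own budget $b_i^{(t)}$, if a single available item is worth at least $G$ then bid the whole budget $b_i^{(t)}$ (and take that item on winning); else if the two most valuable available items sum to at least $G$ then bid $b_i^{(t)}/2$ (and take both on winning); otherwise, in the ``small'' regime where no two items reach $G$, fall back to the bid-your-max strategy of Lemma~\ref{lem:safeStrategy}. If the agent ever wins in one of the first two cases she already secures value $G$, so the entire content is in showing she cannot be shut out of all three.

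The core of the argument is an invariant of the form $V^{(t)}+\frac{3}{2}\,\anyprice{\frac{b_i^{(t)}}{2B^{(t)}}}{v_i}{\items^{(t)}}\ge \frac{3}{2}z$, where $V^{(t)}$ is the value already collected by agent $i$ and $B^{(t)}$ is the total remaining budget of all agents. It holds with equality initially (by the definition of $z$ and $B^{(1)}=1$), and at termination $\items^{(t)}=\emptyset$ forces the share term to vanish, so $V\ge\frac{3}{2}z$. For the inductive step I would perform a cover transformation in the spirit of the proof of Lemma~\ref{lem:35}: when the agent wins, $V^{(t)}$ jumps by at least the relevant threshold and the claim is immediate; when she loses, some adversary removes items while paying, per item, at least the agent's current bid, so $B^{(t)}$ drops proportionally while $b_i^{(t)}$ is unchanged. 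I would then rebuild the packing on the surviving items --- discarding bundles that meet a removed high-priced item and splitting the remaining bundles (exactly as the $C_j,D_j$ split in Lemma~\ref{lem:35}) so that each surviving piece still has value at least the appropriately scaled target --- and verify, using the per-item coverage bound $b_i/2$ together with the amount the adversary just spent, that the reconstructed weights still sum to at least one. The simultaneous shrinking of $\items^{(t)}$ and of $B^{(t)}$ is what lets the normalized term $\frac{b_i^{(t)}}{2B^{(t)}}$ rise enough to offset the lost bundles.

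The step I expect to be the real obstacle is the loss case for ``medium'' items --- those of value strictly between roughly $z/2$ and $G$, which escape both the single- and the pair-item thresholds yet are large enough that the adversary buying them consumes budget more slowly than it destroys value. Here the crude $\frac{k}{k+1}$ accounting of Lemma~\ref{lem:safeStrategy} is too weak (with the full proportional share $\ge 2z$ it yields only $\frac{4}{3}z$ at $k=2$), so the improvement to $\frac{3}{2}z$ must come from the finer packing-based bookkeeping, matching removed items against bundle weights through the $b_i/2$ coverage constraint and showing the reconstructed packing never loses total mass below one. Getting the constants in the split thresholds to line up so that this accounting closes at exactly $\frac{3}{2}$ is the delicate part, and is presumably why the authors defer the full proof to the appendix.
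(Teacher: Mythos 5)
Your proposal is not yet a proof: it names the decisive case and then defers it. You correctly observe that if the agent never wins, the losses in your two threshold regimes (bids of $b_i^{(t)}$ and $b_i^{(t)}/2$) can be charged against the adversary's budget, but the whole difficulty of Lemma~\ref{lem:3/4} lives in your fall-back regime, where the agent bids only $\frac{x_i^{(t)}}{s_i}$ per item. There the adversary pays as little as $\frac{x_i^{(t)}}{s_i}$ for an item whose weight in the packing may be as large as $\frac{b_i}{2}$, so packing mass is destroyed much faster than budget is spent, and your invariant $V^{(t)}+\frac{3}{2}\anyprice{\frac{b_i^{(t)}}{2B^{(t)}}}{v_i}{\items^{(t)}}\ge\frac{3}{2}z$ cannot be maintained. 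Nor can the $C_j,D_j$ splitting of Lemma~\ref{lem:35} rescue it: that trick works there because bundles of value $\ge z$ only need to yield pieces of value $\ge\frac{2z}{5}$, whereas your invariant needs surviving pieces of value $\ge z$, and a bundle of value $z$ cannot be split into two such pieces. So the argument collapses exactly where you say it does, and falling back on Lemma~\ref{lem:safeStrategy} with $k=2$ gives only $\frac{2}{3}\cdot b_i\cdot v_i(\items)\ge\frac{4}{3}z$, short of the target. Acknowledging that ``the constants must be made to line up'' is not a substitute for making them line up; that is the entire content of the lemma.

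For comparison, the paper's proof does not use a static pair threshold or a packing-transformation invariant at all. Its strategy caps all item values at $z$ (WLOG), and its aggressive rule is adaptive: bid the \emph{whole} remaining budget whenever $x_i^{(t)}+u_i^{(t)}\ge\frac{3z}{2}$, i.e.\ whenever a single additional item would complete the goal, and otherwise bid $\frac{x_i^{(t)}}{s_i}$. The analysis shows the adversary can win an ``underpriced'' item only after the agent has already won two items $w_1,w_2$ with $w_1+2w_2>2z$ (hence $w_2>\frac{z}{2}$ and $b_i^{(t)}>\frac{b_i}{4}$ thereafter); a charging argument against the dual packing (Proposition~\ref{pro:charging}) shows the first stage cannot exhaust all items without exhausting all budget; and the residual game is handled by applying Lemma~\ref{lem:safeStrategy} not to $v_i$ but to the rescaled valuation $\hat{v}_i(e)=\min\bigl[b_i^{(r)}\cdot s_i,\ \frac{b_i-2b_i^{(r)}}{b_i^{(r)}}\,v_i(e)\bigr]$, with the packing used once more to lower-bound the remaining $\hat{v}_i$-value by $B^{(r)}\cdot s_i$. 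The final constant $\frac{3}{2}$ then closes by explicit algebra combining the first-stage value $s_i(b_i-b_i^{(r)})$ with the second-stage guarantee, not by any renormalized-weights accounting. If you want to complete your route, you would need a genuinely new mechanism for the fall-back regime; as written, the gap you flag is the theorem.
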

\section{APS Approximation for Additive Agents: Equal Entitlements}
\label{sec:greedy-efx}
In this section we consider approximating the APS in the special case of equal entitlements. 
For only two agents with equal entitlements,  there is always an allocation giving both their APS, which equals to their MMS (Observation \ref{obs:APS-two-equal}).
We saw (Proposition \ref{prop:shares-comp}) that beyond two agents, even for equal entitlements, the APS is different than the shares defined in prior work (like MMS).
As the APS is at least as large as the MMS (Proposition \ref{prop:shares-comp}), and as \citet{KurokawaPW18} showed that for three agents with equal entitlements, simultaneously  giving every agent her MMS is not possible, we cannot hope to give every agent her APS exactly, and thus look for an approximation.
While for arbitrary entitlements we saw that each agent can guarantee herself at least a $\frac{3}{5}$ fraction of the APS (Theorem \ref{thm:app-APS}),  in the equal entitlements case we 
show that the \textit{greedy-EFX} algorithm proposed by \citet{BK20} gives a stronger guarantee.

In this section we prove the following theorem.
\begin{theorem}
	\label{thm:BK17}
	When $n$ agents have additive valuations and equal entitlements, there is an allocation  that gives every agent at least the minimum of the following two values: a $\frac{3}{4}$ fraction of her APS, and a $\frac{2n}{3n-1}$ fraction of her TPS. 
	Moreover, such an allocation can be found in polynomial time.
\end{theorem}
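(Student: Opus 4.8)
The plan is to run the greedy-EFX algorithm of \citet{BK20} on the instance, which produces in polynomial time an allocation $A=(A_1,\dots,A_n)$ that is EFX: for every pair of agents $i,j$ and every good $g\in A_j$ one has $v_i(A_i)\ge v_i(A_j)-v_i(g)$. Fixing an agent $i$ and writing $t=v_i(A_i)$ and $\mu_j=\min_{g\in A_j}v_i(g)$, the EFX guarantee gives $v_i(A_j)\le t+\mu_j$ for every $j$, and in particular $v_i(A_j)\le 2t$ whenever $|A_j|\ge 2$ (since then $\mu_j\le\tfrac12 v_i(A_j)$). I would show that this single allocation satisfies, for every agent simultaneously, $v_i(A_i)\ge\min\{\tfrac34\anypricei,\ \tfrac{2n}{3n-1}\truncatedi\}$, by proving that whenever an agent falls short of the TPS bound she must meet the APS bound. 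The two regimes are separated by how large the goods sitting in the other bundles are relative to agent $i$'s share.

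For the TPS bound, passing to the truncated valuation $v_i'(g)=\min(v_i(g),\truncatedi)$ turns the instance into a bounded one in which the proportional share equals $\truncatedi$ and no good exceeds it (Definition~\ref{def:TPS}); since $v_i\ge v_i'$, any lower bound proved for $v_i'$ transfers to $v_i$. On such bounded instances the summed EFX inequalities $n\cdot\truncatedi=v_i'(\items)=t+\sum_{j\ne i}v_i'(A_j)\le t+\sum_{j\ne i}(t+\mu_j)$, together with the greedy structure of \citet{BK20} (which, beyond plain EFX, controls the value of the last good added to each richer bundle and thereby rules out the ``all bundles a single large good'' configuration) tighten the per-bundle estimate enough to yield $t\ge\tfrac{2n}{3n-1}\truncatedi$ in the regime where all goods are small. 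I would pin down the exact constant by a bag-filling/counting argument on the capped instance.

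For the APS bound, in the complementary regime (some other bundle contains a good of value comparable to $i$'s share) I would use the price-based definition (Definition~\ref{def:anyprice-prices}) to \emph{upper} bound $\anypricei$ by $\tfrac43 t$: assign every bundle $A_j$ a total price of $1/n$ (so prices total $1$), but distribute that $1/n$ within $A_j$ so as to overcharge its most valuable good, using the EFX fact $v_i(A_j)-\mu_j\le t$ to argue that with budget $b_i=1/n$ agent $i$ cannot afford a set worth more than $\tfrac43 t$. Since by Proposition~\ref{prop:TPS-APS} we have $\anypricei\le\truncatedi$, combining the two regimes and taking the worse of the two guarantees yields the claimed minimum, and the polynomial running time is inherited from \citet{BK20}.

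The main obstacle is the $\tfrac34$ APS bound. The naive pricing that splits each bundle's $1/n$ proportionally to $v_i$ lets agent $i$ buy an entire most-valuable two-good bundle, and hence only certifies $\anypricei\le 2t$, i.e. $t\ge\tfrac12\anypricei$, short of $\tfrac34$ (this is consistent with plain EFX yielding only a $\tfrac{n}{2n-1}$-fraction of the TPS, as in the $2n-1$ identical-goods example). Reaching $\tfrac34$ therefore requires genuinely exploiting the greedy-EFX construction rather than the bare EFX inequalities: the delicate point is designing the price vector (or, dually, a bundle certificate for Definition~\ref{def:anyprice-bundles}) that uses the structural guarantee of \citet{BK20} to prevent agent $i$ from assembling a cheap high-value set, and then checking that the two regimes patch together cleanly into a single bound holding for every agent at once.
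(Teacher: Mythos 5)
Your proposal follows the same route as the paper's proof---run greedy-EFX, then argue that an agent who misses the $\frac{2n}{3n-1}$ TPS bound must satisfy the $\frac34$ APS bound, certifying the latter by exhibiting a price vector for Definition~\ref{def:anyprice-prices}---but it stops exactly where the proof begins, and you say so yourself: ``the delicate point is designing the price vector.'' That design \emph{is} the technical content of the paper's Lemma~\ref{lem:BK17}. Two concrete ingredients are missing. First, the reduction to \emph{ordered} instances (Theorem~\ref{thm:BL14}, due to \citet{bouveret2016characterizing}): greedy-EFX is run on the ordered version, and only there does the structure you want to exploit exist in usable form---items enter each bundle in decreasing value, the bundle $B_k$ (largest index with more than two items) has first item $e_k$ and second item $e_{2n-k+1}$ with $v_i(e_k)+v_i(e_{2n-k+1})\le v_i(A_i)$, and first/second items of distinct bundles interleave monotonically. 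The paper's price vectors are written directly in terms of these positions (price $1$ for items in singleton bundles, $\frac12$ or $\max[\frac12,v_i(e)-\frac13]$ for the items $e_{k_1+1},\dots,e_{2n-k}$, and $\frac32 v_i(e)$ or $v_i(e)$ for the remaining ``small'' items), with the cases $v_i(e_k)>\frac56$ and $\frac34<v_i(e_k)\le\frac56$ requiring two different constructions (Claims~\ref{claim:5/6} and~\ref{claim:middle}), plus a separate pricing when every bundle has at most two items (Claim~\ref{claim:atMost2}). None of this is recoverable from the bare EFX inequalities, as your own $2t$ counterexample correctly indicates.

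Second, even your TPS regime is not closed as sketched: summing the capped EFX inequalities $v_i'(A_j)\le t+\mu_j'$ with the only available bound $\mu_j'\le \truncatedi$ yields merely $t\ge \truncatedi/n$, far short of $\frac{2n}{3n-1}$. What the paper actually uses (Claim~\ref{claim:3/4}) is that in the regime $v_i(e_k)\le\frac34$ every multi-item bundle has value at most $\frac32\,t$---for bundles after $B_k$ because both their items are worth at most $\frac34 t$, and for earlier ones because EFX bounds the bundle minus its last item by $t$ while the ordered structure forces that last item below $\frac12 t$---and that singleton bundles can be stripped away by a TPS-monotonicity argument in the spirit of Lemma~\ref{lem:ktps}. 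The threshold $\frac34$ separating the regimes, which you leave unspecified, is precisely what makes both sides of the case analysis land on the constants $\frac34$ and $\frac{2n}{3n-1}$ simultaneously. In short, your proposal is a correct road map of the paper's argument, but the steps that constitute the proof---the ordered-instance reduction, the exact regime threshold, and the explicit price constructions---are absent.
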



Recall that for $n \le 2$ there is an allocation that gives every agent at least her AnyPrice share (see Observation~\ref{obs:APS-two-agents}). Hence there is interest in Theorem~\ref{thm:BK17} only when $n \ge 3$. 
Note that Theorem \ref{thm:BK17} ensures that each agent always gets at least $\frac{2}{3}$ of  her APS (as the TPS is always at least the APS).

In proving Theorem~\ref{thm:BK17}, we shall follow an approach of~\citet{BK20} who proved the existence of allocations that give every agent at least a $\frac{2n}{3n-1}$ fraction of her MMS. The algorithm is identical to that of~\citet{BK20}, but the proof of the key lemma (Lemma~\ref{lem:BK17}) is different, as we need to establish stronger guarantees than those established in~\citep{BK20}.

The first step of the proof of Theorem~\ref{thm:BK17} is a reduction to {\em ordered} instances. Given an instance of an allocation problem with additive valuations and a fixed order $\sigma$ among the items (from~1 to $m$), its {\em ordered version} is obtained by each agent permuting the values of items so that values are non-increasing in $\sigma$. The following theorem is due to~\citep{bouveret2016characterizing}.

\begin{restatable}{theorem}{thmordered}	(\citep{bouveret2016characterizing})	   
	\label{thm:BL14}
	For every instance with additive valuations, every allocation for its ordered version can be transformed in polynomial time to an allocation for the original instance. Using this transformation, every agent derives at least as high value in the original instance as derived by the allocation for the ordered instance.
\end{restatable}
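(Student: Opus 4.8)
\textbf{Proof plan for Theorem~\ref{thm:BL14}.}

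The plan is to prove the statement by explicitly constructing the transformation and verifying that it never loses value for any agent. Let the fixed order $\sigma$ be given, say $\sigma$ orders the items as $1,2,\ldots,m$. In the ordered version, each agent $i$ sorts her own values in non-increasing order, so that the $j$-th item in $\sigma$ receives the $j$-th largest value under $v_i$. First I would fix an arbitrary allocation $B=(B_1,\ldots,B_n)$ for the ordered instance and describe how to recover an allocation $A=(A_1,\ldots,A_n)$ for the original instance. The natural device is a sequential ``picking'' procedure that scans the items in the order $\sigma$ (i.e.\ from the highest-ranked position downward); at the step corresponding to position $j$, the agent who owns position $j$ in the ordered allocation $B$ is allowed to pick her most-valued item (under her \emph{original} valuation $v_i$) from among the items of the original instance that have not yet been claimed.

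The key step is then the value comparison. I would argue by a ``rank domination'' or exchange argument: when it is agent $i$'s turn at the $r$-th position she is assigned in $\sigma$, the number of items already removed is at most the number of positions scanned before, so at least $m-(\text{that number})$ items remain, and in particular one of agent $i$'s top-ranked original items (by value) is still available whenever it is needed. More precisely, I would set up the invariant that after processing the first $k$ positions of $\sigma$, each agent $i$ has received, under the original valuation, a set whose value is at least the value she derives from her share of the first $k$ positions under the ordered valuation. Since in the ordered instance agent $i$'s value for the item in position $j$ is exactly her $j$-th largest original value, and since the greedy picking ensures agent $i$ always takes an item at least as valuable (in her original valuation) as the ordered-instance item she is ``accounting for,'' the inequality $v_i(A_i)\ge v_i^{\mathrm{ord}}(B_i)$ follows by summing over her assigned positions.

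The main obstacle I anticipate is making the matching between original items picked and ordered-instance positions rigorous, so that the per-agent summation of rank-by-rank inequalities is valid simultaneously for \emph{all} agents (the difficulty is that different agents would prefer different items, and one must show the greedy scan never forces an agent to pick an item worse than the ordered-instance item she is entitled to). I would handle this by a counting argument: at the moment agent $i$ is to pick for the item she holds at $\sigma$-position $j$, fewer than $j$ items have been removed in total (since exactly the items at positions $1,\ldots,j-1$ have been processed), so at least $m-(j-1)$ items of the original instance survive; among any such surviving set there is an item whose original value is at least agent $i$'s $j$-th largest original value, which is precisely her ordered value for position $j$. Finally I would note that the scan over $m$ items with a constant amount of bookkeeping per step runs in polynomial time, and that each of the (at most $m$) picks is a simple maximization, giving the claimed polynomial-time transformation and completing the proof.
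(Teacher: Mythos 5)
Your proposal is correct and follows essentially the same route as the paper's proof: the paper also converts the ordered allocation into a choosing sequence (the owner of the $r$-th position picks in round $r$), lets each agent greedily take her best remaining item under her original valuation, and invokes the pigeonhole principle to guarantee that at round $r$ some item of value at least her $r$-th largest original value is still available. Your rank-by-rank accounting and counting argument is exactly the rigorous form of that pigeonhole step, so there is nothing to add.
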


For completeness, we sketch the proof of Theorem~\ref{thm:BL14} in Appendix~\ref{app:greedy-efx}.

Given Theorem~\ref{thm:BL14}, it suffices to prove Theorem~\ref{thm:BK17} in the special case in which the input instance is ordered (and $n \ge 3$). For such instances
we use an algorithm proposed by~\citep{LiptonMMS04,BK20}, that we refer to as {\em Greedy-EFX}.  Greedy-EFX assumes that the instance is ordered, with item $e_1$ of highest value and item $e_m$ of lowest value. 


The Greedy-EFX works as follows (see formal definition  in Appendix \ref{app:greedy-efx}).
Each bundle is associated with one agent. Greedy-EFX proceeds in rounds.  In rounds~1 to $n$ the items $e_1$ up to $e_n$ are placed in bundles $B_1$ up to $B_n$ respectively. At a round $r > n$, if there is an agent whose bundle no one envies, then $e_r$ is added to her bundle. Else, if there is no such agent, then there must be {envy cycles (in such an envy cycle, each agent envies the next in the cycle).
An envy cycle can be resolved by rotating the sets along the cycle, each agent getting the set of the agent she envies.
The cycles are iteratively resolved until there are no more envy cycles. At this point, there must be an agent whose bundle no one envies, and then $e_r$ is added to her bundle.}

Clearly, the greedy-EFX algorithm runs in polynomial time. Hence in order to prove Theorem~\ref{thm:BK17}, it suffices to prove the following lemma.

\begin{restatable}{lemma}{thmgreedyEFX}		   
	\label{lem:BK17}
	When $n$ agents have additive valuations and equal entitlements, the greedy-EFX allocation gives every agent at least the minimum of the following two values: a $\frac{3}{4}$ fraction of her APS, or a $\frac{2n}{3n-1}$ fraction of her TPS.
\end{restatable}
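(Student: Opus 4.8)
The plan is to follow the route set up in the excerpt: first reduce to ordered instances, then analyze the greedy-EFX output for a single fixed agent. By Theorem~\ref{thm:BL14} it suffices to prove the bound on ordered instances, and since Observation~\ref{obs:APS-two-agents} handles $n\le 2$, I assume $n\ge 3$ and $b_i=1/n$ throughout. The first thing I would record is the structural property that drives everything: the greedy-EFX output is an EFX allocation, and because items enter in non-increasing order of value (the instance is ordered) and are only ever placed in a currently non-envied bundle, within every final bundle $A_j$ the last item added is a minimum-value item $g_j$ of that bundle; moreover envy-cycle resolution only relabels which agent holds which item-set, so these facts survive to the final allocation. The EFX guarantee then reads, for the fixed agent $i$ and every bundle $A_j$, $v_i(A_i)\ge v_i(A_j\setminus\{g_j\})=v_i(A_j)-v_i(g_j)$. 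Finally I would use the elementary equivalence that $v_i(A_i)\ge\min\{X,Y\}$ holds if and only if $v_i(A_i)\ge X$ or $v_i(A_i)\ge Y$; hence it suffices to show, for each agent, that she gets either a $\frac34$-fraction of her APS or a $\frac{2n}{3n-1}$-fraction of her TPS, and I am free to choose whichever is easier in each case.

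The engine for the TPS branch is a capping-plus-averaging argument. Let $\tau=\truncated{b_i}{v_i}{\items}$ and cap the valuation, writing $\hat v(g)=\min(v_i(g),\tau)$; by Definition~\ref{def:TPS} we have $\sum_{g}\hat v(g)=n\tau$, and a short calculation on the concave function $t\mapsto\frac1n\sum_g\min(v_i(g),t)$ shows that at most $n-1$ items have $v_i>\tau$. If $A_i$ contains such a large item then $v_i(A_i)>\tau\ge\anyprice{b_i}{v_i}{\items}$ by Proposition~\ref{prop:TPS-APS}, so both target fractions are beaten and we are done. Otherwise every item of $A_i$ is small, so $\hat v(A_i)=v_i(A_i)$, and combining the EFX inequality with capping gives $\hat v(A_i)\ge\hat v(A_j)-\hat v(g_j)$ for every $j$. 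Summing over $j\ne i$ and using $\sum_j\hat v(A_j)=n\tau$ yields
\begin{equation*}
n\cdot v_i(A_i)\ \ge\ n\tau-\sum_{j\ne i}\hat v(g_j).
\end{equation*}
It then remains to control the total EFX-loss $\sum_{j\ne i}\hat v(g_j)$. For a bundle with $|A_j|\ge 2$, every item of $A_j$ has value at least $v_i(g_j)$, so $v_i(A_j)\ge|A_j|\,v_i(g_j)$ and, with the EFX inequality, $\hat v(g_j)\le\min\bigl(\tau,\ v_i(A_i)/(|A_j|-1)\bigr)$; these terms are small whenever the bundles are not tiny. Solving the resulting self-referential inequality for $v_i(A_i)$ is what produces the factor $\frac{2n}{3n-1}$.

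The obstruction to this averaging argument is the presence of many singleton bundles holding a large item: for these $|A_j|=1$, the EFX inequality is vacuous and $\hat v(g_j)$ can be as large as $\tau$, so the loss term need not be small. This is exactly the regime I would route to the APS branch, and it is here that the proof must depart from the MMS analysis of~\citet{BK20}. The tool is the pricing characterization of the APS (Definition~\ref{def:anyprice-prices}): to upper-bound $\anyprice{b_i}{v_i}{\items}$ by $\tfrac43 v_i(A_i)$ it suffices to exhibit a single price vector of total price $1$ under which every bundle costing at most $b_i=1/n$ has $v_i$-value at most $\tfrac43 v_i(A_i)$. Since there are at most $n-1$ large items, pricing the large items so that their prices sum to $1$ makes each of them cost strictly more than $1/n$, hence individually unaffordable; the agent can then afford only small items, whose extractable value is already controlled by the capping argument (and, when several small items must be bundled, by a bag-filling bound in the spirit of Lemma~\ref{lem:safeStrategy}). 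Equivalently one can argue on the packing side (Definition~\ref{def:anyprice-bundles}): because every item carries packing weight at most $1/n$, a distribution of value-$\ge\text{APS}$ bundles cannot concentrate on the few large items, which caps how far the APS can exceed the value the allocation already secures.

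I expect the genuine difficulty to lie in the last two steps: the exact accounting of the loss $\sum_{j\ne i}\hat v(g_j)$ as a function of the bundle sizes, and the precise threshold at which one stops trusting the TPS-averaging bound and switches to the APS-pricing bound, so that the two constants $\frac{2n}{3n-1}$ and $\frac34$ emerge sharply rather than with slack. The reason APS (and not MMS) is the right partner for TPS inside the minimum is precisely that its pricing/packing description gives quantitative control on how much value can ``hide'' in a bounded number of large singleton items---control that the purely combinatorial MMS definition does not supply---so this is the step where the stronger guarantee promised in the excerpt is actually won.
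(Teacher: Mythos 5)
Your high-level skeleton (reduce to ordered instances, fix one agent, show she gets either $\frac34\cdot APS$ or $\frac{2n}{3n-1}\cdot TPS$) matches the paper, but both of your two engines have genuine gaps, and your rule for switching between them is wrong. First, the TPS branch: your capped-averaging inequality $n\cdot v_i(A_i)\ge n\tau-\sum_{j\ne i}\hat v(g_j)$ cannot yield the factor $\frac{2n}{3n-1}$. For a two-item bundle the EFX loss $\hat v(g_j)$ can be as large as $v_i(A_i)$ itself, so when all other bundles have two items the self-referential inequality gives only $v_i(A_i)\ge\frac{n}{2n-1}\tau$, which is strictly smaller than $\frac{2n}{3n-1}\tau$ for every $n\ge 2$. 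A concrete witness is the instance with $2n-1$ identical items of value $\frac{1}{2n-1}$: greedy-EFX gives some agent one item, $\tau=\frac1n$, the averaging bound is tight at $\frac{n}{2n-1}\tau$, and yet there are \emph{no} items above $\tau$ and no singleton bundles other than $A_i$ --- so your stated routing criterion (``many singleton bundles holding a large item'') would keep this case in the TPS branch, where the target is missed. The paper avoids this by a different split: its key parameter is $v_q(e_k)$, the value of the first item of the \emph{last bundle with at least three items}; if no such bundle exists it proves $APS\le v_q(B_q)$ outright by an explicit pricing (its Claim covering the at-most-two-items case, which is exactly what rescues the identical-items instance), and when $v_q(e_k)\le\frac34$ its TPS bound is not an EFX-loss average at all but a direct proportional-share bound (every bundle other than $B_q$ has value at most $\frac32$, hence $TPS\le PS\le\frac{3n-1}{2n}$, with singletons handled by a removal/monotonicity reduction).

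Second, the APS branch as you describe it fails. If you price the (at most $n-1$) items of value above $\tau$ so that their prices sum to $1$, then the normalization $\sum_j p_j=1$ in Definition~\ref{def:anyprice-prices} forces every remaining item to have price $0$, and the agent simply takes all of them for free; their total value is in no way bounded by $\frac43 v_i(A_i)$, and no ``bag-filling'' or capping argument applies because a zero price imposes no budget constraint whatsoever. The entire technical content of the paper's APS bounds lies precisely in pricing the medium and small items \emph{positively} (e.g., small items at $\frac32 v_q(e)$ or at $v_q(e)$, medium items at $\frac12$ or $\max[\frac12,v_q(e)-\frac13]$) and then verifying, bundle by greedy-EFX bundle, that every bundle costs at most $1$ with at least one costing strictly less --- this is where the thresholds $\frac34$ and $\frac56$ and the constant $\frac43$ actually come from, and none of that accounting is present or derivable from your sketch. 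The dual packing remark at the end of your proposal is likewise only a heuristic; as stated it proves nothing quantitative. So the proposal, as written, does not establish the lemma.
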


The proof of Lemma~\ref{lem:BK17} is based on the following principles. Consider an arbitrary agent $i$. We may assume that {under Greedy-EFX agent} $i$  receives at least one item of positive value, as otherwise her APS is~0 {(as there are less than $n$ items she values).} Without loss of generality, assume that the final value received by $i$ is~1. The main part of the proof focuses on one particular item, $e_k$, which is the item of smallest value among those that serve as a first item to enter a bundle that eventually has more than two items. If $v_i(e_k) \le \frac{3}{4}$, a fairly simple argument shows that the TPS is at most $\frac{3n-1}{2n}$, proving that $i$ received at least a $\frac{2n}{3n-1}$ fraction of her TPS. If $v_i(e_k) > \frac{3}{4}$ then we show that the APS is less than $\frac{4}{3}$, and hence that $i$ gets more than a $\frac{3}{4}$ fraction of the APS. To upper bound the APS, we exhibit prices that certify this upper bound. In fact, the prices that we exhibit depend on the exact value of $e_k$, with one set of prices if $\frac{3}{4} < v_i(e_k) \le \frac{5}{6}$, and a different set of prices when $v_i(e_k) > \frac{5}{6}$. The full proof of Lemma~\ref{lem:BK17} appears in Appendix~\ref{app:greedy-efx}.



  {An immediate corollary of Lemma~\ref{lem:BK17} is that for the case of equal entitlements (or for an entitlement which is an inverse of an integer), the MMS is at least $\frac{3}{4}$ of the APS. 
  \begin{corollary}
  For every additive valuation $v_i$ and an entitlement $b_i =\frac{1}{n}$ for some integer $n$, it holds that $$\MMSfull{n} \geq \frac{3}{4} \cdot \anypricei  $$ 
  \end{corollary}
  \begin{proof}
  Let $S$ be the allocation returned by greedy-EFX when all agents have the same valuation $v_i$.
  By Lemma~\ref{lem:BK17} every agent receives a value of at least $\frac{3}{4}$ of the APS. 
  By definition of the MMS, the least satisfied agent received at most the MMS.
  This concludes the proof.
  \end{proof}
  }

We next show that the analysis of the greedy-EFX Algorithm 
is essentially tight. As the upper bound claimed in the following proposition applies also to the MMS, the proposition may have been previously known. However, we are not aware of an explicit reference to this result, and hence we state it and provide its proof in the appendix. 

\begin{restatable}{proposition}{EFXupper}
\label{pro:EFXupper}
	For every $\epsilon> 0$, there is an instance {with additive valuations and equal entitlements} in which the greedy-EFX algorithm 
	does not provide
	some agent more than a $\frac{2}{3}+\epsilon$ of her MMS (and thus also does not provide more than a $\frac{2}{3}+\epsilon$ of her APS). 
\end{restatable}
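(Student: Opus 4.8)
The plan is to exhibit a single family of ordered instances, parameterized by $\epsilon$, on which Greedy-EFX leaves some agent with only a $\frac{2}{3}+\epsilon$ fraction of her MMS. Since $APS \ge MMS$ (Proposition~\ref{prop:shares-comp}), the same bundle is an even smaller fraction of that agent's APS, so it suffices to carry out the argument for the MMS. The natural construction mirrors the tight example for the $\frac{2n}{3n-1}$ MMS bound from~\citep{BK20}: take $n$ agents with identical additive valuations, and choose item values so that each agent's MMS partition yields bundles of value exactly $1$, while the greedy process is forced to build one ``heavy'' bundle (a single large item together with one or two small items) whose value falls just below $\frac{2}{3}$. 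First I would fix the number of agents $n$ (taking $n=3$ already suffices to get arbitrarily close to $\frac{2}{3}$, though a growing $n$ can also be used) and specify item values. A clean choice is to have three large items of value just under $\frac{1}{2}$ each and a collection of tiny items, arranged so that the optimal partition into $n$ bundles gives each bundle value $1$ (hence $MMS=1$), but Greedy-EFX pairs two large items into one bundle of value just under~$1$ early and is then forced to hand some agent a bundle of value only $\frac{2}{3}+\epsilon$.

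Next I would compute the MMS for this instance explicitly, verifying that the intended partition into $n$ bundles of equal value $1$ is feasible and optimal, so that $MMS=1$ for every agent (identical valuations make this uniform across agents). This is a direct calculation: exhibit the partition for the lower bound, and observe that no partition can do better since the total value is $n$ and all values are nonnegative. Then I would trace the execution of Greedy-EFX on the ordered instance. Because Greedy-EFX in rounds $1$ through $n$ seeds bundles $B_1,\dots,B_n$ with the $n$ largest items (here the large items first), and thereafter adds each subsequent item to a bundle no one envies (resolving envy cycles by rotation), I would argue that the small items accumulate in a way that cannot rescue the bundle that receives a large item last. The key observation is that the envy-freeness-up-to-any-good stopping condition prevents any single small item from being added to an already-large bundle once it would break the EFX invariant, so some agent is stuck with one large item plus an insufficient tail of small items, totaling only $\frac{2}{3}+\epsilon$.

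The main obstacle I anticipate is the bookkeeping of the Greedy-EFX dynamics, specifically controlling which bundle the envy-cycle-resolution step assigns each small item to, and proving that no admissible sequence of additions pushes the worst bundle above $\frac{2}{3}+\epsilon$. Since the order of additions and the tie-breaking in cycle resolution could in principle be favorable, I would either design the values so that the relevant choices are forced (e.g.\ by making the large items strictly the three heaviest so they necessarily seed distinct bundles, and making the small items' total just large enough to reach $1$ in the good bundles but leaving the stuck bundle short), or argue that \emph{every} run of Greedy-EFX on this instance yields the claimed deficient bundle. I would finish by letting the small-item granularity tend to zero so the deficient bundle's value tends to exactly $\frac{2}{3}$ from above, giving the $\frac{2}{3}+\epsilon$ bound for all $\epsilon>0$; invoking $APS\ge MMS$ then yields the parenthetical APS claim, and the reduction to ordered instances (Theorem~\ref{thm:BL14}) ensures the ordered construction is without loss of generality.
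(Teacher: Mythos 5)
Your construction cannot work, for a reason the paper itself makes explicit: you propose to use $n$ agents with \emph{identical} additive valuations, but Theorem~\ref{thm:BK17a} shows that whenever agents have identical additive valuations and equal entitlements, greedy-EFX gives every agent strictly more than a $\frac{3}{4}$ fraction of her APS, hence (since $APS \ge MMS$, Proposition~\ref{prop:shares-comp}) more than a $\frac{3}{4}$ fraction of her MMS; Proposition~\ref{pro:BK17a} complements this by showing that $\frac{3}{4}$ is indeed the right threshold for identical valuations. So no identical-valuations instance, however you tune the large and tiny items, can push some agent down to $\frac{2}{3}+\epsilon$ of her MMS once $\epsilon < \frac{1}{12}$. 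Your fallback claim that ``$n=3$ already suffices'' fails for the same structural reason even with heterogeneous valuations: by Lemma~\ref{lem:BK17}, for $n=3$ every agent receives at least $\min\left\{\frac{3}{4}\,APS_q,\ \frac{2n}{3n-1}\,TPS_q\right\} = \min\left\{\frac{3}{4}\,APS_q,\ \frac{3}{4}\,TPS_q\right\} \ge \frac{3}{4}\,MMS_q$, so witnessing $\frac{2}{3}+\epsilon$ forces $n$ to grow as $\epsilon \to 0$ (one needs at least $\frac{2n}{3n-1} \le \frac{2}{3}+\epsilon$, i.e.\ $n = \Omega(1/\epsilon)$).

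The paper's proof uses exactly the two ingredients your plan omits: heterogeneous valuations and a number of agents growing with $1/\epsilon$. It takes $k = \lceil 1/(2\epsilon)\rceil$ and $n = (4^{k+1}-1)/3$ agents, with items arranged in groups $0,\dots,k$; the victim, agent~1, values the large item of group $a$ at $1 - \frac{a}{2k}$ and each small item of group $a$ at $\frac{a}{2k}$, while every other agent values all large items above $\frac{1}{2}$ and all small items below $\frac{1}{2}$. Because the other agents disagree with agent~1 about the worth of the small items, the run of greedy-EFX strands agent~1 with only the group-$0$ items, worth $1$ to her, whereas her own MMS partition (pairing two large items of group $a$ with a small item of group $a-1$, and grouping high-group small items in triples) gives every bundle value at least $\frac{3}{2}-\epsilon$. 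This disagreement between the victim and the rest is what drives the ratio below $\frac{3}{4}$; it is unachievable with a single common valuation, so the gap in your proposal is not a matter of bookkeeping the greedy dynamics or of letting the small-item granularity tend to zero, but of choosing a class of instances on which the claimed bound provably cannot occur.
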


In the appendix we also show that if all agents have 
identical additive
valuation functions (and equal entitlements), then greedy-EFX gives every agent at least a $\frac{3}{4}$ of her APS. This ratio is the best 
possible guarantee that holds for greedy-EFX,
even for MMS, as we show that for every $\epsilon > 0$ there are such instances in which greedy-EFX does not give an agent more than a $\frac{3}{4} + \epsilon$ fraction of her MMS.


 

\section{Discussion}\label{sec:discussion}

In this paper we have studied fair allocation for agents with arbitrary (not necessarily equal) entitlements. We have introduced the concept of the AnyPrice share (APS) and have presented a positive constant fair-share  approximation result for agents with arbitrary entitlements,  showing that for agents with additive valuation functions there always is an allocation that gives each agent a $\frac{3}{5}$ fraction of her APS. Following the statement of Theorem~\ref{thm:intro-unequal}, we claimed that (as far as we know), this is the first positive result that guarantees a constant fraction of {\em any} share to agents with arbitrary entitlements. We clarify here what we mean by the term ``share".

\subsection{The Notion of a {\em Share}}
Intuitively,  a share of an agent with some valuation and relative entitlement, represents what value she can reasonably expect to receive in any setting in which the items are partitioned, without any further knowledge (as the valuations of others, or the way entitlements are distributed among others).

Consider an allocation setting with a set $\items$ of items, in which every valuation function $v_i$ is normalized ($v_i(\emptyset)=0$) and monotone ($v_i(S) \le v_i(T)$ for every $S \subseteq T \subseteq \items$). 
We think of a share as a function $f$ that maps the valuation function $v_i$ of an agent and her normalized entitlement $b_i$ (where $b_i\geq 0$ and 
$\sum_i b_i=1$) to a value $f(v_i,b_i)$, where this function $f$ needs to satisfy some natural properties, such as the following: 

\begin{itemize}
    \item {\em Normalization:} $f(v_i,b_i)\ge 0$, with equality if (though not necessarily only if) either $v_i$ is identically~0, or $b_i = 0$.
    \item {\em Boundedness:} for every $v_i$ and $b_i$ it holds that $f(v_i,b_i) \le v_i(\items)$, with equality if (though not necessarily only if) $b_i = 1$.
    \item {\em Entitlement monotonicity:} for every $v_i$ and $b'_i > b_i$ it holds that $f(v_i,b'_i) \ge f(v_i,b_i)$.
    \item {\em Valuation monotonicity:} for every $v'_i \ge v_i$ (meaning that $v'_i(S) \ge v_i(S)$ for every $S \subseteq \items$) and every $b_i$ it holds that $f(v'_i,b_i) \ge f(v_i,b_i)$.
    \item {\em Value Scaling:}  if the valuation function is scaled by a multiplicative factor of $c$, then the share is scaled by the same multiplicative factor. (Note that we do not require scaling with respect to the entitlement.)
\end{itemize}

Of course, for a notion of a share to be useful, we want $f$ to have additional properties beyond those listed above. It should strike a good balance between being as large as possible, yet still allowing for allocations that give every agent some substantial fraction of her share. We shall not attempt to rigorously define what makes a notion of a share useful.

When one defines a share as a function satisfying all the above properties, then 
the proportional share, the pessimistic share and our AnyPrice share (APS) are indeed shares. Likewise, the maximin share (MMS) is a share, if one restricts the entitlements $b_i$ to be of the form $b_i = \frac{1}{k_i}$ for an integer $k_i > 1$. The weighted maximin share (WMMS) {of \citet{farhadi2019fair}} is not a share under our definition, as its value depends not only of $v_i$ and $b_i$, but also on how the remaining $1 - b_i$ entitlement is distributed among the remaining agents. However, we may relax our definition of a share so that it also includes WMMS (this requires a certain adaptation of the entitlement monotonicity property). Among all the above notions of shares, indeed Theorem~\ref{thm:intro-unequal} is the first positive result that guarantees a constant fraction of a share to agents with arbitrary entitlements. This guarantee is given with respect to the AnyPrice  share, and hence it holds also with respect to the pessimistic share (which is never larger than the AnyPrice  share).

To further clarify our use of the term share, let us briefly discuss Prop1, which is a ``close relative" to the proportional share. An allocation is Prop1 if every agent gets at least her proportional share, up to one item. Prop1 is a property of a solution rather than a mapping of $v_i$ and $b_i$ to a value, and hence it is not a notion of a share. However, one can try to define a share based on Prop1. A natural definition would be a value equal to the proportional share, minus the value of the most valuable item (namely, $b_i v_i(\items) - \max_{e\in \items} v_i(e)$), as this is a lower bound on the value that the agent gets in any Prop1 allocation.\footnote{An alternative definition of a value based on Prop1 would be as the smallest value of a bundle, among all bundles for which adding one item not in the bundle makes the value of the bundle larger than the proportional share. The distinction between these two definitions is not important for our discussion.}
However, this definition does not satisfy valuation monotonicity (by increasing the value of the most valuable item, the ``share" value decreases rather than increases). Hence we do not view neither Prop1 nor its derivatives as a notion of a share. 

\subsection{Interpretation of the AnyPrice  Share as an Analog of {\em Cut and Choose}}


The MMS can be thought of as being motivated by the {\em cut-and-choose} paradigm, where the agent plays the role of the ``cutter". The value of the MMS for an agent is the maximum value she can secure to herself if she were to partition (``cut") the grand bundle into $n$ bundles, and allow every other agent to choose a bundle before she does. The APS can also be thought of as being motivated by the cut-and-choose paradigm, as we explain below. 

Recall that we provided two equivalent definitions for the {APS}.  Definition~\ref{def:anyprice-bundles} (that we referred to as the dual definition) can be thought of as a natural variation on the definition of MMS. Instead of insisting on an integral partition into bundles, one allows a fractional partition, with the advantage that fractional partitions can be easily extended to situations in which agents do not have equal entitlements. In this respect, the APS is motivated by MMS, and hence, indirectly by the cut-and-choose paradigm. However, even in the case of equal entitlements, the APS is not the same as the MMS, and so some of the cut and choose motivation (with the agent as the cutter) may be lost in this adaptation. 

Consider now the price based definition of APS, Definition~\ref{def:anyprice-prices}. This definition can be viewed as being motivated by the cut-and-choose paradigm, but with the agent playing the role of the chooser, not the cutter. 
Recall that the APS depends only on the valuation function of the agent and her entitlement, but not on valuation functions and entitlements of other agents. Hence pretend that there are no other agents. Instead, there is just a single current owner of all items, and our agent that has a $b_i$ entitlement to the items. Our agent comes to claim  her fair share of items from this owner, whereas the owner is interested in keeping the items for himself. Which items should our agent be allowed to take? Here we may use a price-and-choose paradigm, in the spirit of the cut-and-choose paradigm. As the situation is not symmetric (there is a current owner of the item and our agents that has claims on the items), the current owner is the one who is allowed to price the items in any way he wishes, provided that the prices are non-negative and sum up to~1. Now our agent with a budget of $b_i$ plays the role of the chooser, and is allowed to choose any bundle that she can afford. Her APS is the maximum value that she can guarantee herself as a chooser in such price-and-choose games.

\subsection{Directions for Future Work and Open Questions}

Several approximability results appearing in this paper provide constant factor approximation ratios that are not known to be best possible. It would be interesting to improve over any of them. We list the known lower bounds and upper bounds for three such results (all for additive valuation functions).

\begin{enumerate}
    \item What is the largest possible gap between the APS and the MMS when agents have equal entitlements (or equivalently, when the entitlement $b_i$ is the inverse of a positive integer)? Theorem~\ref{thm:BK17a} shows that in this case the value of the MMS is at least a $\frac{3}{4}$ fraction of the value of the APS, and Lemma~\ref{lem:APSlargerPESS} shows that sometimes the ratio is no better than $\frac{96}{97}$.
    \item How much of the APS can one guarantee to agents with equal entitlements? Theorem~\ref{thm:BK17} shows that there always is an allocation that gives every agent at least a $\frac{2}{3}$ fraction of her APS. In~\citep*{FST21} it is shown that there are allocation instances for which there is no allocation that gives every agent more than a $\frac{39}{40}$ of her MMS. This negative result applies also to the APS (as the APS is at least as large as the MMS).
    \item How much of the APS can one guarantee to agents with arbitrary entitlements?  Theorem~\ref{thm:intro-unequal} shows that there always is an allocation that gives every agent at least a $\frac{3}{5}$ fraction of her APS. The negative result of $\frac{39}{40}$ still applies (as equal entitlements is a special case of arbitrary entitlement).
\end{enumerate}

A natural direction for future work for additive valuations is to obtain best-of-both-worlds results for agents with arbitrary entitlements, for example, simultaneously achieving ex-ante proportionality as well as APS approximation ex-post. Such results are known for agents with equal entitlements \citep*{babaioff2021bestofbothworlds}.

More generally, one would like to study the APS for additional classes of valuation functions, such as submodular valuations. 

\bibliographystyle{abbrvnat}

\bibliography{bib}

\appendix
\section{Some Formal Definitions}
\label{sec:missing-defs}

In this section we present some formal definitions that are omitted from the body of the paper. 

\begin{definition}[Competitive Equilibrium (CE)]\label{def:CE}
	Given the valuations of the agents $\vals=(v_1,\ldots,v_n)$, and a vector of budgets $\budgets=(b_1,\ldots,b_n)$, a \emph{Competitive Equilibrium (CE)} is a pair $(A,p)$ of an allocation $A$ {of all items} and a vector of item prices $p$, where the price of a bundle received by each agent is at most her budget (i.e., $p(A_i) = \sum_{j\in A_i} p_j \leq b_i$), and every agent's bundle is utility maximizing under her budget (i.e., for all $S\subseteq \items$, $p(S) \leq b_i \implies v_i(S) \leq v_i(A_i)$).   
\end{definition}

\begin{definition}[Weighted maximin share (WMMS)]
The \emph{weighted maximin share (WMMS)} of agent $i$ with valuation $v_i$ over $\items$,  when the vector of entitlements is $(b_1,b_2,\ldots,b_n)$, which is denoted by $WMMS_i(b,v_i,\items)$,	
	is defined to be the maximal value $z$ such that there is an allocation that gives each agent $k$ at least $\frac{z\cdot b_k}{b_i}$ according to $v_i$.
Formally:
 $$WMMS_i(b,v_i,\items) = \max_{(A_1,\ldots,A_n)\in\mathcal{A}  }\  \min_{k \in[n] } \frac{b_i \cdot v_i(A_k)}{b_k}$$
\end{definition}

\section{Properties of The Truncated Proportional Share (TPS)}\label{app:TPS}
Observation \ref{obs-TPS-mush-larger-APS} shows that the TPS might be much larger than the APS.

\begin{observation}\label{obs-TPS-mush-larger-APS}
	For any $\varepsilon>0$ 
	there exist an additive valuation and an entitlement $b_i$ such that $\varepsilon\cdot \TPSbi\geq \anypricebi$.
\end{observation}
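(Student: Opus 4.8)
The plan is to prove the observation by exhibiting an explicit family of additive instances together with a well-chosen entitlement $b_i$, and showing that the ratio $\TPSbi / \anypricebi$ can be driven above any prescribed bound $1/\varepsilon$. The two shares will be controlled by different tools: to lower-bound $\TPSbi$ I would work directly with the fixed-point Equation~\eqref{eq:tps} of Definition~\ref{def:TPS}, arranging the instance so that no item value exceeds the resulting $\TPSbi$, in which case $\TPSbi = \proportionalbi = b_i\cdot v_i(\items)$ is pinned at a fixed positive constant; to upper-bound $\anypricebi$ I would use the price-based Definition~\ref{def:anyprice-prices}, producing a single explicit price vector $(p_1,\dots,p_m)\in\prices$ under which every bundle the agent can afford (cost at most $b_i$) has tiny value. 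The whole statement then reduces to the two one-line estimates ``$\TPSbi \ge c$'' and ``$\anypricebi \le \varepsilon c$'' for the chosen instance.

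The mechanism I would rely on is the integrality gap between the continuous fixed point defining the TPS and the discrete bundles actually available to the agent. The prototype is the instance with $m$ identical items of value $1$ and entitlement $b_i$ with $b_i m$ slightly below $2$: here Definition~\ref{def:TPS} gives $\TPSbi = b_i m$ (no item exceeds the proportional share), whereas uniform pricing $p_j = 1/m$ forces the agent to afford only $\lfloor b_i m\rfloor = 1$ item, so $\anypricebi = 1$. This already exhibits the qualitative phenomenon that the adversarial pricing certificate of Definition~\ref{def:anyprice-prices} can be far below the proportional benchmark that the TPS tracks, and it illustrates exactly which calculations must be carried out (evaluating the TPS fixed point, and evaluating the best affordable bundle under the exhibited prices). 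The construction for the full statement layers this single-scale gap across several geometrically separated value scales, so that the adversary can simultaneously price the agent out of a valuable bundle at every scale while the capped sum $b_i\sum_j \min(v_i(j),z)$ defining the TPS continues to accumulate value from all scales at once.

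The step I expect to be the main obstacle is precisely this decoupling, and it must be handled with care in the construction and in the choice of $b_i$: naively pricing the valuable items out of reach to shrink $\anypricebi$ also tends to shrink $\proportionalbi$ and hence $\TPSbi$, since by Proposition~\ref{prop:shares-comp} the two shares are sandwiched between the same pessimistic and proportional bounds. The design therefore has to keep the total (capped) value large — so that the TPS fixed point stays at a fixed constant — while ensuring that the agent's budget $b_i$ is too small to purchase any single affordable bundle of comparable value under the exhibited prices. Verifying that the proposed price vector is genuinely a valid certificate for the claimed APS upper bound (i.e. that no affordable subset beats it) is the delicate part; once that certificate is in hand, together with the direct TPS computation from Definition~\ref{def:TPS}, the inequality $\varepsilon\cdot\TPSbi \ge \anypricebi$ follows immediately for the chosen family and arbitrary $\varepsilon>0$.
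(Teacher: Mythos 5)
Your overall framework (pin the TPS at the proportional share by ensuring no item exceeds it, then kill the APS with one explicit price certificate) is exactly the paper's, and your prototype is correct as far as it goes --- but it only yields a factor-$2$ separation, and the step you yourself flag as the main obstacle, amplifying this by layering the gadget ``across geometrically separated value scales so that the TPS accumulates value from all scales at once,'' is not merely unproven: it provably cannot work. The obstruction is the lower bound $\anypricebi \ge \pessimisticbi$ (Proposition~\ref{prop:APS-pes}), equivalently the partition certificates of Definition~\ref{def:anyprice-bundles}. Any value sitting in items individually small compared to the proportional share is captured by the APS as well: if every item has value at most $s$, a greedy bin-covering argument partitions $\items$ into $d=\lceil 1/b_i\rceil$ bundles each of value at least $v_i(\items)/d - s \ge \frac{1}{2}\proportionalbi - s$, so $\anypricebi \ge \frac{1}{2}\TPSbi - s$. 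In your layered construction all items have value at most about half the proportional share, so the APS stays within a small constant factor of the TPS no matter how many scales you stack; the per-scale ``floor'' losses do not compound, because the agent can substitute across scales. Indeed, in your own prototype repeated at $T$ scales, the partition that gives each of the $\approx 1/b_i$ bundles two items from every scale already certifies $\anypricebi = \Omega(\TPSbi)$. One can push this argument further: any instance with an unbounded ratio must concentrate essentially all value in roughly $1/b_i$ items, each of value comparable to the proportional share, with the total value of everything else negligible --- the exact opposite of ``accumulating value from all scales.''

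The paper's construction uses a qualitatively different gadget, and this is the missing idea: the unbounded gap comes from ``deserve just under \emph{one} item, afford \emph{zero},'' not from compounding ``deserve two, afford one.'' Take $b_i = \frac{1}{k+\epsilon}$, together with $k$ items of value $\frac{1}{k+\epsilon}$ each and a single junk item of value $r=\frac{\epsilon}{k+\epsilon}$. Pricing each big item at $\frac{1}{k} > b_i$ (spending the entire unit of price mass) and the junk at $0$ leaves the agent able to afford only the junk, so $\anypricebi = r$ (it is also at least $r$, since under any pricing the cheapest of the $k+1$ items costs at most $\frac{1}{k+1} < b_i$). Meanwhile no item exceeds the proportional share $b_i \cdot v_i(\items) = b_i$, so $\TPSbi = \proportionalbi = b_i$, and the ratio is $1/\epsilon$. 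Note the role of the junk item, for which your plan has no analogue: it must be negligible (otherwise, by the argument above, the APS would pick it up), yet it is indispensable, because without it this instance has $\TPSbi = 0$ --- the fixed point of Equation~\eqref{eq:tps} collapses to $z=0$ when $b_i k \neq 1$ and there is no small item to sustain it. Its sole purpose is to keep the TPS fixed point at the full proportional share while pinning the APS at an arbitrarily small value.
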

\begin{proof}
	Consider the case of entitlement $b_i=\frac{1}{k+\epsilon}$ 
	and $k+1$ items, $k$ of them with value $b_i$, and one item with value $r=\frac{\epsilon}{k+\epsilon}$.
	The APS is $r=\frac{\epsilon}{k+\epsilon}$, since if we set the prices of the high value items at price $\frac{1}{k}$, while pricing the low value item at $0$, the agent can only afford the low value item. The APS is not lower since the agent can always afford at least one item.
	In contrast, the TPS is $ \frac{1}{k+\epsilon}$, since no item is worth more than the entitlement, so the TPS is the same as the proportional share.
\end{proof}

The next observation shows that the TPS can be computed efficiently.

\begin{observation}
\label{obs:tps-computation}
	Consider the problem of computing, for any integer additive valuation $v_i$ and entitlement $b_i$ (were $b_i \leq 1$  is a rational number), the truncated proportional $\truncatedi$.
	There is an algorithm that computes the TPS in time polynomial in the input size.
\end{observation}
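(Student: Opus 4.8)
The plan is to recognize that the TPS is defined by a single-variable fixed-point equation for a concave piecewise-linear function, and then to solve that equation by inspecting a linear number of linear pieces. Recall from Definition~\ref{def:TPS} that $\truncatedi$ is the largest $z\ge 0$ with $b_i\sum_{j\in\items}\min(v_i(j),z)=z$. Write $f(z)=b_i\sum_{j\in\items}\min(v_i(j),z)$. Being a nonnegative combination of the concave nondecreasing maps $z\mapsto\min(v_i(j),z)$, the function $f$ is continuous, nondecreasing, concave and piecewise linear, with $f(0)=0$ and breakpoints only at the distinct item values. Hence $g(z)=f(z)-z$ is concave with $g(0)=0$, its nonnegativity region $\{z\ge 0:g(z)\ge 0\}$ is an interval $[0,z^\star]$, and by continuity $g(z^\star)=0$ while $g(z)<0$ for all $z>z^\star$. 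Thus $z^\star$ is the largest root of $g$, i.e.\ the TPS, and it suffices to locate $z^\star$.

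First I would sort the item values $v_{(1)}\ge v_{(2)}\ge\cdots\ge v_{(m)}$ (with the conventions $v_{(0)}=+\infty$ and $v_{(m+1)}=0$) and precompute the suffix sums $S_k=\sum_{j=k+1}^{m}v_{(j)}$. On the slab $[v_{(k+1)},v_{(k)}]$ exactly the top $k$ items have value at least $z$, so there $f(z)=b_i(kz+S_k)$ and the equation $g(z)=0$ becomes affine with solution
$$z_k=\frac{b_i\,S_k}{1-b_i\,k}\qquad(\text{valid when } b_ik\neq 1).$$
For each $k\in\{0,1,\dots,m\}$ I would compute $z_k$ and test the consistency condition $v_{(k+1)}\le z_k\le v_{(k)}$; a short argument from the concavity of $g$ (whose unique sign change lies in exactly one slab) shows that a consistent $k$ exists and that the corresponding $z_k$ equals $z^\star=\truncatedi$. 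In the degenerate case $b_ik=1$ the piece of $g$ has slope $0$ and constant term $b_iS_k$: if $S_k=0$ the whole slab consists of roots and one outputs its right endpoint $v_{(k)}$, otherwise the slab contains no root and is skipped. The algorithm returns the (largest) consistent value.

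The procedure performs one sort and examines $m+1$ candidates, each obtained by a constant number of rational operations, so it runs in polynomial time; writing $b_i=p/q$ in lowest terms, each candidate is $z_k=\frac{p\,S_k}{q-pk}$, a ratio of integers of size polynomial in the input, so all comparisons are exact and cheap. I expect the only delicate points---and hence the main (though mild) obstacle---to be twofold: first, verifying that the consistency test returns the \emph{largest} root $z^\star$ rather than a spurious smaller fixed point, which is exactly what the concavity of $g$ guarantees; and second, the correct treatment of the degenerate slope-one slabs $b_ik=1$, handled as above. Every remaining step is routine one-variable linear algebra.
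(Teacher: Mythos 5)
Your proof is correct, but it takes a genuinely different route from the paper's. The paper computes the TPS by an iterative peeling procedure: it first checks whether the highest-valued item exceeds the proportional share $b_i\cdot v_i(\items)$ (if not, the TPS equals the proportional share); otherwise it removes that item, rescales the entitlement to $\frac{b_i}{1-b_i}$, and recurses on the smaller instance, with correctness supplied by Lemma~\ref{lem:ktps}; since each iteration removes an item, at most $m$ iterations suffice. You instead solve the defining fixed-point equation of Definition~\ref{def:TPS} directly: after sorting, the function $g(z)=b_i\sum_j\min(v_i(j),z)-z$ is concave and piecewise affine with breakpoints at the item values, so you enumerate the $m+1$ affine pieces, solve each in closed form ($z_k=\frac{b_i S_k}{1-b_i k}$), and return the largest consistency-checked root. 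The two are closely related under the hood --- unrolling the paper's recursion $k$ times produces exactly your denominator $1-b_ik$ (this is visible in the statement of Lemma~\ref{lem:ktps}) --- but your argument is self-contained (it does not invoke Lemma~\ref{lem:ktps}), yields an explicit closed-form answer in one pass, and handles the degenerate slabs $b_ik=1$ explicitly, whereas the paper's proof is shorter because it reuses a lemma already needed elsewhere (e.g., in Lemma~\ref{lem:1bi}). One remark: your appeal to concavity, though stated informally, is sound and is the right tool --- since $g$ is concave with $g(0)=0$ and $g\to-\infty$, its root set is either $\{0\}\cup\{z^\star\}$ or the full interval $[0,z^\star]$, so every consistent candidate is a root not exceeding $z^\star$, the slab containing $z^\star$ produces $z^\star$ itself (as an affine root or as the right endpoint of a degenerate slab), and taking the maximum over consistent candidates is therefore guaranteed to return the TPS rather than a spurious smaller fixed point.
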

\begin{proof}
{Compute the proportional share $PS_i = b_i \cdot v_i(\items)$. Let $j$ denote an item of highest value under $v_i$. If $v_i(j) \le PS_i$ then the TPS is $PS_i$. If $v_i(j) > PS_i$ then there are two cases to consider.
In one case, $b_i \ge \frac{1}{2}$. In this case the TPS is $z = \frac{b_i}{1-b_i} \cdot v_i(\items \setminus \{j\})$. This can be verified, as after reducing the value of $v_i(j)$ to $z$, no item other than $j$ has value larger than $z$ (because $\frac{b}{1 - b} \ge 1$), and it indeed holds that $z = b_i\cdot (\min[z, v_i(j)] + v_i(\items \setminus \{j\})) = b_i(z + \frac{1-b_i}{b_i}z) = z$. In the other case $b_i < \frac{1}{2}$. In this case
Lemma~\ref{lem:ktps} implies that the TPS equals  $\frac{b_i}{1-b_i} \cdot \truncated{\frac{b_i}{1-b_i}}{v_i}{\items\setminus \{j\}}$. As in this case the number of items decreases, it takes no more than $m$ iterations to compute the TPS.} 
\end{proof}

 

\begin{proposition}\label{prop:TPS-CE}
	There exists an instance with additive valuations $\vals$ and entitlements $\budgets$ such that no allocation gives every agent her  truncated proportional share (or her proportional share), yet a competitive equilibrium with entitlements as budgets does exist.
\end{proposition}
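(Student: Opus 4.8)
The plan is to prove the proposition by exhibiting one explicit instance and verifying the two required properties separately. I would take two agents with entitlements $\budgets=(b_1,b_2)=(\tfrac{2}{3},\tfrac{1}{3})$ and a set $\items$ of four identical items, each valued at $1$ by both agents (so $\vals=(v_1,v_2)$ with $v_1=v_2$ and $v_i(\items)=4$). The guiding idea is to place the entitlements and the number of items in the regime where each agent's truncated proportional share is strictly positive and in fact equals her proportional share, yet the integrality of the identical, indivisible items makes it impossible to meet both shares simultaneously.

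First I would compute the shares. Since every item has value $1$, which is below the proportional share of each agent ($\proportional{b_1}{v_1}{\items}=\tfrac{8}{3}$ and $\proportional{b_2}{v_2}{\items}=\tfrac{4}{3}$, both larger than $1$), no truncation takes effect in Definition~\ref{def:TPS}, so $\TPSbi=\proportionalbi$ for each agent $i$; concretely $TPS(b_1)=\tfrac{8}{3}$ and $TPS(b_2)=\tfrac{4}{3}$. In any allocation each agent receives an integer number of items, hence an integer value. To give agent~$1$ her TPS we need value at least $\tfrac{8}{3}$, i.e.\ at least $3$ items, and to give agent~$2$ her TPS we need value at least $\tfrac{4}{3}$, i.e.\ at least $2$ items; together this demands at least $5$ items, exceeding the $4$ available. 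Thus no allocation gives both agents their TPS, and since here the proportional share equals the TPS (and is never smaller), no allocation gives both agents their proportional share either.

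It remains to exhibit a competitive equilibrium with entitlements as budgets. I would price every item at $p_j=\tfrac{1}{5}$ and allocate three items to agent~$1$ and one item to agent~$2$. Under this pricing agent~$1$ can afford at most $\lfloor \tfrac{2/3}{1/5}\rfloor=3$ items and agent~$2$ at most $\lfloor \tfrac{1/3}{1/5}\rfloor=1$ item, so each agent's bundle is a value-maximizing affordable set within her budget; the bundle prices $\tfrac{3}{5}\le\tfrac{2}{3}$ and $\tfrac{1}{5}\le\tfrac{1}{3}$ do not exceed the budgets, and all four items are allocated. Hence $(A,p)$ satisfies Definition~\ref{def:CE} and is a competitive equilibrium.

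I do not expect a serious obstacle, as this is a pure construction; the only delicate point is calibrating the example so that all three requirements hold at once. The tension is that shrinking the entitlements enough for indivisibility to bite tends to drive the TPS down to $0$ (whenever an agent's entitlement times the number of items she values drops below $1$), which would make her TPS trivially satisfiable, while at the same time the CE prices must clear the market exactly. Choosing four identical items with entitlements $\tfrac{2}{3},\tfrac{1}{3}$ resolves this tension, and the same instance sharpens the contrast with the APS, since by Observation~\ref{obs:anyPrice-CE} the competitive equilibrium above \emph{does} give every agent her AnyPrice share.
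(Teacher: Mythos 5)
Your proposal is correct and takes essentially the same approach as the paper: the paper also uses identical items shared by two agents with unequal entitlements (three items of value $\tfrac{1}{3}$ each, entitlements $0.4$ and $0.6$, with CE prices $0.4,0.3,0.3$), so that the TPS equals the proportional share, an integrality/counting argument rules out any allocation giving both agents their TPS, and an explicit price vector certifies that a CE exists. Your variant with four items and entitlements $(\tfrac{2}{3},\tfrac{1}{3})$ verifies all the same points, and in fact spells out the ``no allocation'' counting step more explicitly than the paper does.
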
  
\begin{proof}
	Consider an instance with two agents $1,2$, and three identical items $\items =\{a,b,c\}$. Every agent has value of $\frac{1}{3}$ for each item.
	The agents entitlements are $b_1=0.4, b_2=0.6$.
	The allocation where agent 1 receives item  $a$ while agent 2 receives both $b$ and $c$, with prices $p_a =0.4,p_b=p_c=0.3$, is a CE, and thus competitive equilibria exist.
	Moreover, in every CE,  agent~2 receives at least as many items as agent~1 (since she has a larger budget), and thus, she must receive two items, as all items must be allocated in a CE.
	Thus, in every CE of this instance, agent~1 receives a value of $\frac{1}{3}$. 
	However, in this instance the TPS of every agent is the same as her proportional share (and the same as the entitlement). So
	$\truncated{b_1}{v_1}{\items} =0.4>\frac{1}{3}$, and agent 1 gets less than her TPS in every CE.
\end{proof}

\citet{babaioff2021bestofbothworlds} observe that even for equal entitlements, it is not possible to give every agent more than half her TPS. More precisely, for any constant $\rho$ larger than half, 
there is an instance (with equal entitlements) in which some agent does not get $\rho$-fraction of her TPS.

\begin{observation} \label{obs:TPS-app-UB}
	For every $n $, there exists an instance with $n$ agents with equal entitlements and identical valuations, such that in every allocation at least one agent gets at most $\frac{n}{2n-1}$-fraction of her TPS. 
\end{observation}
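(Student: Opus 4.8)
The plan is to exhibit a single explicit instance and finish with a pigeonhole count. Take $n$ agents with equal entitlements $b_i=\frac{1}{n}$ and a common additive valuation $v$, and let the item set $\items$ consist of $2n-1$ identical items, each of value $\frac{1}{2n-1}$. Then $v(\items)=(2n-1)\cdot\frac{1}{2n-1}=1$, so the proportional share of each agent is $b_i\cdot v(\items)=\frac{1}{n}$. This is precisely the instance already used right after Theorem~\ref{thm:app-APS} to show that the $\frac{1}{2-b_i}$ bound on the TPS is tight, here specialized to $b_i=\frac{1}{n}$.

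First I would compute the TPS of a generic agent. Since $2n-1\ge n$ for $n\ge 1$, each item has value $\frac{1}{2n-1}\le\frac{1}{n}$, so no single item exceeds the proportional share. Taking $z=\frac{1}{n}$ we then have $\min(v(j),z)=v(j)$ for every item $j$, whence $b_i\cdot\sum_{j\in\items}\min(v(j),z)=\frac{1}{n}\cdot v(\items)=\frac{1}{n}=z$, so $z=\frac{1}{n}$ satisfies the defining equation of Definition~\ref{def:TPS}. Hence $\truncated{b_i}{v}{\items}=\frac{1}{n}$ for every agent, i.e.\ the TPS coincides with the proportional share.

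Next I would apply pigeonhole to an arbitrary allocation $A=(A_1,\ldots,A_n)$. The $2n-1$ items are split among $n$ agents; if every agent received at least two items, the total number of allocated items would be at least $2n>2n-1$, a contradiction. Thus some agent $i$ receives at most one item, so $v(A_i)\le\frac{1}{2n-1}$, and comparing against her TPS gives
\[
\frac{v(A_i)}{\truncated{b_i}{v}{\items}}\le\frac{1/(2n-1)}{1/n}=\frac{n}{2n-1}.
\]
Since $A$ was arbitrary, in every allocation at least one agent gets at most a $\frac{n}{2n-1}$ fraction of her TPS, as claimed.

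I do not expect any real obstacle here: the argument is a one-line counting bound once the instance is chosen. The only point needing (minor) care is the verification that the truncated proportional share equals the ordinary proportional share in this instance, which reduces to the observation that truncating at level $z=\frac{1}{n}$ leaves every item value unchanged because no item is worth more than $\frac{1}{n}$.
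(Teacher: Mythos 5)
Your proof is correct and uses exactly the same construction as the paper: $2n-1$ identical items of value $\frac{1}{2n-1}$ each, TPS equal to the proportional share $\frac{1}{n}$, and a pigeonhole argument showing some agent receives at most one item. The only difference is that you spell out the verification that the TPS equals $\frac{1}{n}$ via Definition~\ref{def:TPS}, which the paper states without detail.
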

\begin{proof}
	Consider an instance with $n$ agents and $2n-1$ identical items, each of value $\frac{1}{2n-1}$ for all agents.
	The TPS of every agent is $\frac{1}{n}$, while in every allocation there is at least one agent that receives only one item, so her value is at most $\frac{1}{2n-1} \leq  \frac{n}{2n-1}\cdot \frac{1}{n}$
	, as desired.
\end{proof}

\section{Missing Proofs from Section \ref{sec:aps}}
\propDefEq*
\begin{proof}
Let $z_D$ (resp., $z_P$) be the value according to Definition~\ref{def:anyprice-prices} (resp., Definition~\ref{def:anyprice-bundles}).
We first prove {that} $z_D \geq z_P$. 
Let $\vect{p}$ be the corresponding minimal prices with respect to $z_D$.
Assume towards contradiction $z_{P}>z_D$. By Definition~\ref{def:anyprice-bundles}, there {exists} a distribution $F$ over sets $T_1,\ldots,T_k$, with weights $\lambda_1,\ldots,\lambda_k$ (i.e., the probability of $T_r$ according to $F$ is $\lambda_r$) such that $\sum_r \lambda_r=1$ and for all $r$, {$v_i(T_r) \geq z_P > z_D$} , such that for every item $j\in \items$, $\sum_{r: j \in T_r} \lambda_r \leq b_i$.
It holds that $$E_{T \sim F}[\vect{p}(T)] = \sum_{j\in\items} p_j \cdot \Pr_{T \sim F}[j \in T] \leq \sum_{j\in\items} p_j \cdot b_i =b_i,$$
where $\vect{p}(T)$ is the price of bundle $T$ according to $\vect{p}$. 
Thus, there must be a set $T_r$ in the support of $F$ that is priced at most $b_i$ and has a value strictly greater than $z_D$, which leads to a contradiction.

To see the other direction, assume that  $z_D>z_P$ and fix $z$ satisfying $z_D>z>z_P$. Consider  the following  LP over the variables $\{\lambda_T \mid T \in \goodsetsiz\}$:

{\bf Maximize} $\sum_{T \in \goodsetsiz} \lambda_T$ {\bf subject to}:
\begin{itemize}
		\item $\sum_{T: j\in T} \lambda_T \leq b_i$ for all $j\in \items$.
	\item  $\lambda_T \geq 0$ for all $T \in \goodsetsiz$.	
\end{itemize}
Note that since $z>z_P$, the solution for this LP is strictly less than $1$.
The dual of this LP is over the variables $\{q_j \}_{j\in\items}$:

{\bf Minimize} $b_i \cdot \sum_{j \in \items} q_j$ {\bf subject to}:
\begin{itemize}
	\item $\sum_{j \in T} q_j \geq 1$ for all $T \in \goodsetsiz $.
	\item  $q_j \geq 0$ for all $j \in \items$.
\end{itemize}

By LP duality we get that the minimum is also strictly less than 1. By replacing $p_j=q_j \cdot b_i$ we get:

{\bf Minimize} $\min  \sum_{j \in \items} p_j$ {\bf subject to}:
\begin{itemize}
	\item $\sum_{j \in T} p_j \geq b_i$ for all $T\in \goodsetsiz$.
	\item  $p_j \geq 0$ for all $j \in \items $.
\end{itemize}
This means that the minimum of the last LP is also strictly less than 1.  Thus, if we consider the prices $p_j+\epsilon$ for small enough $\epsilon$ for $p_j$ that minimizes this LP, it holds that $\sum_j (p_j+\epsilon ) \leq 1$, and every (non empty) bundle of value at least $z $ (which is strictly more than $0$), is priced strictly more than $b_i$. Thus we get that   $ z_D$ must be smaller than $z$, a contradiction.
\end{proof}

\propSmallSupport*
\begin{proof}
	As shown in the proof of Proposition~\ref{prop:anyprice-eqe}, for the AnyPrice share $z$, it holds that the LP over the variables $\{\lambda_T \mid T\in\goodsetsiz \}$ 
	
	{\bf Maximize} $\max  \sum_{T \in \goodsetsiz} \lambda_T$  {\bf subject to}:
	\begin{itemize}
		\item $\sum_{T: j\in T} \lambda_T \leq b_i$ for all $j\in \items$.
		\item $\lambda_T \geq 0$ for all $T \in \goodsetsiz$.	
	\end{itemize}
	has solution of at least $1$ (since $z$ is feasible).
	There are only $m$ constraints (other than positivity constraints), thus there is a solution where at most $m$ variables are not zero.
\end{proof}

\propAPSPES*
\begin{proof} 
	By Definition~\ref{def:pes}, we need to prove that the $\anypricebi \geq \pqshare$ holds for every integers $\ell,d$ satisfying $\frac{\ell}{d}\leq b_i$.
	Let $A=(A_1,\ldots,A_q)$ be the partition of $\items$  to {$d$} bundles with respect to Definition~\ref{def:lds}.
	For every vector of prices $\vect{p}$ that sums up to 1, the expected price of a union of $\ell$ different bundles among $A$, picked uniformly over all { $ {d}\choose{\ell}$ unions},
	is exactly $\frac{\ell}{d}$. This holds as every item is picked with probability $\frac{\ell}{d}$ (as it belongs to one set, and $\ell$ sets are picked out of $d$).
	Recall that $\frac{\ell}{d}\leq b_i$.
	Thus, there are $\ell$ different bundles that their sum of prices is at most $b_i$ and thus can be purchased. By Definition~\ref{def:lds}, the value of their union is at least  the value of the $\pqshare$.
\end{proof}

\propShareComp*
\begin{proof}
	
	The inequality $\proportionalbi \geq \anypricebi$ follows by setting the price vector to be $p_j = \frac{v_i(j)}{v_i(\items)}$. Then, the price of a bundle is proportional to the value of it, and every affordable bundle has value of at most $b_i \cdot v_i(\items)$.
	The middle inequality follows by Proposition~\ref{prop:APS-pes}. 
	Each of these two inequalities might hold as equality: when there are $2$ items each of value $1/2$, and $b_1=b_2=1/2$ then all three shares are $1/2$.
The rightmost inequality is proven in Lemma \ref{lem:pes-APS-2}, 
and it can hold as equality by Example \ref{example:APS-twice-Pessimistic}.

We next argue that each of the  inequalities might be strict {even for equal entitlements ($b_i=\frac{1}{k}$ for some integer $k$).  The leftmost inequality is strict in the case of a single item and $b_i=1/2$, where the proportional share is $b_i=1/2$, while the APS is $0$.  The rightmost inequality is strict in the example for which $\anypricebi = \pessimisticbi$ presented above ($2$ identical items, 2 agents with equal entitlements).  Finally, Lemma \ref{lem:APSlargerPESS} shows  that the middle inequality is sometimes strict, even for agents with equal entitlements (it is much simpler to prove strict inequality for unequal entitlements, e.g., Example \ref{example:APS-twice-Pessimistic}).}
\end{proof}

\begin{lemma}\label{lem:APSlargerPESS}
	There exists an additive valuation (over $15$ items) such that for $b_i=\frac{1}{3}$ (corresponding to three agents with equal entitlements) it holds that  $\anypricevali{\frac{1}{3}} > \pessimisticb{\frac{1}{3}} = \MMSnum{3}$.
\end{lemma}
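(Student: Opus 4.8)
The plan is to prove the lemma by exhibiting a single concrete instance and then computing both quantities, using the two definitions of the APS in their respective comfort zones: the weight/bundle formulation (Definition~\ref{def:anyprice-bundles}) to \emph{lower bound} $\anypricevali{\frac13}$, and a direct partition analysis to \emph{upper bound} $\MMSnum{3}$ (which, as the paper already records, equals $\pessimisticb{\frac13}$ in the equal-entitlement case). So the skeleton is: (i) fix an additive valuation over $15$ items; (ii) produce explicit nonnegative weights $\{\lambda_T\}$ supported on bundles of value at least some target $z$, with every item appearing in total weight at most $\frac13$, certifying $\anypricevali{\frac13}\ge z$; (iii) show by exhausting the partition structures that no partition of the $15$ items into three bundles has minimum value $\ge z$, giving $\MMSnum{3}<z$; and (iv) check that among all $\ell$-out-of-$d$ shares with $\frac{\ell}{d}\le\frac13$ the $1$-out-of-$3$ share (i.e.\ $\MMSnum{3}$) is the largest, so that $\pessimisticb{\frac13}=\MMSnum{3}<z\le\anypricevali{\frac13}$.

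The structural idea I would use to separate the two quantities is to create an \emph{integrality gap} between the fractional packing that defines the APS and the integral three-partition that defines the MMS. Concretely, I would split the items into a few ``heavy'' items and many ``light'' items, and tune the values so that every bundle reaching the target value $z$ must contain a restricted number of heavy items, say a number lying in a set $H$ that cannot be used to tile. The cleanest incarnation is a parity/divisibility obstruction: arrange the values so that any bundle of value exactly $z$ contains an \emph{even} number of heavy items (e.g.\ by making each light value even, each heavy value odd, and $z$ even, so that the light remainder $z-h\cdot(\text{heavy value})$ is realizable only when $h$ is even), while the total number of heavy items is odd. Then three bundles each carrying an even number of heavies can never account for an odd total, so no equipartition into value-$z$ parts exists and $\MMSnum{3}<z$; meanwhile the fractional cover only ever needs each heavy item with weight $\tfrac13$, which is consistent with using bundles that carry two heavies with weight $\tfrac56$ and heavy-free bundles with weight $\tfrac16$ (the averages work out to coverage $\le\tfrac13$ on every item), certifying $\anypricevali{\frac13}\ge z$. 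The light items must be chosen coarse enough to enforce the restriction on $H$ yet numerous and divisible enough to let the target-value bundles fractionally tile every item.

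The hard part is that these two desiderata pull against each other, and reconciling them is exactly where the concrete choice of $15$ values has to be pinned down. Making the ``bad'' integral partitions impossible wants the values rigid (few subsets hit the target, and the heavy-count parity/residue is forced), whereas certifying the APS lower bound wants \emph{many} target-value bundles that cover each item at rate exactly $\tfrac13$; if the light items are made too coarse to block the forbidden heavy-count, the target-value bundles become so scarce that some item is necessarily over-covered and the APS bound collapses. I therefore expect the main obstacle to be the simultaneous arithmetic: choosing the heavy value, the (one or two) light values, and the target $z$ so that (a) bundles of value $z$ exist with heavy-counts $0$ and $2$ but not $1$ (and not $3$ or more, e.g.\ because three heavies already overshoot $z$), (b) those bundles fractionally decompose the all-items-at-weight-$\tfrac13$ vector as a convex combination (this is where I would invoke Proposition~\ref{prop:small-support} to keep the support small and the verification finite), and (c) the finite but fiddly case analysis over the distributions of the heavy items among three bundles confirms that every partition leaves some bundle strictly below $z$. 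Once a valid choice is locked in, steps (ii)--(iv) are routine verifications, and the strict inequality $\anypricevali{\frac13}>\pessimisticb{\frac13}=\MMSnum{3}$ follows.
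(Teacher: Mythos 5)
Your high-level skeleton is exactly the paper's: exhibit one instance, certify $\anypricevali{\frac{1}{3}} \ge z$ with explicit weights via Definition~\ref{def:anyprice-bundles}, and rule out, by a partition analysis, any $3$-partition with minimum value $z$, so that $\MMSnum{3} < z$. But the concrete mechanism you propose -- the parity obstruction -- cannot be instantiated, and the failure is structural rather than ``fiddly arithmetic.'' For your step (iii) to reduce to ruling out \emph{equipartitions} (all three parts of value exactly $z$), you need the total value to equal $3z$: if the total were less than $3z$, then by Proposition~\ref{prop:shares-comp} the APS is at most the proportional share, hence strictly below $z$, killing step (ii); if the total were greater than $3z$, a partition witnessing $\MMSnum{3} \ge z$ need not have all parts equal to $z$ (e.g., parts of values $z$, $z$, $z+1$), and your parity argument says nothing about such partitions -- indeed a part of odd value $z+1$ has odd heavy count, which restores parity consistency with an odd total number of heavies. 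So you are forced into total $= 3z$. But then your scheme is self-contradictory: with heavy values odd and light values even, the total value is congruent mod $2$ to the number $h$ of heavy items, while $3z \equiv z \equiv 0 \pmod 2$; hence total $= 3z$ forces $h$ to be \emph{even}, contradicting your requirement that $h$ be odd. The same computation shows no modular invariant of this kind can ever work: if every value-$z$ bundle must have heavy count in a fixed residue class, then summing that constraint over the three parts yields precisely the congruence already implied by total $= 3z$, so the aggregate counting obstruction is always vacuous.

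The paper's proof keeps your skeleton but replaces the modular obstruction by a genuinely combinatorial one. It takes the $15$ items to be the edges of $K_6$ with a symmetric value table (value multiset $5,5,5,7,7,7,11,17,23,23,23,31,31,31,65$, total $291 = 3\cdot 97$); the six vertex stars each have value $97$, and each edge lies in exactly two stars, so putting weight $\frac{1}{6}$ on each star certifies $\anypricevali{\frac{1}{3}} \ge 97$, which equals the proportional share. The nonexistence of a partition into three parts of value exactly $97$ is then proved by a bespoke case analysis (the part containing the $65$ can only be completed by $\{5,5,5,17\}$ or $\{7,7,7,11\}$, which then starves the part holding two items of value $31$), not by any congruence. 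This is the kind of structural obstruction your construction would need; as it stands, the core of your proposal cannot be repaired by tuning the values.
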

\begin{proof}
{
Consider the additive valuation $v_i$ over a set $\items$ of $15$ items with the multi-set of items value being $$5,5,5,7,7,7,11,17,23,23,23,31,31,31,65.$$
To prove the claim we show that for this instance  
$\anypricevali{\frac{1}{3}}= 97>  \pessimisticb{\frac{1}{3}} = \MMSnum{3}$.

It will be convenient to index each item by a pair of different integers in $\{1,2,3,4,5,6\}$, that is, the set of items is $\items = \{\{k,j\} | 1\leq k<j \leq 6\}$. 
We can now write the items in the following six-by-six table (for convenience each value appears twice).}


\begin{table}
\centering
\begin{tabular}{|c|c|c|c|c|c|}

	\hline
	* & 11 & 7 & 7 & 7 & 65 \\
	\hline
	11 & * & 23 & 23 & 23 & 17 \\
	\hline
	7 & 23 & * & 31 & 31 & 5 \\
	\hline
	7 & 23 & 31 & * & 31 & 5 \\
	\hline
	7 & 23 & 31 & 31 & * & 5 \\
	\hline
	65 & 17 & 5 & 5 & 5 & * \\
	\hline
\end{tabular}
\end{table}

{By Definition~\ref{def:anyprice-bundles}, the APS is at least $97$: 
the sets in the support will be the six rows $T_k = \{ \{j,k\} \mid  j \neq k\}$ for $k\in [6]$, each has value of $97$. Set  $\lambda_{T_k}= \frac{1}{6}$ for each row $T_k$, and set $\lambda_{T}=0$ for every other bundle $T$. 
Then every bundle $T_k$ has value of $97$, and every item $\{j,k\}$ 
as total weight at most $1/3$ as it
appears in exactly two sets of positive weight (once in $T_j$ and once in $T_k$): $\sum_{T : \{j,k\} \in T} \lambda_T =\lambda_{T_k} +\lambda_{T_j}=  \frac{1}{6}+ \frac{1}{6}=\frac{1}{3}\leq b_i$ as needed.
The APS is not more than $97$ since it is bounded by the proportional share,  which is $97$.

In contrast, we next prove that the MMS is strictly less than $97$ (it cannot be larger than 97, as the MMS is at most the APS.) 
Assume towards a contradiction that the MMS is $97$. By definition of the MMS,  there must exist a partition of $\items$ to three bundles $P_1,P_2,P_3$, each with value of exactly $97$. To complete the proof we show that such a partition does not exist, deriving a contradiction. 

W.l.o.g., we assume that the item of value $65$  is in $P_1$.}
The only two ways to complete $P_1$ to have a value of {exactly} $97$ is by adding the items with values $5,5,5,17$ or adding the items with values $7,7,7,11$. This is since if we add an item with value $31$, then we get a value of $96$ and the is no item with value $1$.
If we add an item of value $23$ then we get value of $88$, and there is no subset of items worth $9$.
In order to add a value of $32$, one must use on of the $11,17$ valued items, since otherwise, the maximal value that can be achieved is $36$, and the second maximal value is $31$.
If the $11$ value is added, then the only way to achieve an additional value of $21$ is by adding the three $7$ valued items, and if the $17$ value is added, then the only way to achieve an additional value of $15$ is by adding the three $5$ valued items.

W.l.o.g., we assume that at least two items of value $31$ (among the remaining three) are in $P_2$.
It cannot be that the third $31$ value is also in $P_2$, since otherwise we have accumulated a value of $93$ and cannot be completed to {exactly} $97$.
It must be that at least one of the $23$ valued items is in $P_2$, since otherwise the value of $P_3$ is at least $3\cdot 23 +31=100 >97$.
It must be that at most one of the $23$ valued items is in $P_2$ since otherwise the value of $P_2$ is at least $2 \cdot 31 +2 \cdot 23 = 108>97$.
In order to complete $P_2$ to have a value of $97$, we need to add items with total value of $12$, and the only way to do so, is by adding on item of value $5$, and one item of value $7$.
But it must be that either all three items of value $5$ are in $P_1$, or all three items of value $7$  are in $P_1$, thus $P_2$ cannot be completed to have a value of $97$.
Thus there is no such partition, and the MMS is smaller than $97$ (and as it is an integer, it is at most $96$).
\end{proof}
\begin{lemma}\label{lem:pes-APS-2}
	For any additive valuation $v_i$ and any entitlement  $b_i$ it holds that \\
	$ 2\cdot \pessimisticbi \geq \anypricebi$.
\end{lemma}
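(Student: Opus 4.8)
The plan is to prove the equivalent statement $\pessimisticbi \ge \frac{1}{2}\anypricebi$; write $z=\anypricebi$. I would lower bound the pessimistic share by a single, conveniently chosen admissible share: take $\ell=1$ and $d=\lceil 1/b_i\rceil$, so that $\frac{\ell}{d}=\frac1d\le b_i$ and this share is allowed in the definition of $\pessimisticbi$ (Definition~\ref{def:pes}). For $\ell=1$ the $\ell$-out-of-$d$ share (Definition~\ref{def:lds}) is exactly the largest value $M$ for which $\items$ can be partitioned into $d$ bundles each of value at least $M$. Hence it suffices to prove the purely combinatorial claim: \emph{$\items$ can be partitioned into $d=\lceil 1/b_i\rceil$ bundles, each of value at least $z/2$.} Given such a partition the worst bundle has value $\ge z/2$, so the $1$-out-of-$d$ share, and therefore $\pessimisticbi$, is at least $z/2$, which is the lemma.

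To prove the covering claim I would argue by contradiction using the price-based Definition~\ref{def:anyprice-prices}: assuming $\items$ cannot be split into $d$ bundles each worth $\ge z/2$, I exhibit a price vector in $\prices$ under which no affordable bundle has value $z$, contradicting $z=\anypricebi$. First I record a global budget: since the APS is at most the proportional share (the leftmost inequality of Proposition~\ref{prop:shares-comp}, whose proof via proportional prices does not use this lemma), $v_i(\items)\ge z/b_i$, and since $d=\lceil 1/b_i\rceil\le 2/b_i$ we get $d\cdot\frac z2\le \frac z{b_i}\le v_i(\items)$, so there is enough total value to fill $d$ bundles to level $z/2$. Then I run greedy bin-covering at level $z/2$: each item of value $\ge z/2$ becomes a singleton bundle, and the remaining ``small'' items are packed until each accumulated bundle reaches value $\ge z/2$. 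If $d$ bundles get covered we are done (leftover items are dumped into existing bundles, only raising their value). Otherwise the process stalls with $q<d$ covered bundles and a remainder of value $<z/2$; crucially, the number $g$ of items of value $\ge z/2$ is then $<d$, and since $g\le\lceil 1/b_i\rceil-1<1/b_i$ we have $gb_i<1$, so all these large items can be priced strictly above $b_i$ while the total price stays below $1$, making each of them individually unaffordable.

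From the stalled configuration I construct the refuting prices: charge each large item slightly more than $b_i$, and distribute the remaining price mass over the small items proportionally to their values. Under these prices the agent can afford no large item, and the value she can extract from the small items alone is capped; the stalling condition (too little small value to complete another bundle to level $z/2$) is what should force that cap below $z$, giving $\anypricebi<z$. \textbf{The main obstacle} is exactly this last quantitative step: the factor $2$ is tight (Example~\ref{example:APS-twice-Pessimistic}), so the pricing must be analyzed sharply. The delicate regime is when $1/b_i$ lies just above an integer --- so $d=\lceil 1/b_i\rceil$ demands one ``extra'' bundle --- combined with small items whose value sits just below the threshold $z/2$ (the worst case for covering waste); there a plain proportional pricing of the small items is only borderline (it caps the affordable value at $z$ rather than strictly below), and one must either refine the pricing to exploit the integral deficiency of the stalled covering, or, equivalently, perturb the choice of $\ell,d$ so that $\ell/d$ approaches $b_i$ and the $O(1)$ rounding loss is absorbed by the multiplicative slack in $v_i(\items)\ge z/b_i$. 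This sharp accounting for near-threshold items is where the real work of the proof lies.
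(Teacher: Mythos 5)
Your setup coincides with the paper's first steps: the paper also reduces the lemma to showing that $\items$ can be partitioned into $k=\lceil 1/b_i\rceil$ bins, each of value at least $z/2$ (i.e., the $1$-out-of-$k$ share is at least $z/2$), and it also disposes of large items (value at least $z/2$) by letting each close a bin by itself and then rescaling. But beyond that point your proposal has a genuine gap, one you flag yourself: the decisive quantitative step is never carried out, and the specific pricing you propose provably cannot carry it out. If the greedy covering stalls with $g<d$ large items, pricing each large item just above $b_i$ and spreading the remaining price mass proportionally over the small items lets the agent afford small-item value of about $b_i V_S/(1-gb_i)$, where $V_S$ is the total value of small items; to get a contradiction you need $V_S < z(1-gb_i)/b_i$. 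But stalling only guarantees $V_S < (d-g-1)z + z/2$ (each small-covered bin closes below $z$, plus a leftover below $z/2$), and the first bound dominates the second only when $\lceil 1/b_i\rceil \le 1/b_i + 1/2$. This fails exactly in the regime you identify, namely whenever $1/b_i$ exceeds an integer by less than $1/2$ (e.g., $b_i$ slightly below $1/n$). Since Example~\ref{example:APS-twice-Pessimistic} shows the factor $2$ is tight, there is no slack anywhere to absorb this rounding loss, and neither of your suggested repairs (a refined pricing, or a different choice of $\ell,d$) is actually exhibited. Note also that the only information you extract from the assumption ``no covering exists'' is that greedy stalls, which is strictly weaker, so your bounds cannot be sharpened for free.

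The paper closes precisely this hard case with machinery that is absent from your sketch, working entirely on the dual side (Definition~\ref{def:anyprice-bundles}) rather than with refuting prices. After removing large items, it splits on whether $u_{2k-2}$, the $(2k-2)$-th largest item value, is below $z/4$. If it is, greedy covering (seeding the $k$ bins with the top $k$ items) succeeds: no bin closes above $z$, at most $k-3$ bins close above $3z/4$, and the total-value bound $v_i(\items)\ge z/b_i > (k-1)z$ (which you also derived, but never exploited this way) leaves at least $z/2$ for the last bin. If instead $u_{2k-2}\ge z/4$, the top $2k-2$ items are paired into $k-1$ bins of value at least $z/2$ each, and the last bin comes from an averaging argument over the support of the APS weights: since every item carries total weight at most $b_i<\frac{1}{k-1}$, the expected number of top-$(2k-2)$ items in a $\lambda$-random bundle is below $2$, so some support bundle $S_\ell$ contains at most one of them and thus retains value at least $z-u_1\ge z/2$ outside the paired items; that residue fills the last bin. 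This counting argument over the dual weights is exactly the idea needed to handle the near-threshold items you worried about, and it is the core of the proof that your price-based contradiction is missing.
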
	
\begin{proof}
	Let $z$ be the value of the $\anypricebi$ share. Then there is a collection $S_1, \ldots, S_t$ of bundles and positive coefficients $\lambda_1, \ldots \lambda_t$ satisfying $\sum_{\{\ell \le t\}} \lambda_\ell = 1$, $v(S_\ell) \ge z$ for every bundle $S_\ell$, and $\sum_{\ell | j \in S_\ell} \lambda_\ell \le b_i$ for every item $j$. Let $k$ be the smallest integer satisfying $b_i \geq \frac{1}{k}$.
	We need to show that one can partition all items into $k$ disjoint bundles $T_1, \ldots, T_k$ such that $v(T_\ell) \ge \frac{z}{2}$ for every $T_\ell$. To distinguish between the bundles of type $S_\ell$ and those of type $T_\ell$, we refer to the latter as {\em bins}. We shall place items in bins, and {\em close} a bin once its value reaches (or exceeds) $\frac{z}{2}$.
		
	Call an item {\em large} if its value is at least $\frac{z}{2}$. We first present an argument that shows that we can assume that there are no large items.
		
	Let $h$ denote the number of large items. Each large item closes a bin by itself. If $h \ge k$ we are done, and hence we assume that $h \le k-1$. This leaves us $k - h$ bins to fill. As to the original bundles, discard (change its coefficient $\lambda_\ell$ to~0) every bundle $S_\ell$ that contains a large item. As the coefficients $\lambda_\ell$ of bundles that contain an item sum up to at most $b_i$, the sum of coefficients that remain is at least $1 - b_i h > 0$. (Note that the choice of $k$ satisfies $b_i < \frac{1}{k-1}$, and together with $h \le k-1$, this implies that $b_ih < 1$.)  W.l.o.g., assume that the sum of coefficients that remain is exactly $1 - b_i h$. If $h = k-1$ we are also done for the following reason. $k-1$ bins are closed by using the $h = k-1$ large items. As the sum of coefficients that remain is positive, at least one bundle (say $S_\ell$) has a nonzero coefficient. This bundle has no large items, implying that all its items remain. As the sum of their values is at least $z$,  these items suffice in order to close the last remaining bin. In other words, we may choose $T_k = S_\ell$. Given the above, we may assume that $h \le k - 2$.
		
	Scale every coefficient by $\frac{1}{1 - b_i h}$, getting new coefficients $\lambda'_\ell = \frac{\lambda_\ell}{1 - b_ih}$.  Now $\sum_\ell \lambda'_\ell = 1$, and for every item $j$, $\sum_{\{\ell | j \in T_\ell\}} \lambda'_\ell \le \frac{b_i}{1 - b_ih} < 1$. Denote $\frac{b_i}{1 - b_ih}$ by $b'_i$.
		Let $k'$ be the smallest integer satisfying $b' \ge \frac{1}{k'}$.
		We have that $k' = k-h$, because then $k'b'_i = \frac{(k-h)b_i}{1 - b_ih} = \frac{kb_i - hb_i}{1 - hb_i} \geq \frac{1 - hb_i}{1 - hb_i} = 1$. (A similar computation shows that $k' = k - h - 1$ does not suffice.)
		
	Renaming $k'$ by $k$ and $b'_i$ by $b_i$, we return to the starting point of the theorem, except that now every item has value below $\frac{z}{2}$.
		
	Having concluded that we can assume that no item has value above $\frac{z}{2}$, consider first the case that $1>b_i \ge \frac{1}{2}$, for which $k = 2$. Let $x$ be the highest value item. As at least one of the bundles does not contain $x$, the total value of items is at least $z + v_i(x)$ (the bundle not containing $x$ contributes at least $z$, and $x$ itself contributes $v_i(x)$). Bin $T_1$ will have value at most $\frac{z}{2} + v_i(x)$ (as its value grows by steps not larger than $v_i(x)$, and we stop adding to it the first time the value is at least $z/2$), and consequently value of at least $\frac{z}{2}$ remains for bin $T_2$, as desired.

	Consider now the case that $b_i < \frac{1}{2}$, for which $k \ge 3$. For $j \ge 1$, let $u_j$ denote the value of the $j$th most valuable item. We consider two cases.
		
		\begin{itemize}
			
			\item $u_{2k-2} < \frac{z}{4}$. For $1 \le j \le k$, put $u_j$ in bin $T_j$. Thereafter add items to the bins, closing each bin as soon as its value reaches $\frac{z}{2}$. Note that at most $k-3$  bins can close at a value above $\frac{3z}{4}$ (as only $2k-3$ items have value 
			at least $\frac{z}{4}$, and every item has value smaller than $\frac{z}{2}$),
			and no bin can close at value above $z$ (because no single item has value 
			larger or equal to $\frac{z}{2}$). Consequently, even after all but one bin close, the total value left for the last bin is at least $(k-1)z - (k-3)z - 2\frac{3z}{4} \ge \frac{z}{2}$, as desired.
			
			\item $u_{2k-2} \ge \frac{z}{4}$.
			Then we can cover the first $k-1$ bins by partitioning the first $2k-2$ items to $k-1$ disjoint pairs,  and
				putting a pair in each of these bins.
			We claim that value above $\frac{z}{2}$ remains for bin $T_k$.
			This is because among the original bundles, at least one bundle $S_i$ contains at most one of the first $2k-2$ items. 
			(Every item is in at most a $b_i$-weighted-fraction of the bundles for $b_i < \frac{1}{k-1}$. The expected number of items among the top $2k-2$ items distributed according to $\lambda$ (i.e., bundle $S_\ell$ w.p. $\lambda_\ell$), is at most $(2k-2) \cdot b_i < 2$. Thus, there must be a bundle $S_\ell$ that contains at most one element among the top $2k-2$.)
			 This bundle $S_\ell$ has value of at least $z - u_1 \ge \frac{z}{2}$ left.
		\end{itemize}
\end{proof}

\obsApsTwo*
\begin{proof}
	{
	For an entitlement of the form $b_i=\frac{1}{n}$, the pessimistic share is the same as the MMS of $n$ players (i.e., $\pessimisticb{\frac{1}{n}}=\MMSn$).
	Then by Proposition~\ref{prop:shares-comp} it holds that $\anypricevali{\frac{1}{2}} \geq \pessimisticb{\frac{1}{2}}=\MMSTwo$, and hence the APS is at least as large as the MMS. 
	On the other hand, the MMS of two player is the maximal value of any bundle that has a value of at most $\frac{v_i(\items) }{2}$, and the APS cannot be larger as then it will be larger than the proportional share, contradicting Proposition~\ref{prop:shares-comp}.}
\end{proof}

{\begin{remark}
	\label{ex:aps-half-mms}
	Observation~\ref{obs:APS-two-equal} uses in an essential way the fact that valuations are additive. The following example shows that for $b_i=\frac{1}{2}$,  the APS and the MMS might be different for submodular valuations.
	Consider a setting with six items  $\items=\{1,2,3,4,5,6\}$,  and let $A=\{\{1,2,3\},\{1,5,6\},\{2,4,6\},\{3,4,5\}\}$. For every agent $i$, let 
	$v_{i}(S)= 6$ if there exists $T \in A $ such that $T \subseteq S$ (i.e., $S$ contains a set in $A$), or {if $|S| \geq 4$}.
	Let $v_i(S)=2$ for any set $S$ of size $1$. Let $v_i(S)=4$ for any set $S$ of size $2$, and let $v_i(S)=5$ for any set $S$ of size $3$, such that $S \notin A$.
	This is a submodular valuation since the marginal of every item is in $0,1,2$, and can only decrease as more items are added.
The MMS is at most $5$ since there is no partition $(S_1,S_2)$ in which both $S_1,S_2$ have value~$6$ (either contain a set in $A$, or are of size at least $4$).
The APS is at least $6$ since if we set $\lambda_S=\frac{1}{4}$ for every $S\in A$ we get that the APS is at least $6$, by Definition~\ref{def:anyprice-bundles}. 

\paragraph{Subadditive:} If we change the value of the non-empty sets that have value less than~6 to be~3, then the function is no longer submodular, but is still subadditive. The APS of the modified function is still $6$, whereas the MMS drops to $3$. 

\end{remark}}

\obsTwo*
\begin{proof}
	Let $F$ be the distribution over sets (non-negative weights $\{\lambda_T\}_{T\subseteq \items}$ that total to $1$),
	that by  Definition~\ref{def:anyprice-bundles} define $\anyprice{v_1}{b_1}{\items}$, the APS of agent $1$ that has valuation $v_1$ and entitlement $b_1$.
	The expected value for the other agent, agent $2$, for the complement of a random set sampled from $F$ is $\sum_{T\subseteq \items} \lambda_T \cdot v_2(\items\setminus T) = \sum_{T\subseteq \items} \lambda_T \cdot (v_2(\items) - v_2(T))= v_2(\items) - \sum_{T\subseteq \items} \lambda_T \cdot v_2(T)\geq  v_2(\items) - b_1 \cdot v_2(\items) = b_2 \cdot v_2(\items)$, using the fact that $v_2$ is additive and as $\sum_{T\subseteq \items} \lambda_T \cdot v_2(T)\leq b_1 \cdot v_2(\items)$, since each item has total weight at most $b_1$.
	As $b_2 \cdot v_2(\items)$ is the  proportional share of agent $2$, it is at least her APS according to Proposition~\ref{prop:shares-comp}.
	Thus, for at least one set $S$ in the support of $F$, agent $2$ values the set $\items\setminus S$ at least at her APS. 
	As every set in the support of $F$ gives agent $1$ her APS, 
	the allocation $A=(S,\items\setminus S)$ gives each of the two agents her AnyPrice share.	
\end{proof}

\proppseudo*
\begin{proof}
Under the conditions of the proposition, the AnyPrice share $z=\anypricei$ is an integer. It suffices to design a polynomial time testing algorithm that given a candidate value for $t$, tests in time polynomial in $t$ whether $z < t$. Using such a testing algorithm, the value of $z$ can be found by binary search over values of $t$ (without need to ever test a value of $t$ larger than $2z$), invoking the testing algorithm only a polynomial number of times.

For testing a candidate value of $t$, we use the following linear program. Its variables are $p_j$, the prices for the items.

{\bf Minimize} $\sum_{j \in \items} p_j$ subject to the following constraints:

\begin{itemize}
\item $p_j \ge 0$ for every item $j \in \items$.
\item $\sum_{j \in S} p_j \ge b_i$ for every set $S \subset \items$ with $v_i(S) \ge t$.
\end{itemize}

If $z < t$ then the optimal value of the LP is strictly smaller than~1. This is because by definition of the AnyPrice share, there is a price vector $p$ with respect to which every bundle of value $t$ has price at least $b_i + \epsilon$, for some $\epsilon > 0$. Scaling all prices by $\frac{b_i}{b_i + \epsilon}$ gives a feasible solution to the LP, and its value is strictly smaller than~1.

The LP has exponentially many constraints, one for each set $S$ with value $v_i(S) \ge t$. Nevertheless, it can be solved in pseudo-polynomial time. This is because given a candidate assignment to the $p_j$ variables, one can test if there is a violated constraint by solving a knapsack problem over the items $\items$, with $b_j$ as the knapsack size, the $p_j$ serve as item sizes, and the values $v_i(j)$ are the item values. Knapsack can be solved in time pseudo-polynomial in the value of the solution, hence in time pseudo-polynomial in $t$.
\end{proof}

\section{Missing Proofs from Section~\ref{sec:allocation-game}}
\label{app:allocation-game}

Before we prove Theorem~\ref{thm:app-APS}, we first observe the following property regarding the TPS.
\begin{lemma}\label{lem:ktps}
	For every subset of items  $K \subset \items$, and for every $b_i\leq \frac{1}{|K|+1}$  and additive valuation $v_i$  over the set of items $\items$, it holds that $$
	\truncatedi \leq \truncated{\frac{b_i}{1-|K|\cdot b_i}}{v_i}{\items \setminus K}.  $$
\end{lemma}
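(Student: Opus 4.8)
The plan is to work directly with the fixed-point characterization of the TPS from Definition~\ref{def:TPS}, rather than with the price-based reasoning used elsewhere. Write $k=|K|$ and let $z=\truncatedi$. By definition $z$ is the \emph{largest} value satisfying $b_i\cdot\sum_{j\in\items}\min(v_i(j),z)=z$, and in particular $z$ itself satisfies this equation (the feasible set is nonempty, since $z=0$ always works, so a maximizer exists). If $z=0$ the inequality is immediate, since its right-hand side is a nonnegative quantity; so I would assume $z>0$.

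The first substantive step is purely algebraic: split the defining sum according to membership in $K$,
\[
z \;=\; b_i\sum_{j\in K}\min(v_i(j),z)\;+\;b_i\sum_{j\in\items\setminus K}\min(v_i(j),z).
\]
Since each term $\min(v_i(j),z)$ is at most $z$, the first sum is at most $kz$, giving $z\le b_i k z + b_i\sum_{j\in\items\setminus K}\min(v_i(j),z)$, i.e.
\[
(1-k b_i)\,z \;\le\; b_i\sum_{j\in\items\setminus K}\min(v_i(j),z).
\]
Here is exactly where the hypothesis $b_i\le\frac{1}{k+1}$ enters: it forces $1-k b_i\ge\frac{1}{k+1}>0$, so I may divide by $1-kb_i$ without flipping the inequality, obtaining $z\le \frac{b_i}{1-kb_i}\sum_{j\in\items\setminus K}\min(v_i(j),z)$.

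Now set $b_i'=\frac{b_i}{1-kb_i}=\frac{b_i}{1-|K|\cdot b_i}$ and consider the capping function $g(w)=b_i'\sum_{j\in\items\setminus K}\min(v_i(j),w)$ of the reduced instance, so that by Definition~\ref{def:TPS} the quantity $\truncated{b_i'}{v_i}{\items\setminus K}$ is precisely the largest fixed point of $g$. The previous display reads exactly $g(z)\ge z$. The crux of the argument is passing from this single inequality to $\truncated{b_i'}{v_i}{\items\setminus K}\ge z$. For this I would observe that $g$ is continuous (piecewise linear), nondecreasing, with $g(w)\to b_i'\cdot v_i(\items\setminus K)<\infty$ as $w\to\infty$; hence $g(w)-w\to-\infty$, while $g(z)-z\ge 0$. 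By the intermediate value theorem there is some $w^\ast\ge z$ with $g(w^\ast)=w^\ast$. Since the fixed-point set of $g$ is closed, bounded above, and contains $w^\ast$, its maximum is well defined and at least $w^\ast\ge z$, which is exactly $\truncated{b_i'}{v_i}{\items\setminus K}\ge z=\truncatedi$.

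I expect this last step to be the only place requiring care, precisely because the TPS is defined as a \emph{maximum} over fixed points, so exhibiting one fixed point at or above $z$ suffices — a point worth stating explicitly. The remaining items to double-check are merely degenerate edge cases: when $z=0$ (trivial) and when $\items\setminus K=\emptyset$, which forces $z=0$ and again makes the claim vacuous. The algebraic manipulation and the use of $b_i\le\frac{1}{k+1}$ to keep $1-kb_i$ strictly positive are routine; the continuity-and-boundedness argument producing a fixed point of $g$ at height at least $z$ is the part deserving full justification.
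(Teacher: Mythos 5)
Your proof is correct, and its core is the same as the paper's: write $z=\truncatedi$ via the fixed-point equation, bound $\min(v_i(j),z)\le z$ for each $j\in K$, and rearrange to obtain $z \le \frac{b_i}{1-|K|\cdot b_i}\sum_{j\in\items\setminus K}\min(v_i(j),z)$. The differences are organizational but real. The paper proves the lemma only for a singleton $K=\{j\}$ (the same algebra giving $z\le\frac{b_i}{1-b_i}\sum_{\ell\in\items\setminus\{j\}}\min(v_i(\ell),z)$) and then obtains the general case by peeling off the elements of $K$ one at a time, which requires checking that the hypothesis is preserved along the iteration (namely $b_i\le\frac{1}{|K|+1}$ implies $\frac{b_i}{1-b_i}\le\frac{1}{|K|}$) and that the composed entitlements telescope, $\frac{b_i/(1-b_i)}{1-(|K|-1)\cdot b_i/(1-b_i)}=\frac{b_i}{1-|K|\cdot b_i}$; you handle all of $K$ in one shot, which avoids the induction and this bookkeeping, at the cost of no extra effort since the split of the sum over $K$ versus $\items\setminus K$ is just as easy as the singleton split. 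Second, the paper passes from the displayed inequality to $z\le \truncated{\frac{b_i}{1-b_i}}{v_i}{\items\setminus\{j\}}$ with the words ``by definition,'' whereas you make this step explicit: since the TPS is defined as the largest \emph{fixed point} of the capping map $g$, one must argue that $g(z)\ge z$ forces a fixed point at or above $z$, which you do via continuity, monotonicity, and boundedness of $g$. That justification is the only non-algebraic content of the lemma, and spelling it out (together with the degenerate cases $z=0$ and $\items\setminus K=\emptyset$) is a genuine improvement in rigor over the paper's phrasing.
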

\begin{proof}
	It suffices to prove this lemma for $K$ which is a singleton. The proof of the lemma then follows by observing that if $b_i \leq \frac{1}{|K|+1}$ then $\frac{b_i}{1-b_i} \leq \frac{1}{|K|}$ and that $\frac{\frac{b_i}{1-b_i}}{1-(|K|-1)\frac{b_i}{1-b_i}} = \frac{b_i}{1-|K|\cdot b_i}$. The lemma follows by partitioning $K$ into $|K|$ singletons, and iterating over the singletons.

	Fix $K=\{j\}$ and $b_i\leq \frac{1}{2}$.
	By definition $$z=b_i \cdot \sum_{\ell\in \items} \min(v_i(\ell),z) \le b_i \cdot z + b_i \cdot \sum_{\ell\in \items \setminus\{j\}} \min(v_i(\ell),z). $$
	By rearranging, we get that:
	$$z \le \frac{b_i}{1-b_i} \cdot \sum_{\ell\in \items \setminus\{j\}} \min(v_i(\ell),z) .$$ 
	Thus, by definition $z\leq \truncated{\frac{b_i}{1-b_i}}{v_i}{\items\setminus\{j\}}, $
	which concludes the proof. 
\end{proof}

A similar result holds for the APS (with a somewhat relaxed constraint on $b_i$).
\begin{lemma}\label{lem:kaps}
	For every subset of items  $K \subset \items$, and for every $b_i < \frac{1}{|K|}$  and additive valuations $v_i$  over the set of items $\items$, it holds that $$
	\anypricei \leq \anyprice{\frac{b_i}{1-|K|\cdot b_i}}{v_i}{\items \setminus K}.  $$
\end{lemma}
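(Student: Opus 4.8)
The plan is to reduce to the case where $K$ is a single item and then argue by induction on $|K|$, exactly paralleling the proof of Lemma~\ref{lem:ktps}. For the reduction I would reuse the same two algebraic facts recorded there: if $b_i<\frac{1}{|K|}$ then $\frac{b_i}{1-b_i}<\frac{1}{|K|-1}$ (so the hypothesis is preserved after deleting one item and updating the budget to $\frac{b_i}{1-b_i}$), and $\frac{\frac{b_i}{1-b_i}}{1-(|K|-1)\frac{b_i}{1-b_i}}=\frac{b_i}{1-|K|\cdot b_i}$ (so the iterated budgets compose to the target $\frac{b_i}{1-|K|\cdot b_i}$). Peeling off the items of $K$ one at a time and chaining the singleton bound then yields the claim.

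For the singleton case, say $K=\{k\}$ with $b_i<1$, I would use the price-based definition (Definition~\ref{def:anyprice-prices}) and exhibit a single price vector over $\items$ that certifies the upper bound. Let $w=\anyprice{\frac{b_i}{1-b_i}}{v_i}{\items\setminus\{k\}}$, and let $q$ be a price vector over $\items\setminus\{k\}$ (summing to $1$) attaining this minimum, so that every $S\subseteq\items\setminus\{k\}$ with $\sum_{j\in S}q_j\le\frac{b_i}{1-b_i}$ satisfies $v_i(S)\le w$. For a small parameter $\delta>0$ I would set $p_k=b_i+\delta$ and $p_j=(1-b_i-\delta)q_j$ for $j\neq k$; these are nonnegative (for small $\delta$, since $b_i<1$) and sum to $1$, so $p\in\prices$. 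Under budget $b_i$, any set containing $k$ costs at least $b_i+\delta>b_i$ and is unaffordable, while an affordable set $S\not\ni k$ obeys $\sum_{j\in S}q_j\le\frac{b_i}{1-b_i-\delta}$.

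The main obstacle is precisely this last inequality: the bound $\frac{b_i}{1-b_i-\delta}$ is slightly larger than $\frac{b_i}{1-b_i}$, so a priori such an $S$ need not be affordable in the reduced instance, and I cannot immediately conclude $v_i(S)\le w$. The fix is a finiteness argument: the quantities $\sum_{j\in S}q_j$ range over finitely many values as $S$ varies over subsets of $\items\setminus\{k\}$, so if any of them exceeds $\frac{b_i}{1-b_i}$ there is a strictly positive gap above it. Choosing $\delta$ small enough that $\frac{b_i}{1-b_i-\delta}$ stays below this gap forces every affordable $S\not\ni k$ to satisfy $\sum_{j\in S}q_j\le\frac{b_i}{1-b_i}$, hence $v_i(S)\le w$. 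For such $\delta$ the price vector $p$ then witnesses $\anypricei\le w$, which is the singleton bound.

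I would emphasize why the naive choice fails, since this is the conceptual heart of the argument: pricing $k$ at exactly $b_i$ gives a cleaner-looking vector summing to $1$, but it lets the agent purchase $k$ alone within budget, and we have no control over $v_i(k)$, so the bound breaks. Pushing the price strictly above $b_i$ is exactly what rules this case out, at the cost of the mild budget-loosening on $\items\setminus\{k\}$ that the finiteness argument absorbs. Combining the singleton case with the inductive composition of budgets described above completes the proof of the lemma.
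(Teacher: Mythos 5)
Your proof is correct and uses essentially the same construction as the paper: take an adversarial price vector for the reduced instance, price each removed item just above $b_i$ so that it becomes unaffordable, scale the remaining prices down so the total is $1$, and choose the perturbation small enough that no bundle avoiding $K$ becomes affordable that was not affordable in the reduced instance. The only differences are cosmetic --- the paper handles all of $K$ in a single step rather than peeling off singletons and inducting, and it merely asserts that a ``small enough'' perturbation exists, whereas your finiteness/gap argument actually justifies that step, so your write-up is if anything more complete on that point.
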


\begin{proof}
	Let $\vect{p}$ be the prices according to Definition~\ref{def:anyprice-bundles} with respect to items $\items \setminus K$ and entitlement $\frac{b_i}{1-|K|\cdot b_i}$.
	From $\vect{p}$ we derive prices $\vect{p'}$ for $\items$. For every item $j\in K$ the price is $p'_j = b_i+\epsilon$, and for every item $j\in \items \setminus K$ the price is $p'_j = p_j \cdot (1-|K|\cdot (b_i+\epsilon'))$. The ratio between $\epsilon$ and $\epsilon'$ is chosen so that the sum of prices is~1, and $\epsilon'$ is chosen to be small enough so that agent $i$ with budget $b_i$ and prices $\vect{p'}$ cannot afford any bundle that could not be afforded with budget $\frac{b_i}{1-|K|\cdot b_i}$ and prices $\vect{p}$. 
\end{proof}



The following lemma presents a strategy for agent $i$ with a budget of $b_i$ that she can use in the bidding game 
to guarantee herself a value of at least $\frac{1}{2-b_i}$ fraction of her TPS. At each round of bidding she should do as follows. If there is one item that suffices by itself to obtain the goal, bid your whole budget. Else, if two items suffice, bid half your budget (and take both items if you win). If at least three items are needed, make a bid proportional to the value of the most valuable remaining item, and pick that item if you win. 

Before moving to present the formal strategy, we repeat the notations we use in the proofs (defined in the bidding game). We denote by $s_i^{(t)} = v_i(\items^{(t)})$ the sum of values of the remaining items at the beginning of round $t$, and denote by $x_i^{(t)},y_i^{(t)}$  the value of the most and second most valued items among $\items^{(t)}$. We use the notation of $s_i$ to denote $s_i^{(1)} = v_i(\items)$.
Finally, we denote the total remaining budget at the beginning of round $t$ by $B^{(t)}=\sum_k b^{(t)}_k$, where $b^{(t)}_k$ is the budget available to agent $k$ at the beginning of round $t$.

We next show that agent $i$ has a strategy that guarantees her $\frac{1}{2-b_i}$-fraction of her TPS.
\begin{lemma}
The following strategy guarantees
agent $i$ with additive valuation function $v_i$ and entitlement  $b_i$ a value of at least $\frac{1}{2-b_i}\cdot 	\truncatedi $ in the bidding game, 
regardless of the strategies of the other agents.
	
	\begin{enumerate}
		\item If $x_i^{(t)} \geq  \frac{b_i^{(t)}}{B^{(t)}} \cdot s_i^{(t)}$ then bid $b_i^{(t)}$. If the bid wins select item~1 and drop out.
		\item If $x_i^{(t)} < \frac{b_i^{(t)}}{2B^{(t)}-b_i^{(t)}} \cdot s_i^{(t)}$ and $x_i^{(t)}  + y_i^{(t)} \ge \frac{b_i^{(t)}}{B^{(t)}} \cdot s_i^{(t)}$ then bid $\frac{b_i^{(t)}}{2}$. If the bid wins select items~1 and~2 and drop out. 
		\item Else bid $\frac{x_i^{(t)}}{s_i^{(t)}} \cdot B^{(t)}$. If the bid wins select the most preferred item. (Observe that the bid is indeed at most $b_i^{(t)}$.)
	\end{enumerate}
\label{lem:1bi}
\end{lemma}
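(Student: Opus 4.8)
The plan is to analyze the three–case strategy through a single potential that measures how much value agent $i$ is ``on pace'' to collect, and to show that this potential never decreases over the rounds of the bidding game.

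\emph{Reduction to a capped instance.} Write $z=\truncatedi$ and let $v_i'(j)=\min(v_i(j),z)$. By Definition~\ref{def:TPS} the proportional share of the capped valuation equals $z$, and $\truncated{b_i}{v_i'}{\items}=z$ as well. Since $v_i\ge v_i'$ pointwise, any play secures at least as much $v_i$-value as $v_i'$-value, so it suffices to analyze the strategy run with respect to $v_i'$ and to prove that it secures $\frac{z}{2-b_i}$ in $v_i'$-value (this is exactly the capping reduction already used in Corollary~\ref{cor:safeStrategy}). From now on every item has value at most $z$ and $b_i\,v_i(\items)=z$, so the proportional share and the TPS of the whole instance both equal $z$.

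\emph{The potential.} For round $t$ let $\beta_t=b_i^{(t)}/B^{(t)}$ be the renormalized entitlement, let $U^{(t)}$ be the value collected by $i$ so far, and set
$$\Phi^{(t)}=U^{(t)}+\frac{\truncated{\beta_t}{v_i}{\items^{(t)}}}{2-\beta_t}.$$
Then $\Phi^{(1)}=\frac{z}{2-b_i}$ (our target), and when the game ends $\items^{(t)}=\emptyset$ so $\Phi=U^{(\mathrm{final})}$. Hence it suffices to show that (i) $\Phi^{(t)}$ never decreases while $i$ is active, and (ii) whenever $i$ wins and drops out in Step~1 or Step~2 the value she walks away with is already at least $\Phi^{(t)}$. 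Part (ii) is the short part and is driven by the calibration of the thresholds: in Step~2 no item exceeds $\frac{\beta_t}{2-\beta_t}s_i^{(t)}$, whence $\truncated{\beta_t}{v_i}{\items^{(t)}}=\beta_t s_i^{(t)}$ and the two selected items satisfy $x_i^{(t)}+y_i^{(t)}\ge \beta_t s_i^{(t)}\ge \truncated{\beta_t}{v_i}{\items^{(t)}}/(2-\beta_t)$; in Step~1 the single item has value $x_i^{(t)}\ge\beta_t s_i^{(t)}\ge \truncated{\beta_t}{v_i}{\items^{(t)}}/(2-\beta_t)$, because the residual TPS caps a value-$x_i^{(t)}$ item and is therefore at most $x_i^{(t)}$.

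\emph{Monotonicity of $\Phi$, the crux.} The engine behind (i) is that under the proportional bid of Step~3, and indeed in every losing round, the total budget is consumed at least as fast as the available value: the winner always pays at least agent $i$'s own bid per item, and every remaining item is worth at most the current maximum $x_i^{(t)}$. Combined with the Step~3 bid $\frac{x_i^{(t)}}{s_i^{(t)}}B^{(t)}$, this forces the ratio $s_i^{(t)}/B^{(t)}$ to be non-increasing, which is precisely what keeps the renormalized proportional share $\beta_t s_i^{(t)}$ from dropping. When a single item of value exceeding the proportional share is removed (the situation created by a losing Step~1 bid), the raw proportional share can fall, and here I would invoke Lemma~\ref{lem:ktps}: deleting that item while renormalizing the entitlement from $\beta_t$ to $\beta_{t+1}\ge \beta_t/(1-\beta_t)$ leaves the \emph{truncated} proportional share non-decreasing, which is exactly why the potential is phrased through the TPS rather than the raw proportional share. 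Checking each of Steps~1--3 under both a win and a loss then reduces to a one-variable inequality in $x_i^{(t)}$ weighted by $\frac{1}{2-\beta_t}$, whose boundary cases $x_i^{(t)}\to 0$ and $x_i^{(t)}\to \beta_t s_i^{(t)}$ give equality.

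I expect the loss cases of Steps~1 and~2 to be the main obstacle. There the adversary may remove value at a higher rate per unit of budget than the proportional rate, so the drop in the residual TPS must be paid for entirely by the increase in $i$'s renormalized entitlement $\beta_t$ and the resulting relaxation of the $\frac{1}{2-\beta_t}$ weight. Making this trade-off tight is what pins down the two thresholds in the strategy (note $\frac{b_i^{(t)}}{2B^{(t)}-b_i^{(t)}}s_i^{(t)}=\frac{\beta_t}{2-\beta_t}s_i^{(t)}$ in Step~2, and the proportional bid in Step~3), and it also forces a separate, easy treatment of the regime $\beta_t\ge\tfrac12$, where Lemma~\ref{lem:ktps} does not apply and a direct cut-and-choose style bound is used instead.
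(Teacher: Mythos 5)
Your framework is actually the same as the paper's, not a different route: the paper proves Lemma~\ref{lem:1bi} by induction on rounds, with the induction hypothesis that the value still obtainable from round $t$ onward is at least $\frac{b_i^{(t)} s_i^{(t)}}{2B^{(t)}-b_i^{(t)}} = \frac{\beta_t s_i^{(t)}}{2-\beta_t}$, and this is word-for-word the statement that your potential $\Phi^{(t)}$ never decreases (under the invariant $x_i^{(t)} \le \beta_t s_i^{(t)}$ one has $\truncated{\beta_t}{v_i}{\items^{(t)}} = \beta_t s_i^{(t)}$, so the two potentials coincide), while your use of Lemma~\ref{lem:ktps} for losing Step-1 rounds is exactly the paper's ``first stage'' reduction. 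Your part (ii) is essentially fine, although the stated justification for Step~1 is wrong as written: the residual TPS is \emph{not} in general at most the largest item value (take three unit-value items and $\beta_t=\frac12$, where the TPS is $\frac32$); the correct chain is $\truncated{\beta_t}{v_i}{\items^{(t)}} \le \beta_t s_i^{(t)} \le x_i^{(t)}$, the last inequality being Step~1's precondition.

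The genuine gap is part (i): the monotonicity of $\Phi$ is asserted, not proved, and that is where essentially all of the content of this lemma lives. Two concrete failures. First, your stated engine --- ``in every losing round the total budget is consumed at least as fast as the available value'' --- is false for losing Step-2 rounds: there the winner pays only $\frac{b_i^{(t)}}{2}$ per item while each item may be worth up to (just under) $\frac{\beta_t}{2-\beta_t}s_i^{(t)}$, so value disappears at $\frac{2}{2-\beta_t}>1$ times the budget rate; the potential survives only because the growth of the weight $\frac{1}{2-\beta_t}$ compensates, and it compensates \emph{exactly} (the paper's computation in this case closes with equality), so there is no slack left to absorb an imprecise argument --- this case must be computed, not waved at. Second, and more seriously, the Step-3 \emph{win} case does not ``reduce to a one-variable inequality in $x_i^{(t)}$'': after the agent takes the top item one must bound the residual TPS $\truncated{\beta_{t+1}}{v_i}{\items^{(t+1)}}$ from \emph{below}, and it can drop strictly below $\beta_{t+1}s_i^{(t+1)}$ if the new largest item exceeds the new proportional share. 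Controlling $x_i^{(t+1)}$ requires the second-item bound $y_i^{(t)} \le \beta_t s_i^{(t)} - x_i^{(t)}$, which is available precisely because Rule~3 fired while $x_i^{(t)} < \frac{\beta_t}{2-\beta_t}s_i^{(t)}$ (so Rule~2's second clause must have failed); the paper's proof splits on $x_i^{(t)}$ versus $\frac{\beta_t}{2}s_i^{(t)}$, invokes this $y$-bound in one branch, and closes with algebra that reduces to $(B^{(t)}-b_i^{(t)})^2 \ge 0$. This two-item mechanism --- the whole reason the thresholds in Rules~1 and~2 are calibrated the way they are --- is absent from your sketch, and without it the invariant (equivalently, the lower bound on the residual TPS) is not maintained, so the potential argument does not go through as stated.
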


\begin{proof}
We divide the rounds into two stages. 
The first stage (that may be empty) holds while $\frac{x_i^{(t)}}{s_i^{(t)}} \ge \frac{b_i^{(t)}}{B^{(t)}}$.
In this stage the agent bids her whole budget. If she wins, she gets her TPS, and we are done. If she does not win, then Lemma~\ref{lem:ktps} implies that $TPS_i$ with respect to the new instance does not decrease. Moreover, the relative budget of agent $i$ (after some other agent paid for an item and the budgets are scaled back to sum to~1) increases, and so does the term $\frac{1}{2 - b_i}$. Consequently, a guarantee of $\frac{1}{2 - b_i} \cdot TPS_i$ with respect to the instance at the beginning of the second stage implies at least as good a guarantee with respect to the original instance. Hence we may assume without loss of generality that we start directly in the second stage.

In the beginning of the second stage, we have that $\frac{x_i^{(t)}}{s_i^{(t)}} \leq \frac{b_i^{(t)}}{B^{(t)}}$. We prove by induction on the number of items $m$ that as long as agent $i$ in the game, it holds that $\frac{x_i^{(t)}}{s_i^{(t)}} \leq \frac{b_i^{(t)}}{B^{(t)}}$, and that the agent receives  $\frac{b_i^{(t)}\cdot s_i^{(t)}}{2B^{(t)}-b_i^{(t)}}$ which in the beginning is at least $\frac{1}{2-b_i^{(t)}}$-fraction of the TPS.




The base case is $m=1$ (and $x_i^{(t)} > 0$). In this case $x_i^{(t)}=s_i^{(t)}$, and the assumption that $\frac{b_i^{(t)}}{B^{(t)}} \geq \frac{x_i^{(t)}}{s_i^{(t)}}=1$ implies that $b_i^{(t)}=B^{(t)}$. By rule~3 the agent will bid $b_i^{(t)}$, win the item, and obtain value $x_i^{(t)} = \frac{b_i^{(t)}}{2B^{(t)} - b_i^{(t)}}s_i^{(t)}$, as desired.

We now assume that the inductive hypothesis holds whenever the number of items is at most $m$, and we prove that it holds also when the number of items is $m+1$. The proof breaks into cases depending on which of the two rules (rules 2 and 3) is used in the current round, and on whether the agent won the bid in the current round.

\begin{itemize}	

\item Rule~3 was invoked, but the agent did not win the bid. As the agent's bid under Rule~3 is $\frac{x_i^{(t)}}{s_i^{(t)}} \cdot B^{(t)}$, whoever wins the round pays at least $\frac{x_i^{(t)}}{s_i^{(t)}} \cdot B^{(t)}$ per item. Let $k$ denote the number of items taken in the first round, and note that $k \le \frac{s_i^{(t)}(B^{(t)} - b_i^{(t)})}{x_i^{(t)} B^{(t)}}$. As each item has value at most $x_i^{(t)}$ for our agent, the value left by the next round is at least $s_i^{(t+1)} \geq s_i^{(t)} - x_i^{(t)}\cdot \frac{s_i^{(t)}(B^{(t)} - b_i^{(t)})}{x_i^{(t)} B^{(t)}} = \frac{b_i^{(t)}}{B^{(t)}} \cdot s_i^{(t)}$, so there is enough value left for the agent to potentially reach a value of $\frac{b_i^{(t)}}{2B^{(t)} - b_i^{(t)}}s_i^{(t)}$.

We now verify that the invariant $\frac{x_i^{(t)}}{s_i^{(t)}} \leq \frac{b_i^{(t)}}{B^{(t)}}$ holds after the first round.  We have $s_i^{(t+1)} \ge s_i^{(t)} - kx_i^{(t)}$, $x_i^{(t+1)} \le x_i^{(t)}$, $b_i^{(t+1)} = b_i^{(t)}$ and $B^{(t+1)} \le B^{(t)} - \frac{kB^{(t)}x_i^{(t)}}{s_i^{(t)}} = \frac{B^{(t)}(s_i^{(t)} - kx_i^{(t)})}{s_i^{(t)}}$.

$$\frac{x_i^{(t+1)}}{s_i^{(t+1)}} \le \frac{x_i^{(t)}}{s_i^{(t)} - kx_i^{(t)}} \le \frac{b_i^{(t)}s_i^{(t)}}{B^{(t)}(s_i^{(t)} - kx_i^{(t)})} \le \frac{b_i^{(t+1)}}{B^{(t+1)}}$$
where the middle inequality holds because
of the invariant $\frac{x_i^{(t)}}{s_i^{(t)}} \leq \frac{b_i^{(t)}}{B^{(t)}}$ (applied to the numerator $x_i^{(t)}$).

As the invariant holds in the second round, we can apply the inductive hypothesis and obtain that the value achieved by the agent is at least

\begin{eqnarray*}
\frac{b_i^{(t+1)}}{2B^{(t+1)}-b_i^{(t+1)}}\cdot s_i^{(t+1)}  &\ge & \frac{b_i^{(t)}}{2\frac{B^{(t)}(s_i^{(t)} - kx_i^{(t)})}{s_i^{(t)}} - b_i^{(t)}} (s_i^{(t)} - kx_i^{(t)}) \\&=& \frac{b_i^{(t)}}{2B^{(t)} - \frac{s_i^{(t)}}{s_i^{(t)} - kx_i^{(t)}}b_i^{(t)}} \cdot s_i^{(t)} \\& \ge& \frac{b_i^{(t)}}{2B^{(t)}-b_i^{(t)}}\cdot s_i^{(t)}.
\end{eqnarray*}
\item Rule~3 was invoked, and the agent won the bid. We have $s_i^{(t+1)} = s_i^{(t)} - x_i^{(t)}$, $B^{(t+1)} = B^{(t)} - \frac{x_i^{(t)}}{s_i^{(t)}} \cdot B^{(t)}$, and $b_i^{(t+1)} = b_i^{(t)} - \frac{x_i^{(t)}}{s_i^{(t)}} \cdot B^{(t)}$.  As for $x_i^{(t+1)}$, we shall have a case analysis. Suppose first that $x_i^{(t)} \le \frac{b_i^{(t)}}{2B^{(t)}} \cdot s_i^{(t)}$. For this case we use $x_i^{(t+1)} \le x_i^{(t)} \le \frac{b_i^{(t)}}{2B^{(t)}} \cdot s_i^{(t)}$. The invariant holds because:

$$\frac{x_i^{(t+1)}}{s_i^{(t+1)}} \le \frac{\frac{b_i^{(t)}}{2B^{(t)}} \cdot s_i^{(t)}}{s_i^{(t)} - x_i^{(t)}} = \frac{\frac{b_i^{(t)}}{2}}{B^{(t)} - \frac{x_i^{(t)}}{s_i^{(t)}} \cdot B^{(t)}} \le \frac{b_i^{(t)} - \frac{x_i^{(t)}}{s_i^{(t)}} \cdot B^{(t)}}{B^{(t)} - \frac{x_i^{(t)}}{s_i^{(t)}} \cdot B^{(t)}} = \frac{b_i^{(t+1)}}{B^{(t+1)}},$$
where the second inequality holds because $x_i^{(t)} \le \frac{b_i^{(t)}}{2B^{(t)}} \cdot s_i^{(t)}$ and thus $\frac{b_i^{(t)}}{2}-\frac{x_i^{(t)}}{s_i^{(t)}}\cdot B^{(t)}\geq 0$.

Suppose now that $x_i^{(t)} \ge \frac{b_i^{(t)}}{2B^{(t)}} \cdot s_i^{(t)}$. We may assume that $x_i^{(t)} < \frac{b_i^{(t)}}{2B^{(t)}-b_i^{(t)}} \cdot s_i^{(t)}$ as otherwise the agent who selects $x_i^{(t)}$ obtains a value of at least $\frac{b_i^{(t)}}{2B^{(t)}-b_i^{(t)}} \cdot s_i^{(t)}$ and we are done. Given that that $x_i^{(t)} < \frac{b_i^{(t)}}{2B^{(t)}-b_i^{(t)}} \cdot s_i^{(t)}$, we use the fact that Rule~2 was not invoked to infer that $y_i^{(t)} \le \frac{b_i^{(t)}}{B^{(t)}} \cdot s_i^{(t)} - x_i^{(t)}$.
Consequently $x_i^{(t+1)} \le y_i^{(t)} \le \frac{b_i^{(t)}}{B^{(t)}} \cdot s_i^{(t)} - x_i^{(t)}$. The invariant holds because:

$$\frac{x_i^{(t+1)}}{s_i^{(t+1)}} \le \frac{\frac{b_i^{(t)}}{B^{(t)}} \cdot s_i^{(t)} -x_i^{(t)}}{s_i^{(t)} - x_i^{(t)}} = \frac{b_i^{(t)} - \frac{x_i^{(t)}}{s_i^{(t)}} \cdot B^{(t)}}{B^{(t)} - \frac{x_i^{(t)}}{s_i^{(t)}} \cdot B^{(t)}} = \frac{b_i^{(t+1)}}{B^{(t+1)}}$$

As the inductive hypothesis holds regardless of the value of $x_i^{(t)}$, we can apply the inductive hypothesis and obtain that the value achieved by the agent is at least (the $x_i^{(t)}$ term comes from the first round):

$$x_i^{(t)} + \frac{b_i^{(t+1)}}{2B^{(t+1)}-b_i^{(t+1)}}\cdot s_i^{(t+1)} \ge x_i^{(t)} + \frac{b_i^{(t)} - \frac{x_i^{(t)}}{s_i^{(t)}} \cdot B^{(t)}}{2B^{(t)} - b_i^{(t)} - \frac{x_i^{(t)}}{s_i^{(t)}} \cdot B^{(t)}} \cdot (s_i^{(t)} - x_i^{(t)}) \ge \frac{b_i^{(t)}}{2B^{(t)}-b_i^{(t)}}\cdot s_i^{(t)}.$$

Verifying the last inequality involves tedious but straightforward algebraic manipulations, multiplying both sides by the denominators, and rearranging. After canceling matching terms the inequality becomes $B^2 - 2Bb + b^2 \ge 0$, which of course holds.

\item Rule~2 was invoked, and the agent won the bid. In this case the agent selects items~1 and~2 for a value of $\frac{b_i^{(t)}}{B^{(t)}}\cdot s_i^{(t)}$ and we are done.

\item Rule~2 was invoked, but the agent did not win the bid. As the agent's bid under Rule~2 is $\frac{b_i^{(t)}}{2}$, whoever wins the round pays at least $\frac{b_i^{(t)}}{2}$ per item. Let $k$ denote the number of items taken in the first round, and note that $k \le \frac{2(B^{(t)} - b_i^{(t)})}{b_i^{(t)}}$. As each item has value at most $x_i^{(t)}$ for our agent, and Rule~2 requires that $x_i^{(t)} < \frac{b_i^{(t)}}{2B^{(t)}-b_i^{(t)}} \cdot s_i^{(t)}$, the value left by the next round is at least $s_i^{(t+1)} \geq s_i^{(t)} - \frac{2(B^{(t)} - b_i^{(t)})}{b_i^{(t)}} \cdot \frac{b_i^{(t)}}{2B^{(t)}-b_i^{(t)}} \cdot s_i^{(t)} = \frac{b_i^{(t)}}{2B^{(t)}-b_i^{(t)}} \cdot s_i^{(t)}$, so there is enough value left for the agent to potentially reach a value of $\frac{b_i^{(t)}}{2B^{(t)} - b_i^{(t)}}s_i^{(t)}$.

We have $s_i^{(t+1)} \ge s_i^{(t)} - kx_i^{(t)} \ge s_i^{(t)} - \frac{ks_i^{(t)}b_i^{(t)}}{2B^{(t)}-b_i^{(t)}} = \frac{2B^{(t)}s_i^{(t)} - b_i^{(t)}s_i^{(t)} - ks_i^{(t)}b_i^{(t)}}{2B^{(t)}-b_i^{(t)}}$, $x_i^{(t+1)} \le x_i^{(t)} \le \frac{s_i^{(t)}b_i^{(t)}}{2B^{(t)}-b_i^{(t)}}$, $b_i^{(t+1)} = b_i^{(t)}$ and $B^{(t+1)} = B^{(t)} - \frac{kb_i^{(t)}}{2}$. The invariant $\frac{x_i^{(t)}}{s_i^{(t)}} \leq \frac{b_i^{(t)}}{B^{(t)}}$ holds in the second round because:

\begin{eqnarray*}
\frac{x_i^{(t+1)}}{s_i^{(t+1)}} &\le & \frac{s_i^{(t)}b_i^{(t)}}{2B^{(t)}-b_i^{(t)}} \cdot \frac {2B^{(t)}-b_i^{(t)}}{2B^{(t)}s_i^{(t)} - b_i^{(t)}s_i^{(t)} - ks_i^{(t)}b_i^{(t)}} \\ & =  & \frac{b_i^{(t)}}{2B^{(t)} - (k+1)b_i^{(t)}} \le \frac{b_i^{(t)}}{B^{(t)} - \frac{kb_i^{(t)}}{2}} \le \frac{b_i^{(t+1)}}{B^{(t+1)}},
\end{eqnarray*}
where the middle inequality holds because $k \le \frac{2(B^{(t)} - b_i^{(t)})}{b_i^{(t)}}$.

We can now apply the inductive hypothesis and obtain that the value achieved by the agent is at least

$$\frac{b_i^{(t+1)}}{2B^{(t+1)}-b_i^{(t+1)}}\cdot s_i^{(t+1)} \ge \frac{b_i^{(t)}}{2B^{(t)} - kb_i^{(t)} - b_i^{(t)}} \cdot \frac{2B^{(t)}s_i^{(t)} - b_i^{(t)}s_i^{(t)} - ks_i^{(t)}b_i^{(t)}}{2B^{(t)}-b_i^{(t)}} = \frac{b_i^{(t)}}{2B^{(t)}-b_i^{(t)}}\cdot s_i^{(t)},$$
as desired.
\end{itemize}
\end{proof}


{The next lemma which is used in Step~3 of the strategy described in Lemma~\ref{lem:35} shows that in 
the bidding game, agent $i$ with a budget $b_i$ can guarantee a value of at least $\frac{3}{2}\cdot \anyprice{b_i/2}{v_i}{\items}$.}
\lemThreeFour*
\begin{proof}
Let $\bval=\anyprice{b_i/2}{v_i}{\items}$.
	By Definition~\ref{def:anyprice-bundles} there is a finite list of bundles $\{B_j\}$ and associated nonnegative weights $\lambda_j$ such that:
	
	\begin{itemize}
		\item $\sum_j \lambda_j = 1$.
		\item For every item $e$ we have $\sum_{j | e\in B_j} \lambda_j \leq \frac{b_i}{2}$.
		\item $v_i(B_j) \ge \bval$ for all $j$.
	\end{itemize}
	
	We may assume that $v_i(e) \le \bval$ for all items $e$, as if there is an item of value above $\bval$ we can reduce its value to $\bval$ without affecting the condition above, and any strategy that guarantees $\frac{3\bval}{2}$ with respect to the reduced valuation also guarantees it with respect to the original valuation.
	

We denote by $u_i^{(t)}$  the total value accumulated by the agent up to the beginning of round $t$.

We present a strategy for the agent. Without loss of generality, we assume that whoever wins the item in a given round pays the amount that was bid by our agent in that round (this is the worst case for our agent). The strategy is composed  of two stages. For the {\bf first stage} the strategy has two rules.
	\begin{enumerate}
		\item If $x_i^{(t)} + u_i^{(t)} < \frac{3\bval}{2}$, bid $\frac{x_i^{(t)}}{s_i}$ (and choose one item).
		\item If $x_i^{(t)} + u_i^{(t)} \ge \frac{3\bval}{2}$, bid $b_i^{(t)}$ (and choose one item). 
Here, if $\frac{x_i^{(t)}}{s_i} > b_i^{(t)}$, we say that the item is {\em underpriced}, and if $\frac{x_i^{(t)}}{s_i} < b_i^{(t)}$ we say that the item is {\em overpriced}.
	\end{enumerate}
	
	The first stage ends if one of the following two stopping conditions holds:
	\begin{enumerate}
		\item The agent accumulates a value of $\frac{3\bval}{2}$ (and then we are done).
		\item The first round $r$ for which both the following condition holds: in round $r-1$ rule~2 was applied, whereas in round $r$ rule~1 applies. In this case in round $r$, the agent plays according to the \textbf{second stage}.
	\end{enumerate}

	If the agent wins an overpriced item her accumulated value reaches $\frac{3\bval}{2}$, and we are done. Hence we may assume that the agent never wins an overpriced item. Making this assumption, it follows that if the adversary never wins underpriced items, then the total value won by the agent is at least $2\cdot \bval $. This is since  the total value of all items is at least $s_i \geq \frac{2\cdot \bval}{b_i}$, and the other agents hold a $1-b_i$ fraction of the budget. If the other agents do not win an underpriced item, the value that they win (according to $v_i$) is at most $s_i \cdot (1-b_i)$, and the value remaining for agent $i$ is at least $b_i \cdot s_i \geq 2z$. Hence it remains to analyze the situations in which the adversary does win underpriced items.
	
	The adversary can win an underpriced item only if in the given round we have that $\frac{x_i^{(t)}}{s_i} > b_i^{(t)}$. This, together with the fact that $x_i^{(t)} \le \bval$ for every $t$, implies that the agent wins at least two items before the adversary wins an underpriced item. Denoting by $w_1$ the value of the first item won by the agent and by $w_2 \le w_1$ the value of the second item won by the agent, we may assume that $w_1 + w_2 < \frac{3\bval}{2}$, as otherwise we are done.
	Hence $w_2 < \frac{3\bval}{4}$. Moreover, we may assume that $w_1 + 2w_2 > 2\cdot \bval $, as otherwise we cannot reach a round in which $u_i^{(t)} < \frac{3\bval}{2}$ and $\frac{x_i^{(t)}}{s_i} > b_i^{(t)}$. In particular, this implies that $w_2 > \frac{\bval}{2}$, and that $w_1 + w_2 > \frac{4\bval}{3}$.
	
	After the agent wins two items, we reach for the first time a round in which $b_i^{(t)} < \frac{x_i^{(t)}}{s_i}$ might hold. Thereafter, we might have a sequence of rounds in which the adversary wins underpriced items. We refer to this sequence of rounds as the {\em bad} sequence. The value of an underpriced item is at most $w_2 \le \frac{3\bval}{4}$, and the amount paid by the adversary is at least $b_i^{(t)} = b_i - \frac{w_1 + w_2}{s_i} > b_i - \frac{3\bval/2}{2/\bval} =\frac{b_i}{4}$.
	
	After the end of the bad sequence, 
	if there are items of value at least $\frac{3\bval}{2} - w_1 - w_2$, the agent bids its full remaining budget ($b_i^{(t)}$) on them (by rule~2). If the agent wins any such bid, we are done. Hence we may assume that the adversary wins all such items. This marks the end of the first stage of the agent's bidding strategy, as we have reached the second stopping condition.
	
	Before continuing to the second stage, we show that not all items are exhausted in the first stage. (Later we will address the question of the value of the items that remain.)
	
	\begin{proposition}
		\label{pro:charging}
		By the end of the first stage, either the agent accumulated a value of at least $\frac{3\bval}{2}$, or there are still items remaining for the second stage.
	\end{proposition}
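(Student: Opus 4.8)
The plan is to argue by contradiction: suppose the first stage terminates by exhausting all of $\items$ while agent $i$'s accumulated value is still below $\frac{3\bval}{2}$, and derive a budget violation for the other agents, whose total budget is only $1-b_i$.

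First I would fix notation and import the facts already established for this regime. Recall $\bval=\anyprice{b_i/2}{v_i}{\items}$, that every item has value at most $\bval$, and that $s_i=v_i(\items)\ge \frac{2\bval}{b_i}$ (the APS is at most the proportional share for budget $b_i/2$). In the case under analysis the agent wins exactly two items, of values $w_1\ge w_2$, with $w_1+w_2<\frac{3\bval}{2}$, $w_2<\frac{3\bval}{4}$ and $w_1+w_2>\frac{4\bval}{3}$; from the moment the bad sequence begins the agent's budget is frozen at $\hat b:=b_i-\frac{w_1+w_2}{s_i}>\frac{b_i}{4}$, and in every round from then on (the bad sequence and every subsequent rule-2 round) the agent bids exactly $\hat b$, so the other agents pay at least $\hat b$ for each item they take after that point.

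The structural observation I would isolate is that the first stage ends (via the second stopping condition) precisely when every remaining item has value below $\tau:=\frac{3\bval}{2}-(w_1+w_2)$, since this is exactly the condition under which rule~1 becomes active again right after a rule-2 round; note $\tau<\frac{\bval}{6}$. Hence it suffices to exhibit one surviving low-value item. I would then show such items can only be removed during the good part: every item taken in the bad sequence has value exceeding $\hat b\, s_i\ge 2\bval-(w_1+w_2)>\tau$, and every item taken in a rule-2 loss has value at least $\tau$ by the definition of rule~2, so neither phase ever touches a low-value item.

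To finish I would use a value-versus-budget invariant. During the good part the agent bids $\frac{x_i^{(t)}}{s_i}$ for the current top item, and whoever wins pays at least that much per item while removing value at most $x_i^{(t)}$ per item; consequently the remaining total value never falls below $s_i$ times the remaining total budget, so at the first bad-sequence round $t^{\ast}$ we have $s_i^{(t^{\ast})}\ge s_i\,B^{(t^{\ast})}$. From $t^{\ast}$ onward only the other agents spend, they pay at least $\hat b$ per item, and I would bound the total value they can still remove against their residual budget $B^{(t^{\ast})}-\hat b$, combining the per-item payment floor $\hat b>\frac{b_i}{4}$ with the per-item value ceiling $w_2<\frac{3\bval}{4}$ and the value floor $\hat b\, s_i$ on bad-sequence items. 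The main obstacle is precisely this bad sequence: there the underpriced items let the other agents remove value at a rate strictly larger than $s_i$, so the naive ``remaining value $\ge s_i\times$ remaining budget'' bound is not preserved, and the accounting must be done carefully, playing the cheap per-item payment against the facts that such items are high-valued (so few fit in the residual budget) and never low-valued (so they cannot be the items that must survive). I expect that tightening these competing bounds, rather than any single clever step, is where the real work lies.
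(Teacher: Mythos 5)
Your proposal is not yet a proof, and the gap is not just the deferred ``careful accounting'' at the end; two of the steps you do commit to are false. First, you claim that items removed in the bad sequence have value exceeding $\hat b\,s_i$ and that ``every item taken in a rule-2 loss has value at least $\tau$ by the definition of rule~2.'' But in the bidding game the winner of a round selects \emph{any} non-empty set she can afford; the thresholds in rules~1 and~2 constrain only the top remaining item $x_i^{(t)}$ that triggers the rule, not the items the winner chooses to take. So the adversary may spend its budget removing low-value items during rule-2 rounds (paying $\hat b$ apiece), and the cornerstone of your plan --- that a low-value item must survive to the end --- collapses. Moreover, the target itself can be vacuous: there may be no items of value below $\tau$ at all (e.g., all items of value close to $\bval$), in which case the proposition must instead be established by showing the adversary's budget cannot pay for everything, which your ``exhibit a surviving low-value item'' scheme cannot do. Second, the part you explicitly defer --- reconciling the per-item payment floor $\hat b>\frac{b_i}{4}$ with the fact that value can vanish faster than budget during the bad sequence --- is precisely the content of the proposition; a plan that ends with ``this is where the real work lies'' has not done that work.

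The idea you are missing is that the paper never controls the \emph{rate} at which value disappears. Instead it invokes the dual characterization of the APS (Definition~\ref{def:anyprice-bundles}): there are bundles $\{B_j\}$ with weights $\lambda_j\ge 0$, $\sum_j\lambda_j=1$, each of value at least $\bval$, and with $\sum_{j\,:\,e\in B_j}\lambda_j\le \frac{b_i}{2}$ for every item $e$. Under your contradiction hypothesis (all items consumed, value below $\frac{3\bval}{2}$, hence remaining budget $b_i^{(r)}>\frac{b_i}{4}$), it charges the payment $p(e)$ of every item to the bundles containing it, bundle $B_j$ receiving $\frac{2\lambda_j}{b_i}p(e)$, and shows every bundle collects at least $\lambda_j$: a singleton bundle holds an item of value at least $\bval$, which is never underpriced and hence paid at least $\frac{b_i}{2}$, while any bundle with two or more items contains two items each paid more than $\frac{b_i}{4}$ (items consumed before the agent's second win are not underpriced, and all later stage-1 payments equal the agent's bid, which is at least $b_i^{(r)}>\frac{b_i}{4}$). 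Since the item-weight condition guarantees the charges never exceed the payments, summing over bundles gives total payments at least $\sum_j\lambda_j=1$, i.e., the entire unit of budget is exhausted --- contradicting that the agent retains more than $\frac{b_i}{4}$. This certificate-based charging is exactly what converts local price floors into a global budget contradiction, sidestepping the value-versus-budget bookkeeping that your approach could not close; in your proposal the APS enters only through the scalar inequality $s_i\ge \frac{2\bval}{b_i}$, which is too weak to carry the argument.
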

	\begin{proof}
		Assume for the sake of contradiction that the agent accumulated a value smaller than $\frac{3\bval}{2}$, but no items remain. The agent then has a budget of $b_i^{(r)} > \frac{b_i}{4}$ remaining. We show that the fact that no items remain implies that the total budget of the adversary and the agent is exhausted, contradicting the fact that the agent's budget is not exhausted.
		
		Recall the bundles $\{B_j\}$ implied to exist by the condition for the APS.
		
		For every item $e$, let $p(e)$ denote the price paid for the item. Charge this price to the bundles that contain $e$, where if $e \in B_j$, bundle $B_j$ is charged $\frac{2\lambda_j}{b_i}p(e)$. By the condition $\sum_{j | e\in B_j} \lambda_j \le \frac{b_i}{2}$, the bundles are not charged more than the total price of the items. Let $C_j$ denote the total charge to bundle $j$. We show that $C_j \ge \lambda_j$.
		
		For a bundle $B_j$ that contains only one item $e$, we have that $v_i(e) \ge \bval$. Such an item cannot be underpriced. Hence $p(e) \ge \frac{b_i}{2}$, and $C_j \ge  \frac{2\lambda_j}{b_i}\cdot \frac{b_2}{2} = \lambda_j$.
		
		Consider now a bundle $B_j$ that contains at least two items, and let $e_1$ and $e_2$ be items in $B_j$. Both items must be consumed in the first stage.
		In the first stage, the agent always bids $b_i^{(t)}$ which is at least $\frac{b_i}{4}$ (since all items until $w_2$ are not underpriced and thus priced at least $\frac{b_i}{4}$, and between item $w_2$ and round $r$, the agent always bids $b_i^{(r)}> \frac{b_i}{4}$). Hence $p(e_1) + p(e_2) \ge 2b_i^{(r)} > \frac{b_i}{2}$. Again, $C_j \ge  \frac{2\lambda_j}{b_i} \frac{b_i}{2} = \lambda_j$.
		
		Hence the total price of all items is at least $\sum_j C_j \ge \sum_j \lambda_j = 1$, implying that the total budget (which was~1) is exhausted.
	\end{proof}

	For the items that remain, we change the valuation function of the agent to $\hat{v}_i$, where for every item $e$ we define $\hat{v}_i(e) = \min[b_i^{(r)}\cdot s_i, \frac{b_i - 2b_i^{(r)}}{b_i^{(r)}}v_i(e)]$. Observe that $\hat{v}_i(e) \ge v_i(e)$ for every item $e$, because when the first stage ends there is no remaining item with $v_i(e) > b_i^{(r)} \cdot s_i$ (otherwise the agent will still play according to rule~2), and also $\frac{b_i - 2b_i^{(r)}}{b_i^{(r)}} > 1$ as $b_i^{(r)} < \frac{1}{3}$.
	
	We will use $\hat{x}_i^{(t)}$ to denote the corresponding maximal value with  respect to $\hat{v}_i$ at time $t$.
	
	In the {\bf second stage}, the bidder bids ``truthfully" with respect to $\hat{v_i}$.
	
	\begin{itemize}
		\item In every round $t\geq r$, if $\hat{x}_i^{(t)} \le b_i^{(t)}$ bid $\frac{\hat{x}_i^{(t)}}{s_i}$. (The analysis will show that if in a round it holds that $\hat{x}_i^{(t)}  > b_i^{(t)}$, we already have $u_i^{(t)} \ge \frac{3\bval}{2}$.)
		
	\end{itemize}
	
	We {\bf claim} that when the second stage begins, the total value $\hat{s}_i^{(r)}$ of all items (according to $\hat{v}_i$) is at least $B^{(r)} \cdot s_i$. We postpone the proof of this claim, and first show that the claim implies our lemma.
	
	As $\max_e[\hat{v}_i(e)] \le b_i^{(r)}\cdot s_i$, Lemma~\ref{lem:safeStrategy} (for $t=1$) implies that the agent can guarantee (by following the strategy described in the lemma) a value (according to $\hat{v}_i$) of at least $\frac{b_i^{(r)}\hat{s}_i^{(r)}}{2B^{(r)}} \ge \frac{b_i^{(r)}\cdot s_i}{2}$. Scaling back to values according to $v_i$, this gives at least $\frac{s_i \cdot (b_i^{(r)})^2}{2b_i - 4b_i^{(r)}}$. Recall that the value accumulated by the agent in the first stage is $s_i \cdot(b_i - b_i^{(r)})$, 
	and hence in the second stage it suffices that the agent accumulates $s_i(b_i^{(r)} - \frac{b_i}{4})$. And indeed, $\frac{s_i \cdot (b_i^{(r)})^2}{2b_i - 4b_i^{(r)}} \ge s_i\cdot (b_i^{(r)} - \frac{b_i}{4})$. 
	(Dividing by $s_i$ then multiplying by the denominator gives $(b_i^{(r)})^2 \ge 2b_i\cdot b_i^{(r)} - \frac{b_i^2}{2} - 4(b_i^{(r)})^2 + b_i \cdot b_i^{(r)}$, and moving everything to the left hand side and completing a square gives $(\sqrt{5} b_i^{(r)} - \frac{b_i}{\sqrt{2}})^2 + (\sqrt{10} - 3)b_i\cdot b_i^{(r)} \ge 0$.)
	
	It remains to prove the {\bf claim}.  Consider the prices and the charging mechanism explained in the proof of Proposition~\ref{pro:charging}. Let $L_j$ (``leftover budget for bundle $j$") be $L_j = \max[0, \frac{b_i}{2} - \sum_{e \in B_j} p(e)]$. (Here $e$ ranges only over items consumed in the first stage.) The total budget left at the end of the first stage can be seen to satisfy $B^{(r)} \le \sum_j \frac{2\lambda_j}{b_i} L_j$.
	
	Let $V_j$ denote the value left in bundle $B_j$ at the end of the first stage according to $v_i$ , and let $\hat{V}_j \ge V_j$ denote the value after scaling $v_i$ to $\hat{v}_i$. Then the total value according to $\hat{v}_i$ at the beginning of the second stage satisfies $\hat{s}_i^{(r)} = \sum_j \frac{2\lambda_j}{b_i} \hat{V}_j$. To prove that $\hat{s}_i^{(r)} \ge B^{(r)} \cdot s_i$ we shall show that $\hat{V}_j \ge L_j\cdot s_i$ for every bundle $B_j$.
	
	If bundle $B_j$ does not contain any underpriced item, then $V_j \ge L_j \cdot s_i$, and we are done. If two or more items are consumed in $B_j$ in the first stage then $L_j = 0$ (see proof of Proposition~\ref{pro:charging}) and again $V_j \ge L_j \cdot s_i=0$. It remains to consider the cases in which exactly one item $e\in B_j$ was consumed in the first stage, and this item $e$ was underpriced. In this case, $p(e) = b_i^{(r)}$ and $L_j = \frac{b_i}{2} - b_i^{(r)}$. Moreover, $v_i(e) \le w_2$ and $V_j \ge \frac{b_i\cdot s_i}{2} - w_2 \ge \frac{b_i\cdot s_i}{2} -  s_i \cdot \frac{b_i - b_i^{(r)}}{2} = \frac{s_i \cdot b_i^{(r)}}{2}$. Consequently, we have that $\hat{V}_j \ge \min[s_i \cdot b_i^{(r)}, s_i \cdot \frac{b_i - 2b_i^{(r)}}{b_i^{(r)}}\frac{b_i^{(r)}}{2}]$. Observe that $b_i^{(r)} > L_j$, because $b_i^{(r)} > \frac{b_i}{4}$. Also,
	$\frac{b_i - 2b_i^{(r)}}{b_i^{(r)}}\frac{b_i^{(r)}}{2} = \frac{b_i}{2} - b_i^{(r)} = L_j$, so indeed we have that $\hat{V}_j \ge L_j \cdot s_i$, as desired.
\end{proof}

\subsection{A Polynomial Time $\frac{3}{5}$-APS Allocation}
\label{app:V}

The fact that the APS is NP-hard to compute in polynomial time might cause a problem for those algorithms that make explicit use of the value of the APS. In our work, there are several such algorithms, one in Theorem~\ref{thm:app-APS}, one in Lemma~\ref{lem:35} (this lemma is used by Theorem~\ref{thm:app-APS}), and one in Lemma~\ref{lem:3/4} (this lemma is used by Lemma~\ref{lem:35}). In this section we show how in these cases the need to compute the value of the APS can be avoided, while still obtaining the guarantees of the underlying algorithms.
{We assume the value of every item is a non-negative integer, and the entitlement is a rational positive number. }

These algorithms use a parameter $z$ that they initially set to be equal to the APS. Given $z$, the algorithms run in polynomial time. Moreover, their guarantees are with respect to $z$ (e.g., giving at least $\frac{3}{5} z$ Lemma~\ref{lem:35}), and not with respect to the APS.

One can ask what guarantees do these algorithms give if one initializes $z$ with a value not equal to the APS. Inspecting these algorithms, we see that if $z$ is smaller than the APS, the guarantees with respect to $z$ still hold, but they do not necessarily imply the corresponding guarantee with respect to the APS (as $\frac{3}{5} z$ is smaller than $\frac{3}{5} \cdot APS$). On the other hand, if $z$ is larger than the APS, the guarantees with respect to $z$ need not hold, but if they do hold, they do imply the desired guarantees also with respect to the APS.

In general, we can partition the possible values of $z$ into two classes. The {\em good} class contains those $z$ for which the guarantees of the algorithm do hold with respect to $z$. In particular, all values of $z$ up to and including the APS are good. The {\em bad} class contains all values of $z$ for which  the guarantees of the algorithm do not hold with respect to $z$.

Suppose that we had a polynomial time algorithm to determine whether a given value of $z$ is good or bad. Then in polynomial time, we could do a binary search on all values of $z$ up to $v_i(\items)$, and find a $z$ that is good, and satisfies also that $z+1$ is bad. This $z$ is at least as large as the APS, and using it in the respective algorithm provides the desired guarantees also with respect to the APS.

It remains to design a polynomial time algorithm that determines whether a given value of $z$ is good. For the algorithms that concern us here, there are two observations whose combination implies that this can be done.

One observation is that given {$z$}, these algorithms present an explicit strategy for the bidding game, and moreover, the guarantee holds regardless of the strategies of other agents in this game. Hence to see if the guarantee holds, it suffices to consider a {\em worst case adversary} for the bidding game. This adversary knows $v_i$ and $z$, and pools together the combined budgets of all other agents, as if there is just one other agent. (Here we use the fact that the value of the APS with entitlement $b_i$ does not depend on how the remaining entitlement is distributed among the other agents.) The adversary views the bidding game as a {zero} sum game between himself as one player and agent $i$ as the other player, and the adversary's goal is to bid and select items in a way that would minimize the final value obtained by agent $i$. As the strategy of agent $i$ is known, the optimal strategy for the adversary is a solution of a well defined optimization problem (there are no game theoretic aspects involved). 

Summarizing, the first observation shows that in order to test whether $z$ is good, it suffices to solve a well defined optimization problem. E.g., for Lemma~\ref{lem:35} this problem is to determine whether the adversary has a strategy that limits the value received by agent $i$ to below $\frac{3}{5} z$.

The second observation is that the algorithms of Lemma~\ref{lem:35} and Lemma~\ref{lem:3/4} have the following property. Once agent $i$ wins two items, we know whether the guarantees are met. For Lemma~\ref{lem:3/4} this holds because at the time we switch to the second stage of that algorithm, the agent has won two items (denoted to $w_1$ and $w_2$). As the second stage of the algorithm uses the algorithm of Lemma~\ref{lem:safeStrategy}, we can at that point check whether the conditions of Lemma~\ref{lem:safeStrategy} hold. These conditions can be checked in polynomial time, and they compare the maximum valued item to the proportional share. If the conditions hold, we are sure to succeed. If the conditions do not hold, $z$ was bad (too large). As for Lemma~\ref{lem:35}, its first steps either win one or two items and give a value of $\frac{3}{5}$, or win no item and instead call Lemma~\ref{lem:3/4}. Hence again, after the agent wins two items (in the algorithm of Lemma~\ref{lem:3/4}), we can tell whether the value of $z$ is good or bad.

Given the second observation, it is easy to solve the optimization problem of the adversary, as there are at most ${m \choose 2}$ adversarial strategies that we need to try. In each such strategy, we order the items in descending order of value according to $v_i$, and fix two items that agent $i$ is supposed to win, say items $k$ and $\ell$ in this order. In the first $k-1$ rounds the adversary wins all items and pays the bids of agent $i$, thereafter agent $i$ wins item $k$ and pays her bid, thereafter the adversary wins the items up to $\ell - 1$ and pays the bids of agent $i$, and then agent $i$ wins item $\ell$ and pays her bid. If at least one of these strategies is feasible for the adversary (the adversary can afford all its payments using its budget) and certifies that $z$ is bad, then $z$ is indeed larger than the APS. Else, $z$ is good.

Hence using the second observation we have a polynomial time algorithm that tests whether a given value of $z$ is good or bad. As explained above, this shows that the guarantee of $\frac{3}{5} \cdot APS$ can be achieved in polynomial time, even without knowing the exact value of the APS. Likewise, one can decide in the algorithm of Theorem~\ref{thm:app-APS} which of the three strategies offers the highest of the three guarantees, even without knowing the value of the APS.

\section{Missing Proofs from Section~\ref{sec:greedy-efx}}
\label{app:greedy-efx}

Here we present the definition of the Greedy-EFX algorithm. 
\begin{algorithm}
	\caption{Greedy-EFX~\citep{LiptonMMS04,BK20}}
	\begin{algorithmic}[1]
		\STATE Input: Set of items $\items=\{e_1,\ldots,e_m\}$ and valuation profile $\vect{v}=(v_1,\ldots,v_n)$ where $v_i(e_j)\geq v_i(e_k)$ for all $i\in \agents$ and every $j\leq k\leq m$.
		 
		\FOR{$t= 1,\ldots,m $}
		\STATE Give item $e_t$ to an agent that no one envies, breaking ties arbitrarily 
		\WHILE{There exists an envy cycle}
		\STATE Resolve {an (arbitrary)} envy cycle by rotating sets along the cycle
		\ENDWHILE
		\ENDFOR
	\end{algorithmic}
\end{algorithm}

\thmordered*
\begin{proof}[Sketch] 
	A {\em choosing sequence} is a sequence on names of agents (repetitions are allowed). Starting with the beginning of the choosing sequence, in each round the agent whose name appears in the choosing sequence selects an item of her choice. The allocation $A$ for the ordered instance naturally induces a choosing sequence, where for every round $r$, the agent to choose in round $r$ is the one to which $A$ allocated the $r$th most valuable item in the ordered instance. Using this choosing sequence (and the pigeon-hole principle), in every round, every agent can choose an item that she values at least as much as the corresponding item that she got under $A$.
\end{proof}

\thmgreedyEFX*

\begin{proof}
Observe that under greedy-EFX, every agent $q$ gets at least her $n$th most preferred item. Hence if her TPS is nonzero (which can only happen if there are at least $n$ items of positive value for  $q$), then the allocation gives the agent positive value.

Consider an arbitrary agent $q$, who gets bundle $B_q$, and let $v_q$ be her valuation function. We may assume without loss of generality that $v_q(B_q) > 0$, because if $v_q(B_q) = 0$ the TPS (and APS) of agent $q$ is~0. To simplify subsequent notation, we scale the valuation function of $q$ by a multiplicative factor, so that that $v_q(B_q) = 1$. Consequently, in order to prove Lemma~\ref{lem:BK17}, we need to either show that $APS_q \le \frac{4}{3}$, or to show that $TPS_q \le \frac{3n-1}{2n}$.

We shall use the following {\em properties} of greedy-EFX. (Values in these properties are taken with respect to $v_q$, and inequalities implicit in these properties need not be strict.)

\begin{enumerate}

\item Within a bundle, the items enter the bundle in order of decreasing value.

\item If $i < j$ and $B_i$ and $B_j$ each contain at least two items, then the first item in $B_i$ is more valuable than the first item in $B_j$, and the second item in $B_i$ is less valuable than the second item in $B_j$.

\item Removing the last item of any bundle, the value that remains in the bundle is at most $v_q(B_q) = 1$. (This is the EFX property.)

\end{enumerate}

Let $k_1$ denote the highest index of a bundle that contains only one item (with $k_1 = 0$ if there is no such bundle). 

\begin{claim}
\label{claim:atMost2}
If in the allocation produced by greedy-EFX every bundle contains at most two items, then $APS_q \le 1$. 
\end{claim}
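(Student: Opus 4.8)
The plan is to prove the claim through the price-based definition of the APS (Definition~\ref{def:anyprice-prices}). Since the agents have equal entitlements, agent $q$ has budget $b_q=\frac1n$, so it suffices to exhibit a price vector $(p_g)_{g\in\items}$ with $\sum_g p_g=1$ under which every set $S$ with $v_q(S)>1$ costs strictly more than $\frac1n$; then no affordable set can beat $v_q(B_q)=1$, giving $APS_q\le 1$. First I would dispose of the easy regime: by Proposition~\ref{prop:shares-comp}, $APS_q$ is at most the proportional share $\frac1n v_q(\items)$, so if $v_q(\items)\le n$ we are already done. Hence I may assume $v_q(\items)>n$, which is exactly the regime forcing many bundles to have value exceeding $1$ and in which honest (value-proportional) pricing fails.

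Next I would record the structural facts the prices must exploit. Working in the ordered instance (Theorem~\ref{thm:BL14}), the first item to enter each bundle is one of the $n$ most valuable items and every later item is smaller; when a bundle is a pair $\{h_j,\ell_j\}$ with $v_q(h_j)\ge v_q(\ell_j)$, the EFX property (the third listed property) applied by deleting $\ell_j$ yields $v_q(h_j)\le v_q(B_q)=1$. Thus no item sitting in a pair has value above $1$, and only singletons can hold an item of value $>1$. Crucially, $B_q$ itself has value exactly $1$, never more, which anchors the whole construction: it is a bundle the agent is allowed to afford in full, and pricing it cheaply (below $\frac1n$) frees budget to overcharge elsewhere.

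The heart of the proof is the price construction, and I expect it to be the main obstacle. The naive choice of splitting $\frac1n$ within each bundle proportionally to value only certifies $APS_q\le\max_j v_q(B_j)$, which is worthless for a pair of value close to $2$; the prices must therefore be loaded unevenly, and (as the authors note for the main lemma) they will depend on the actual item values. The idea is to overcharge precisely the items that can push an otherwise value-$\le 1$ purchase past $1$: on each pair of value $>1$ put enough price on the smaller item that the pair as a whole costs more than $\frac1n$ (while the larger item, of value $\le 1$ by EFX, stays affordable on its own), keep every ``large'' item (value close to $1$) priced above $\frac1{2n}$ so that no two large items are jointly affordable, and price the small items high enough to forbid topping a near-$1$ item past $1$. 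The value-$1$ anchor $B_q$ is what keeps $\sum_g p_g=1$ feasible, since only the at most $n-1$ bundles of value $>1$ need the extra charge. The delicate step is the simultaneous verification that \emph{every} value-$>1$ set $S$ costs more than $\frac1n$ — including cross-bundle combinations that take one large item from several different pairs, and combinations that top up a large item with several small ones — while the prices still total $1$; this is exactly where $h_j\le 1$ and the equality $v_q(B_q)=1$ are indispensable, and I would organize the verification according to how many large items $S$ contains. A cleaner alternative may be to pass to the packing formulation (Definition~\ref{def:anyprice-bundles}) and show that one cannot fractionally pack weights summing to more than $n$ using only sets of value $>1$ with each item used in total weight at most $\frac1n\cdot n=1$; by LP duality (Proposition~\ref{prop:anyprice-eqe}) this is equivalent to the price statement and may reduce the bookkeeping.
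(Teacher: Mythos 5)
You have the paper's route exactly right at the level of strategy: certify $APS_q \le 1$ via the price-based definition (Definition~\ref{def:anyprice-prices}), working in the ordered instance, and exploiting the two structural facts that matter --- the EFX property forces the first item of every two-item bundle to have value at most $v_q(B_q)=1$, and $B_q$ itself (of value exactly $1$) can be priced just under the budget so that the remaining bundles can absorb a slight overcharge. Your preliminary reduction (if $v_q(\items)\le n$ then $APS_q\le PS_q\le 1$ by Proposition~\ref{prop:shares-comp}) is correct, though the paper never needs it. But the proposal has a genuine gap, and you name it yourself: the price vector is never constructed, and the verification that \emph{every} set of value above $1$ is unaffordable is deferred as ``the delicate step.'' That construction and verification are the entire content of Claim~\ref{claim:atMost2}; a list of desiderata for the prices (pairs of value $>1$ unaffordable yet their larger item affordable alone, no two large items jointly affordable, small items expensive enough to block topping up, all bundle prices near $\frac{1}{n}$ so the total stays at $1$) is not a proof that such prices exist, and it is precisely the simultaneous satisfiability of these constraints that needs to be exhibited.

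It is worth seeing how the paper closes this step, because it contradicts your expectation that ``the prices will depend on the actual item values'': in this claim they depend only on the \emph{positions} of items in the ordered instance. With $k_1$ the index of the last singleton bundle and $\epsilon$ tiny, if $B_q=\{e_q\}$ is a singleton, the paper prices every singleton item at $\frac{1}{n}+\epsilon$ except $p(e_q)=\frac{1}{n}-2n\epsilon$, and every item belonging to a pair at $\frac{1}{2n}+\epsilon$; then no two items are jointly affordable, and among affordable single items the most valuable (by the ordering) is $e_q$, of value $1$. If $B_q=\{e_q,e_{2n-q+1}\}$ is a pair, it prices singletons at $\frac{1}{n}+\epsilon$, first items of pairs with index below $q$ at $\frac{2}{3n}+\epsilon$, items $e_q,\ldots,e_{2n-q}$ at $\frac{1}{2n}+\epsilon$, and items from $e_{2n-q+1}$ onward at $\frac{1}{3n}+\epsilon$. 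Every bundle then costs at most $\frac{1}{n}+2\epsilon$ while $B_q$ costs only $\frac{5}{6n}+2\epsilon$, so the prices total strictly less than $1$; and the affordability analysis collapses to two lines: the agent can buy either one item priced $\frac{2}{3n}+\epsilon$ (a first item of an earlier pair, value at most $1$ by EFX), or at most two items, one from position $\ge q$ and one from position $\ge 2n-q+1$, whose best realization by the ordering is exactly $B_q$, of value $1$. This position-based pricing makes the affordability structure mirror the allocation itself, which is what eliminates the case analysis over cross-bundle combinations that you anticipated having to organize ``by the number of large items.''
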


\begin{proof}
Observe that every bundle up to $B_{k_1}$ contains one item, and each of the remaining bundles (if $k_1 < n$) contains two items. Fix $\epsilon < \frac{1}{4n^2}$. There are two cases to consider.

If $q \le k_1$ (in which case $q$ receives only one item, $e_q$), we upper bound $APS_q$ by considering the following price function $p$ for the items.  For $i \le k_1$ we have $p(e_i) = \frac{1}{n} + \epsilon$, with the exception that $p(e_q) = \frac{1}{n} - 2n\epsilon$. For $i > k_1$ we have $p(e_i) = \frac{1}{2n} + \epsilon$. The total price of all items is less than~1, because the price of every bundle is at most $\frac{1}{n} + 2\epsilon$, except for bundle $B_q$ that has price $\frac{1}{n} - 2n\epsilon$. Agent $q$ who has budget $\frac{1}{n}$ can afford to buy only a single item, as the sum of prices of any two items exceeds $\frac{1}{n}$. Among the items priced at most $\frac{1}{n}$, the most valuable item is $e_q$, and we assumed that $v_q(e_q) = 1$. Hence $APS_q \le 1$. 

If $q > k_1$ (in which case $q$ receives two items, $e_q$ and $e_{2n-q+1}$), we consider the following price function $p$. For $i \le k_1$ we have $p(e_i) = \frac{1}{n} + \epsilon$. For $k_1 < i < q$ we have $p(e_i) = \frac{2}{3n} + \epsilon$. For $q \le i \le 2n-q$ we have $p(e_i) = \frac{1}{2n} + \epsilon$. For $i \ge 2n-q+1$ we have $p(e_i) = \frac{1}{3n} + \epsilon$. The total price of all items is less than~1, because the price of every bundle is at most $\frac{1}{n} + 2\epsilon$, except for bundle $B_q$ that has price $\frac{5}{6n} + 2\epsilon$. Agent $q$ can afford to buy either a single item among $e_{k_1+1}, \ldots, e_{q-1}$, or two items: one among $e_q \ldots, e_m$, and the other among $e_{2n - q + 1}, \ldots, e_m$. When buying two items, the most valuable two are $e_q$ and $e_{2n-q+1}$, because the instance is ordered. These are precisely the items in bundle $B_q$. The value of every item among $e_{k_1+1}, \ldots, e_{q-1}$ is at most~1, by property~3 above. Hence $q$ can afford to buy a bundle of value at most~1, implying that $APS_q \le 1$.
\end{proof}



It remains to handle the case that some bundle has at least three items.
Let $k$ be the largest index of a bundle that has more than two items. The first item added to $B_k$ was $e_k$, the second item was $e_{2n - k + 1}$, and $B_k$ has at least one more item. We shall refer to all items from  $e_{2n - k + 1}$ onwards as {\em small}. Note that $v_q(e_k) + v_q(e_{2n-k+1}) \le 1$, because otherwise $B_k$ would not get a third item. The fact that the instance is ordered then implies that small items have value at most $\min[\frac{1}{2}, 1 - v_q(e_k)]$.  We now perform a case analysis over the possible values of $v_q(e_k)$.

\begin{claim}
\label{claim:3/4}
If $v_q(e_k) \le \frac{3}{4}$ then $TPS_q \le \frac{3n-1}{2n}$.
\end{claim}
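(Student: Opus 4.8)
The plan is to work with the truncated proportional share directly through its defining fixed-point equation. Writing $\tau=\frac{3n-1}{2n}$ and recalling that $TPS_q$ is the value $t$ satisfying $\frac1n\sum_j\min(v_q(e_j),t)=t$, and that the map $t\mapsto\frac1n\sum_j\min(v_q(e_j),t)$ is concave and nondecreasing, it suffices to prove
$$\sum_j\min(v_q(e_j),\tau)\ \le\ n\tau\ =\ \frac{3n-1}{2}.$$
This shows that the truncated-at-$\tau$ proportional share does not exceed $\tau$, which by the definition of $TPS_q$ (the largest truncation level equal to its own truncated proportional share) yields $TPS_q\le\tau$. I would obtain the displayed bound by summing a per-bundle estimate of $C_i:=\sum_{e\in B_i}\min(v_q(e),\tau)$ over the $n$ bundles, showing $C_q=1$ and $C_i\le\frac32$ for every $i\neq q$, so that $\sum_iC_i\le 1+(n-1)\cdot\frac32=\frac{3n-1}{2}$.

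First I would record two consequences of the EFX property (Property~3). If a bundle $B_i$ has at least two items and contains an item of value exceeding $1$, then deleting its last item still leaves value exceeding $1=v_q(B_q)$, contradicting EFX; hence every item lying in a bundle of size at least two has value at most $1\le\tau$, so no truncation occurs there and $C_i=v_q(B_i)$. Moreover, for a bundle with at least three items, all but its last item form a set of at least two items, each at least as valuable as the last (least valuable) item $\ell_i$; thus $v_q(B_i)-v_q(\ell_i)\ge 2v_q(\ell_i)$, and since $v_q(B_i)-v_q(\ell_i)\le 1$ by EFX we get $v_q(\ell_i)\le\frac12$ and therefore $v_q(B_i)=\bigl(v_q(B_i)-v_q(\ell_i)\bigr)+v_q(\ell_i)\le 1+\frac12=\frac32$. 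For $B_q$ itself we simply have $C_q=v_q(B_q)=1$, and every singleton bundle contributes $C_i\le\tau\le\frac32$.

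The remaining, and main, case is a bundle $B_i\neq B_q$ with exactly two items, a first item $e_i$ and a second item $g$ with $v_q(e_i)\ge v_q(g)$ and $v_q(e_i)\le 1$ (EFX); I want $v_q(e_i)+v_q(g)\le\frac32$, and this is where the hypothesis $v_q(e_k)\le\frac34$ and Property~2 enter. Since $v_q(e_k)+v_q(e_{2n-k+1})\le 1$ and $k\le n$ places $e_k$ no later than $e_{2n-k+1}$ in the order, we have $v_q(e_{2n-k+1})\le\frac12$. If $v_q(g)\le\frac12$, then $v_q(e_i)+v_q(g)\le 1+\frac12=\frac32$. If instead $v_q(g)>\frac12$, then comparing the second items of the $(\ge 2)$-bundles $B_i$ and $B_k$ via Property~2 forces $i>k$ (were $i<k$ we would get $v_q(g)\le v_q(e_{2n-k+1})\le\frac12$); comparing their first items via Property~2 then gives $v_q(e_i)\le v_q(e_k)\le\frac34$, so $v_q(e_i)+v_q(g)\le 2v_q(e_i)\le\frac32$. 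Hence $C_i\le\frac32$ for all $i\neq q$, and summing completes the argument. I expect this two-item subcase to be the only real obstacle: a two-item bundle can a priori carry value close to $2$, and excluding this requires both the positional information from Property~2 (which localizes large second items to bundles occurring after $B_k$) and the case hypothesis $v_q(e_k)\le\frac34$ (which then caps the first items of those bundles).
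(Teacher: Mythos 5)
Your per-bundle bounds are all correct, and your route is genuinely different from the paper's in one pleasant respect: by bounding the \emph{truncated} sum $\sum_j \min(v_q(e_j),\tau)$ bundle by bundle, singleton bundles are handled for free (each contributes at most $\tau\le\frac32$), whereas the paper bounds the \emph{untruncated} proportional share and therefore needs a separate argument for singletons --- it removes the singleton bundles together with their agents and invokes monotonicity of the TPS under that surgery (Lemma~\ref{lem:ktps}) before applying the per-bundle value bound to the reduced instance. Your treatment of the multi-item bundles also differs in detail (the paper bounds the last item of every multi-item bundle up to $B_k$ by $\frac12$ via Property~2, while you dispose of all $\ge 3$-item bundles using EFX alone and invoke Property~2 only for two-item bundles with a large second item), but the ingredients are the same: EFX, the ordering properties, $v_q(e_{2n-k+1})\le\frac12$, and the hypothesis $v_q(e_k)\le\frac34$ for bundles beyond $B_k$.

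However, your opening reduction contains a genuine (though repairable) gap. The implication ``$\frac1n\sum_j\min(v_q(e_j),\tau)\le\tau$ implies $TPS_q\le\tau$'' is false in general: if exactly $n$ items have a common value $V>\tau$ and all other items have value $0$, then $\frac1n\sum_j\min(v_q(e_j),\tau)=\tau$, yet $TPS_q=V>\tau$, since every $z\in[0,V]$ is a fixed point of $f(t)=\frac1n\sum_j\min(v_q(e_j),t)$ and the TPS is the \emph{largest} fixed point. Concavity of $f$ with $f(0)=0$ does give $TPS_q<\tau$ whenever $f(\tau)<\tau$, but when $f(\tau)=\tau$ it only yields $f(t)\le t$ (not strictly) for $t>\tau$, and fixed points above $\tau$ survive precisely when $f$ is the identity on $[0,TPS_q]$, i.e., in the degenerate situation just described. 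This matters for you because your inequality cannot be made strict: your bound $1+(n-1)\cdot\frac32$ equals $n\tau$ \emph{exactly}, so the boundary case is not excluded by the computation itself. The repair is short. One option: in the degenerate case every item has value either $0$ or at least $TPS_q>\tau\ge1$, while every item of $B_q$ has value at most $1$ (trivially if $B_q$ is a singleton, by your EFX observation otherwise), so $v_q(B_q)$ would have to be $0$, contradicting $v_q(B_q)=1$. Another option: run your per-bundle bounds with truncation level $\tau'$ for any $\tau'\in(\tau,\frac32]$; all three bounds ($C_q=1$, singletons at most $\tau'\le\frac32$, multi-item bundles at most $\frac32$) still hold, so $\sum_j\min(v_q(e_j),\tau')\le 1+(n-1)\cdot\frac32=n\tau<n\tau'$, which \emph{is} strict; hence $TPS_q<\tau'$ for every such $\tau'$, and letting $\tau'\downarrow\tau$ gives $TPS_q\le\tau$. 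With either patch your argument is complete.
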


\begin{proof} Suppose first that every bundle (except possibly for $B_q$) has at least two items. Then the inequality $v_q(e_k) \le \frac{3}{4}$, together with the fact that small items have value at most $\frac{1}{2}$, implies that no bundle has value above $\frac{3}{2}$. (For bundles beyond $B_k$ this holds because each of their two items has value at most $\frac{3}{4}$, and for bundles up to $B_k$ this holds because their last item has value at most $\frac{1}{2}$.) Consequently, the combined value of all items is at most $1 + (n-1)\frac{3}{2} = \frac{3n-1}{2}$, and $TPS_q \le PS_q \le \frac{3n-1}{2n}$.

It remains to consider the case that there are bundles with just one item (namely, $k_1 \ge 1$). Let $K_1$ denote the set of items in these bundles (one item per bundle), but if $B_q$ happens to have one item, we exclude $e_q$ from $K_1$.
Note that $|K_1| < n$. We remove the items of $K_1$ and the $|K_1|$ agents that received these items under greedy-EFX, getting a new instance in which the TPS of agent $q$ is lower bounded by the inequality  $TPS(\frac{1}{ n - |K_1|},v_q, \items \setminus K_1) \ge TPS(\frac{1}{n},v_q,\items)$. For the new instance, the argument of the preceding paragraph implies that $TPS(\frac{1}{ n - |K_1|},v_q, \items \setminus K_1) \le  \frac{3(n-|K_1|)-1}{2(n-|K_1|)} < \frac{3n-1}{2n}$. Consequently, $TPS_q < \frac{3n-1}{2n}$.
\end{proof}

Following Claim~\ref{claim:3/4}, we may assume that $v(e_k) > \frac{3}{4}$. This case further breaks into subcases, and for each of the subcases we bound the APS by introducing an appropriate price function $p$ (an approach similar to that used for proving Claim~\ref{claim:atMost2}). In order to simply notation, we scale the budget of $q$ and all prices by a carefully chosen factor $f$, where $f$ will be slightly smaller than $n$. Using this scaling, to show that the AnyPrice share of agent $q$ is at most some value $w$, we shall associate nonnegative prices with items, satisfying the following properties:

\begin{enumerate}

\item In every bundle, the sum of prices of items is at most~1.

\item In at least one bundle, the sum of prices of items is strictly smaller than~1.

\item For every set of items of value larger than $w$, the sum of prices is at least~1.

\end{enumerate}

The first two properties imply that the total sum of prices is strictly less than~$n$. Hence for the purpose of computing the AnyPrice share, agent $q$ has a budget strictly smaller than~1. The third property implies that with such a budget, $q$ cannot afford to buy items of total value larger than $w$.

\begin{claim}
\label{claim:5/6}
If $v_q(e_k) > \frac{5}{6}$ then $APS_q < \frac{4}{3}$.
\end{claim}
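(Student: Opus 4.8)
The plan is to establish the upper bound $APS_q < \frac{4}{3}$ by exhibiting an explicit price function and verifying the three pricing properties set out just before the claim: every Greedy-EFX bundle has total price at most $1$, at least one bundle has total price strictly below $1$, and every set of $v_q$-value at least $\frac{4}{3}$ has total price at least $1$. As already explained in the text, properties~1 and~2 force the total price strictly below $n$, so after normalizing to a vector in $\mathcal{P}$ agent $q$'s effective budget is strictly below one bundle's worth; property~3 then makes every value-$\ge\frac{4}{3}$ set unaffordable, yielding $APS_q<\frac{4}{3}$.

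First I would record the structural facts available in this subcase. Since the instance is ordered and $v_q(e_k)>\frac{5}{6}$, every item $e_1,\dots,e_k$ has value exceeding $\frac{5}{6}$ (call these \emph{big}), while every item from $e_{2n-k+1}$ onward has value at most $1-v_q(e_k)<\frac{1}{6}$ (the small items). I would also invoke the EFX property (property~3 of Greedy-EFX): deleting the last item of any bundle leaves value at most $v_q(B_q)=1$, and hence the leading item of any bundle with at least two items has value at most $1$. In particular, for a bundle whose leading item exceeds $\frac{5}{6}$ and all of whose remaining items are small, the tail (everything after the leading item) sums to less than $1+\frac{1}{6}-\frac{5}{6}=\frac{1}{3}$; this ``tail $<\frac{1}{3}$'' bound is precisely what lets me charge the small items without overflowing a bundle's unit price budget.

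Then I would assign prices by item type. Each big item is priced so that two big items together cost strictly more than the budget (this is where $v_q(e_k)>\frac{5}{6}$ is essential), while inside its own bundle the leading big item is topped up to bring that bundle's total to exactly $1$; the remaining (small, and in the $2$-item bundles medium) items are charged proportionally to their $v_q$-value, at a rate chosen so that within one remaining bundle's worth of price their accumulated value cannot reach $\frac{1}{3}=\frac{4}{3}-1$, and within a full budget with no big item cannot reach $\frac{4}{3}$. Property~3 would then follow from a short case analysis on how many big items a candidate value-$\ge\frac{4}{3}$ set contains: two or more big items are unaffordable by construction; with exactly one big item (value $\le 1$ for a multi-item leading item, and big singletons of value $\ge\frac{4}{3}$ priced at exactly $1$ hence unaffordable) the affordable remainder adds less than $\frac{1}{3}$; and with no big item the proportional rate keeps the value below $\frac{4}{3}$.

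I expect the main obstacle to be the price design itself, namely reconciling property~1 (every bundle priced $\le 1$) with property~3 for the $2$-item bundles $B_{k+1},\dots,B_n$: unlike the $\ge 3$-item bundles, their non-leading item need not be small, so the clean ``tail $<\frac{1}{3}$'' accounting fails there and the proportional charge must be replaced by splitting the unit bundle-price between the two items according to their values. Finding a single, globally consistent charging scheme that keeps all bundles at price $\le 1$ while still forcing every value-$\ge\frac{4}{3}$ set to cost $\ge 1$ is exactly what pins the threshold at $\frac{5}{6}$ (rather than $\frac{3}{4}$), and is why the complementary range $\frac{3}{4}<v_q(e_k)\le\frac{5}{6}$ needs a separate price function.
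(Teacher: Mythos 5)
Your framing (the three pricing properties, the big/small classification, and the EFX-based bound that the tail of each bundle $B_{k_1+1},\dots,B_k$ has value strictly below $\frac{1}{3}$) matches the paper, and you correctly flag the two-item bundles $B_{k+1},\dots,B_n$ as the crux. But the price design you propose for them --- splitting each such bundle's unit price between its two items in proportion to their values --- is exactly the step that fails, so the proof has a genuine gap. Proportional-within-bundle pricing makes the price-per-unit-value differ across bundles, and agent $q$ can then cherry-pick the cheap-per-value item from each of two different bundles. Concretely: take $n=4$ identical agents with item values $2,\,0.9,\,0.9,\,0.85,\,0.6,\,0.5,\,0.05,\,0.05$. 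Greedy-EFX produces $B_1=\{e_1\}$, $B_2=\{e_2,e_7,e_8\}$, $B_3=\{e_3,e_6\}$, $B_4=\{e_4,e_5\}$, and taking $q$ to be the owner of $B_2$ gives $v_q(B_q)=1$, $k=2$, $k_1=1$, and $v_q(e_k)=0.9>\frac{5}{6}$, so we are in the regime of this claim. Your scheme prices $p(e_4)=\frac{0.85}{1.45}\approx 0.586$ and $p(e_6)=\frac{0.5}{1.4}\approx 0.357$, so the cross-bundle set $\{e_4,e_6\}$ costs about $0.943<1$ yet has value $1.35>\frac{4}{3}$, violating property~3. The failure is robust to variants: if instead you classify $e_3,e_4$ as big (they exceed $\frac{5}{6}$) and top them up over a proportional rate $c$ for the remaining items, then $\{e_4,e_6\}$ costs $1-0.1c<1$ for every $c>0$ and the same set breaks the bound. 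A secondary problem is that topping up leading items so that every bundle costs \emph{exactly} $1$ destroys property~2: the normalized budget is then exactly $1$ rather than strictly below it, so even a huge singleton priced at $1$ (here $e_1$, of value $2$) becomes affordable.

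The paper's proof avoids both problems by abandoning value-proportional pricing in the middle range: all items $e_{k_1+1},\dots,e_{2n-k}$ receive the \emph{flat} price $\frac{1}{2}$, items $e_1,\dots,e_{k_1}$ receive price $1$, and only the small items are priced proportionally, at $\frac{3}{2}v_q(\cdot)$. Uniformity is what kills the cherry-picking: with a budget strictly below $1$ the agent can afford at most one item priced $\frac{1}{2}$, and the EFX property bounds the value of any such item by $1$; the leftover budget, below $\frac{1}{2}$, buys small items of value below $\frac{2}{3}\cdot\frac{1}{2}=\frac{1}{3}$, for a total below $\frac{4}{3}$. Strictness of property~2 also comes for free, since each bundle in the range $[k_1+1,k]$ costs less than $\frac{1}{2}+\frac{3}{2}\cdot\frac{1}{3}=1$ by your own tail bound; no top-up is needed anywhere. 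In my example this pricing certifies that the agent can obtain value at most $1.0$, comfortably below $\frac{4}{3}$.
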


\begin{proof}
Recall that $B_{k_1}$ denotes the last bin among those that have only one item, and note that $k_1 < k$ (as $B_k$ has more than two items). The price function $p$ is as follows: for $i \le k_1$ we have $p(e_i) = 1$, for $k_1 < i \le 2n-k$ we have $p(e_i) = \frac{1}{2}$, and for $i \ge 2n-k+1$ (small items) we have $p(e_i) = \frac{3}{2}v_q(e_i)$. Recall that every small item has value at most $1 - v_q(e_k)$, and under the conditions of the claim this value is strictly smaller than $\frac{1}{6}$. 

Every bundle up to~$k_1$ has price~1.  Every bundle in the range $[k_1 + 1, k]$ has one item of value at least $v_q(e_k) > \frac{5}{6}$, and small items of total value strictly less than $2\cdot \frac{1}{6} = \frac{1}{3}$, and hence its price is strictly smaller than $\frac{1}{2} + \frac{3}{2} \cdot \frac{1}{3} = 1$. Every bundle from $k+1$ onwards has two items and hence a price of at most~1. 

Agent $q$ (with a budget strictly smaller than~1) can afford at most one of the items $e_{k_1 + 1}, \ldots, e_{2n-k}$ (giving value at most~1), and has a budget smaller than $\frac{1}{2}$ left with which to purchase small items, giving additional value smaller than $\frac{2}{3}\cdot \frac{1}{2} = \frac{1}{3}$. Hence $APS_q < \frac{4}{3}$.
\end{proof}


\begin{claim}
\label{claim:middle}
If $\frac{3}{4} < v_q(e_k) \le \frac{5}{6}$ then $APS_q < \frac{4}{3}$.
\end{claim}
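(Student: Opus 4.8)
The plan is to certify $APS_q < 4/3$ by exhibiting, exactly as in Claims~\ref{claim:atMost2} and~\ref{claim:5/6}, a nonnegative price vector meeting the three template properties: every bundle has total price at most~$1$, at least one bundle has total price strictly below~$1$, and every set of value exceeding $4/3$ has total price at least~$1$ (so that agent $q$, whose scaled budget is strictly below~$1$, cannot afford value $4/3$). As before, I would price out the single-item bundles by setting $p(e_i)=1$ for $i\le k_1$, which makes each of them individually unaffordable regardless of its (possibly large) value and restricts $q$ to shopping among the items of $B_{k_1+1},\dots,B_n$.

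First I would record the structural facts the prices must respect. Write $a=v_q(e_k)$, so $\tfrac34 < a \le \tfrac56$. Every top item $e_{k_1+1},\dots,e_k$ of a bundle in the range $(k_1,k]$ has value in $[a,1]$ (lower bound from orderedness, upper bound from property~3); every small item has value at most $1-a<\tfrac14$; and by the EFX property every bundle in $(k_1,k]$ has value below $1+(1-a)=2-a<\tfrac54$. Crucially, there is an \emph{anti-correlation} inside such a bundle: if its top item has value $\beta$ then, since the value of all-but-the-last item is at most~$1$, the residual (small) value it carries is at most $(1-\beta)+(1-a)$. Finally, every bundle beyond $B_k$ has exactly two items, each of value at most $a$.

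The natural certificate is then to keep $p(e_i)=1$ for $i\le k_1$, price every item of a bundle $B_i$ with $i\in(k_1,k]$ at $\tfrac34$ of its value, and price the two items of each bundle beyond $B_k$ at a flat $\tfrac12$ each. The bundle-budget property is immediate: a $(k_1,k]$ bundle is charged $\tfrac34 v_q(B_i) < \tfrac34(2-a) < 1$, and a two-item bundle beyond $B_k$ is charged exactly~$1$ (with a $(k_1,k]$ bundle supplying the strict slack needed for property~2). Any affordable set consisting only of $\tfrac34$-priced items has value below $4/3$, which is exactly the desired bound.

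The main obstacle — and the reason the range must be split at $\tfrac56$ — is the residual freedom left by the two-item bundles beyond $B_k$. Because such a bundle can have value as large as $2a$, which exceeds $4/3$ once $a>\tfrac23$, its two items cannot both be priced at rate $\tfrac34$ without violating its per-bundle budget, so at least one must be charged the ``cheap'' flat price $\tfrac12$; combining one such value-efficient item (value up to $a$) with $\tfrac34$-priced items drawn from the $(k_1,k]$ bundles threatens to reach value up to $a+\tfrac23>4/3$. Resolving this will require a value- and position-dependent refinement of the flat $\tfrac12$ prices that simultaneously keeps each two-item bundle at total price~$1$ and charges at least $\tfrac34$ of value to any item an affordable set could actually pair with the rate-$\tfrac34$ items, exploiting both the anti-correlation above and property~2 (which forces the two items of a beyond-$k$ bundle to trade off in value as the index varies). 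I expect the bulk of the work to be a case analysis on the composition of a hypothetical affordable set of value $\ge 4/3$ — according to whether it uses a top item, a flat-priced item, or only small items — verifying in each case that the per-bundle and total-price constraints render value $4/3$ unaffordable precisely in the regime $a\le\tfrac56$.
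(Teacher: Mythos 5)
There is a genuine gap: the one concrete price vector you write down provably fails the third template property, and the promised repair is never supplied. With your prices (rate-$\frac34$ on all items of bundles in $(k_1,k]$, flat $\frac12$ on the two items of each bundle beyond $B_k$, $1$ on singletons), an agent with budget just under $1$ buys one flat-priced item of value up to $a\le\frac56$ for $\frac12$ and spends the remaining (almost) $\frac12$ on rate-$\frac34$ items, obtaining value approaching $\frac56+\frac{1/2}{3/4}=\frac56+\frac23=\frac32>\frac43$. You identify exactly this configuration yourself, but then defer its resolution to an unspecified ``value- and position-dependent refinement'' and a case analysis you do not carry out. Since that refinement is the entire content of Claim~\ref{claim:middle} (the setup --- the three-property pricing template, $p(e_i)=1$ on singletons, and the EFX-based anti-correlation inside bundles of $(k_1,k]$ --- is shared with Claims~\ref{claim:atMost2} and~\ref{claim:5/6}), what you have is a correct frame around an unresolved core, not a proof.

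The paper's resolution is to abandon proportional (rate) pricing in favor of additive gain accounting. It prices every \emph{medium} item $e_i$, $k_1<i\le 2n-k$ (this class contains both the top items of bundles in $(k_1,k]$ and both items of every bundle beyond $B_k$), at $p(e_i)=\max\left[\frac12,\; v_q(e_i)-\frac13\right]$, and prices every \emph{small} item at its full value. Then (i) each medium item costs at least $\frac12$, so a budget strictly below $1$ affords at most one of them; (ii) that single medium item is bought at a gain (value minus price) of at most $\frac13$; (iii) small items are bought at zero gain; hence any affordable set has value strictly below $1+\frac13=\frac43$. The per-bundle constraints hold: a bundle in $(k_1,k]$ with top value $\beta>\frac34$ has price at most $\max\left[\frac12,\beta-\frac13\right]$ plus the small-item total, which by the very anti-correlation you recorded is at most $2(1-\beta)$ (or $(1-\beta)+(1-a)$ in the single-small-item case), giving strictly less than $1$ throughout $\beta\in(\frac34,1]$; and each two-item bundle beyond $B_k$ has both items of value at most $v_q(e_k)\le\frac56$, hence both priced exactly $\frac12$ and the bundle priced $1$ --- this is precisely where the hypothesis $a\le\frac56$ enters. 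Note that for the items beyond $B_k$ the paper's prices coincide with your flat $\frac12$; the fix is not there, but in pricing small items at \emph{full} value, so that the budget spent after the one discounted purchase produces no surplus at all --- exactly what eliminates the configuration your scheme cannot handle.
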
 

\begin{proof} The price function is as follows: for $i \le k_1$ we have $p(e_i) = 1$, for $k_1 < i \le 2n-k$ we have $p(e_i) = \max[\frac{1}{2}, v_q(e_i) - \frac{1}{3}]$, and for $i \ge 2n-k+1$ (small items) we have $p(e_i) = v_q(e_i)$. 
Every small item has value at most $1 - v_q(e_k)$, and under the conditions of the claim this value is strictly smaller than $\frac{1}{4}$.  

Every bundle up to~$k_1$ has price~1. Every bundle in the range $[k_1 + 1, k]$ contains one item $e$ of value $v_q(e) \ge v_q(e_k) > \frac{3}{4}$, and either one small item (of value strictly less than $\frac{1}{4}$), or several small items of  total value at most $2(1 - v_q(e))$. In either case,  $p(e_i) = \max[\frac{1}{2}, v_q(e_i) - \frac{1}{3}]$ implies that the price of the bundle is strictly smaller than $1$. Every bundle from $k+1$ onwards has two items, none of which has value above $\frac{5}{6}$, and hence the price of the bundle is at most~1. 

Agent $q$ can afford at most one item among $e_{k_1+1}, \ldots, e_{2n-k}$, and hence can buy at most one item at a price below its value. As $q$'s budget is less than~1, and the gain of value compared to price paid is at most $\frac{1}{3}$, she obtains value less than $\frac{4}{3}$. Hence $APS_q < \frac{4}{3}$.
\end{proof}

The four claims above complete the proof of Lemma~\ref{lem:BK17}.
\end{proof}

{Recall that in every round, the greedy-EFX allocation algorithm gives an item to an agent that has a bundle that no one envies. If there are several such agents, it may select one of them arbitrarily. Depending on the arbitration rule, we get different versions of greedy-EFX. The positive results of Lemma~\ref{lem:BK17} hold for all such versions. The negative results of the following Proposition~\ref{pro:EFXupper} are proved with respect to one such natural version. The key assumption of our version is that the first item goes to agent~1. (For convenience we also assume that if in a certain round the arbitration rule preferred an agent $i$ over an agent $j$, then in the next round in which both $i$ and $j$ are eligible candidates to receive an item, agent $j$ is preferred over agent $i$. However, this last assumption can be removed by slightly perturbing the values of items in the example. Details of how to perform these perturbations are omitted.)}

\EFXupper*

\begin{proof}
	Let $k =\lceil \frac{1}{2\epsilon}\rceil$.
	We consider setting with $n=\sum_{i=0}^k 4^i=\frac{ 4^{k+1} -1}{3}$ agents, and a set of $m=3\cdot \sum_{i=0}^k 4^i = 4^{k+1}-1$ items $\items =\big\{(a,b,c) \mid a\in\{0,1,\ldots,k\},b\in\{1,2,3\},c \in \{1,2,\ldots,4^a\}\big\}$. For agent $1$, the value of item $(a,b,c)$ is $1-\frac{a}{2k}$ if $b=1$, and $\frac{a}{2k}$ if $b\neq 1$.  
	We call items with $b=1$ large items, and items with $b\in \{2,3\}$ small items.
	For every $0 \le i \le k$, we refer to the items with $a=i$ as belonging to {\em group} $i$. 
	Hence group $i$ contains $4^i$ large items and $2\cdot 4^i$ small items.
	For every agent other than agent~1, the value of item $(a,b,c)$ is $1-2^{-k+a-1}$ if $b=1$, and $2^{-k+a-2}$  if $b\neq 1$. 
	
	{We next claim that if the first item that the greedy-EFX algorithm allocates is assigned to agent $1$, then agent $1$ gets the three items of group $0$, with values $1,0$ and $0$. 
	When  
	agent $1$ receives the first item (the large item of group 0), then each other agent receives a single large item (as they value all large items by at least half, while all small items by less than half).
	After the allocation of the large items, the greedy-EFX allocates the small items in decreasing order of the groups. Each agent that received a large item of group $i$ also receives two small items of the same group.
	Thus, agent $1$ won't get another item until all items but the small items of group $0$ are allocated.}

	{While the greedy-EFX algorithm gives agent~1 value of $1$, we next show that
	the MMS {of agent~1} is at least $\frac{3}{2}-\epsilon$. This holds since the agent can partition  the items to $n$ bundles, each with value at least $\frac{3}{2}-\epsilon$, in the following way:}
	For each $a\in \{1,2,\ldots,k\}$, the agent creates $2\cdot 4^{a-1}$ bundles by taking two large items of group $a$ and one small item of group $a-1$.
	The value of such bundle  is $2(1-\frac{a}{2k}) + \frac{a-1}{2k} = 2-\frac{a+1}{2k}\geq \frac{3}{2}-\epsilon  $.
	{The agent also creates $ \frac{2\cdot 4^k-2}{3} $ bundles, each containing three small items of group $k$, and each such bundle has a value of $\frac{3}{2}$.} 
	Note that $2\cdot 4^k-2$ is divisible by $3$.
	The agent also creates another bundle using  {the two remaining}  small items of group $k$, and the large item of group $0$. {That bundle has value of $2$.} 
	The number of bundles created is then $$\sum_{a=1}^k 2 \cdot 4^{a-1} + \frac{2\cdot 4^k-2}{3} +1  =\frac{2\cdot 4^k-2}{3} + \frac{2\cdot 4^k+1}{3} =  n,$$
	which 	concludes the proof.
\end{proof}


We next consider the outcome of the greedy-EFX algorithm for the case of agents with equal entitlements and identical valuations.
\begin{theorem}
\label{thm:BK17a}
When $n$ agents have additive {\em identical} valuations and equal entitlements, the greedy-EFX allocation gives every agent more than a $\frac{3}{4}$ fraction of her AnyPrice share.
\end{theorem}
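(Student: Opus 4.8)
The plan is to exploit the fact that, for identical valuations, \textbf{greedy-EFX coincides with a longest-processing-time (LPT) rule}. Since all agents share one valuation $v$, agent $j$ envies agent $i$ exactly when $v(B_j) < v(B_i)$; an envy cycle would therefore force a strictly increasing cyclic chain of bundle values, which is impossible. Hence no envy cycle ever arises, and in each round greedy-EFX simply hands the next item (the next-smallest remaining one, in the ordered instance) to an agent holding a current minimum-value bundle. Consequently every agent shares the same APS, call it $z$, and it suffices to prove that the minimum bundle value $\mu$ produced by this rule satisfies $\mu > \frac34 z$. Normalizing $v$ so that $\mu = 1$ (as in the proof of Lemma~\ref{lem:BK17}), the goal becomes $z < \frac43$.

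With this normalization I would reuse the case analysis of Lemma~\ref{lem:BK17} essentially verbatim, since its structural ingredients (items enter a bundle in decreasing order, the reverse-sorted second-item property, and the EFX property that a bundle minus its last item has value at most $\mu = 1$) all hold for identical valuations. If every bundle has at most two items, Claim~\ref{claim:atMost2} already gives $z \le 1$. Otherwise let $e_k$ be the smallest-value item that is the first item of a bundle eventually containing at least three items. The cases $v(e_k) > \frac56$ and $\frac34 < v(e_k) \le \frac56$ are settled exactly by Claims~\ref{claim:5/6} and~\ref{claim:middle}, whose price arguments use nothing beyond additivity and hence certify $z < \frac43$ unchanged. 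The sole case in which Lemma~\ref{lem:BK17} bounds the \emph{TPS} rather than the APS is $v(e_k) \le \frac34$, and this is precisely where identical valuations must be invoked to upgrade the guarantee to $z < \frac43$.

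For that remaining case a clean preliminary reduction helps: by Proposition~\ref{prop:shares-comp}, $z \le PS = \frac1n v(\items)$, so if $PS < \frac43$ we are done immediately; I may therefore assume $v(\items) \ge \frac{4n}{3}$ while $\mu = 1$. In this regime I would certify $z < \frac43$ by the same price method used throughout Lemma~\ref{lem:BK17}: assign nonnegative item prices of total at most $n$ so that every set of value at least $\frac43$ costs at least $1$, whence an agent with budget $\frac1n$ cannot afford value $\frac43$. The construction should charge each large first item ($e_1,\dots,e_{k-1}$, which may individually exceed $1$ and then occupy singletons) at least its own value, so that buying one of them is never a bargain, while the small items (those from $e_{2n-k+1}$ on, each of value at most $1-v(e_k)$) are discounted just enough that every bundle ending in a small item—which by EFX equals $\mu = 1$ plus that small item—still costs at most $1$.

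The \textbf{main obstacle} is exactly this price construction in the $v(e_k)\le\frac34$ regime. Unlike the higher-value cases, here bundles of value strictly greater than $\frac43$ genuinely occur (e.g.\ two items of value $\approx 0.7$), so one cannot simply set $p(e)=\frac34 v(e)$. The prices must simultaneously (i) keep the total at most $n$ despite such heavy two-item bundles, (ii) price any item of value exceeding $\frac43$ above $1$ so that it cannot be bought alone, and (iii) guarantee that no affordable mixture of small and medium items reaches value $\frac43$, exploiting both the bound $1-v(e_k)$ on small items and the fact—already used in Claim~\ref{claim:middle}—that a budget of $\frac1n$ can capture at most a $\frac13$ ``discount'' over the budget. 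Reconciling these requirements, and in particular balancing the heavy two-item bundles against the slack created by the lighter bundles so that $\sum_e p(e) \le n$, is the technical heart of the argument; everything else follows from the LPT reduction together with the already-proved claims of Lemma~\ref{lem:BK17}.
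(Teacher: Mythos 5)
Your reduction is sound as far as it goes: for identical valuations greedy-EFX indeed never creates envy cycles (a cycle would force a strictly increasing cyclic chain of bundle values), so each item goes to a currently minimum-value bundle, all agents share one APS, and Claims~\ref{claim:atMost2}, \ref{claim:5/6} and~\ref{claim:middle} of Lemma~\ref{lem:BK17} carry over verbatim. This matches the paper's strategy exactly. But your proof stops precisely where the theorem's content lies: for the case $v(e_k)\le \frac{3}{4}$ you only list desiderata that a price function should satisfy and then concede that reconciling them is ``the technical heart of the argument.'' That heart is absent, so this is a gap, not a proof. The paper fills it by splitting $v(e_k)\le\frac{3}{4}$ into three sub-ranges: (a) $v(e_k)<\frac{1}{3}$, where every bundle's last item is worth less than $\frac{1}{3}$, so every multi-item bundle, and hence the \emph{TPS}, is below $\frac{4}{3}$ --- note that your shortcut via the proportional share does not quite cover this, since a singleton bundle holding one huge item can push $PS$ above $\frac{4}{3}$ even though $TPS<\frac{4}{3}$, so you must truncate; (b) $\frac{2}{3}\le v(e_k)\le\frac{3}{4}$, handled by an explicit price vector ($1$ on singleton items, $\max[\frac{1}{2}, v-\frac{1}{4}]$ on medium items, $\min[\frac{1}{4},v]$ on small ones) that needs no identicalness at all; and (c) $\frac{1}{3}\le v(e_k)<\frac{2}{3}$, the only place where identical valuations are truly used.

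Sub-case (c) is far more delicate than your sketch anticipates. The paper prices items by value class (huge $=1$, large $=\frac{2}{3}$, medium $=\frac{1}{3}$, small $=$ value), then uses the LPT property to locate a bundle $B_j$ of value above $\frac{4}{3}$ (if none exists, $TPS<\frac{4}{3}$ and we are done), extracts $\delta\le\frac{1}{6}$ from its smallest item $e_s$ with $v(e_s)=\frac{1}{3}+\delta$, and exploits the fact that items arriving after $e_s$ land on bundles already worth at least $1-\delta$ in order to verify that every bundle costs at most $1$ while $B_q$ costs strictly less. It then raises prices a second time (``special'' medium items of value in $(\frac{2}{3}-\delta,\frac{2}{3})$ go to $\frac{1}{2}$, small items to $v(e)/(6\delta)$) to kill exactly the affordable mixtures you worry about in your requirement (iii). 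None of this is recoverable from your conditions (i)--(iii) alone; in particular the total-price bound in (i) is obtained not by globally ``balancing heavy two-item bundles against slack,'' but by per-bundle accounting that the identical-valuations LPT structure makes possible. So your approach is the right one and aligned with the paper's, but the decisive constructions for the whole range $v(e_k)\le\frac{3}{4}$ are missing.
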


\begin{proof}
{Lemma~\ref{lem:BK17} concerns arbitrary additive valuations. Its proof was based on analysing several cases, handled by four different claims. For every claim, the bounds proved in that lemma apply also to additive identical valuations. In three of the claims (Claim~\ref{claim:atMost2}, Claim~\ref{claim:5/6}, Claim~\ref{claim:middle}), it was established that the agent receives  more than a $\frac{3}{4}$ fraction of her AnyPrice share. 
The exception is Claim~\ref{claim:3/4}, that considers the range of values $v_q(e_k) \le \frac{3}{4}$. (Recall that $q$ is an agent that receives an allocation of value~1, index $k$ is the largest index of a bundle that contains more than two items, and $e_k$ is the $k\mbox{-}th$ most valuable item, and hence the most valuable item in bundle $B_k$.) Hence in the current proof it suffices to address the case that $v_q(e_k) \le \frac{3}{4}$. We shall address this case by further partitioning it into three cases. In two of these cases we shall not require valuation functions of different agents to be identical, and only the third of these cases makes use of the assumption of identical valuations.

\begin{claim}
\label{claim:1/3}
If $v_q(e_k) < \frac{1}{3}$ then $TPS_q < \frac{4}{3}$, and hence $APS_q < \frac{4}{3}$. This holds even if the valuation functions of agents are not identical. 
\end{claim} 

\begin{proof} 
If $v_q(e_k) < \frac{1}{3}$, then every bundle either has only a single item, or the last item to enter the bundle has value at most $v_q(e_k) < \frac{1}{3}$, and hence the bundle has value strictly less than $\frac{4}{3}$. Hence the  truncated proportional share of agent $q$ is smaller than $\frac{4}{3}$, and she gets more than a $\frac{3}{4}$ fraction of $TPS_q$.
\end{proof}

In order to upper bound the APS in the proofs of the following Claims~\ref{claim:2/3} and~\ref{claim:middleThird}, we shall use the approach used in the proof of Lemma~\ref{lem:BK17}, pricing the items so that the total price $P$ is strictly less than~$n$, and giving the agent a budget of  $\frac{P}{n}< 1$.}

\begin{claim}
\label{claim:2/3}
If $\frac{2}{3} \le v_q(e_k) \le \frac{3}{4}$ then $APS_q < \frac{4}{3}$. {This holds even if the valuation functions of agents are not identical.} 
\end{claim}

\begin{proof}
Recall that $B_{k_1}$ denotes the last bin among those that have only one item, and note that $k_1 < k$ (as $B_k$ has more than two items). The price function $p$ is as follows: for $i \le k_1$ we have $p(e_i) = 1$, for $k_1 < i \le 2n-k$ (medium items) we have $p(e_i) = \max[\frac{1}{2}, v_q(e) - \frac{1}{4}]$, and for $i \ge 2n-k+1$ (small items) we have $p(e_i) = \min[\frac{1}{4}, v_q(e)]$. Recall that every small item has value at most $1 - v_q(e_k)$, and under the conditions of the claim this value is at most $\frac{1}{3}$. 
(Remark: if $q > k$, the price of item $e_{2n - q + 1}$ is slightly decreased, as explained shortly.)

Every bundle up to~$k_1$ has price~1. Every bundle from $k+1$ up to $n$ has two items, none of which has value above $\frac{3}{4}$, and hence the price of the bundle is at most~1. For every bundle $B_i$ with $i$ in the range $[k_1 + 1, k]$, its first item has value at least $v_q(e_k) \ge \frac{2}{3}$. If $B_i$ has only one small item, then $p(B_i) \le \max[\frac{1}{2}, v_q(e_i) - \frac{1}{4}] + \frac{1}{4} \le 1$. If $B_i$ has two small items, then $p(B_i) \le \max[\frac{1}{2}, v_q(e_i) - \frac{1}{4}] + 2\min[\frac{1}{4}, 1 - v(_qe_i)] \le 1$. If $B_i$ has three or more small items then $p(B_i) \le \max[\frac{1}{2}, v_q(e_i) - \frac{1}{4}] + \frac{3}{2}(1 - v_q(e_i)) \le 1$, where the last inequality holds since $v_q(e_i) \ge \frac{2}{3}$.

If $q  \le k$ then $p(B_q) < 1$ (because $p(e_q) < v(e_q)$ whereas $v(B_q) = 1$). If $q > k$, then $v_q(B_q) = 1$ implies that for the second item in $B_q$ (which is a medium item) we have $v_q(e_{2n-q+1}) \le \frac{1}{2}$. We change its price to $\frac{1}{2} - \epsilon$ for a very small $\epsilon > 0$, so that $p(B_q) < 1$. The sum of prices is now strictly smaller than $n$, as desired.

Purchasing a medium item gives the agent a value gain of at most $\frac{1}{4}$ compared to the price paid. Purchasing a small item gives the agent a value gain of at most $\frac{1}{3} - \frac{1}{4} = \frac{1}{12}$ compared to the price paid. With a budget less than~1, the combination that gives the agent the largest gain is that of purchasing one large item and one small item (purchasing $e_{2n-q+1}$ as a second large item might be feasible, but it gives a smaller gain), for a gain of at most $\frac{1}{4} + \frac{1}{12} = \frac{1}{3}$. As the budget is less than~1, the AnyPrice share is less than $\frac{4}{3}$.
\end{proof}

It remains to deal with the case that $\frac{1}{3} \le v_q(e_k) < \frac{2}{3}$.  Here we shall use the assumption that all agents have the same valuation function $v = v_q$. This assumption implies that every item is added to a bundle that at the time has the smallest $v$ value.

\begin{claim}
\label{claim:middleThird}
If agents have identical additive valuations and $\frac{1}{3} \le v_q(e_k) < \frac{2}{3}$, then $APS_q < \frac{4}{3}$. 
\end{claim} 

\begin{proof}
In this proof we change the meaning of the terminology {\em large}, {\em medium} and {\em small}, compared to the previous cases. 
The price function is as follows. (Remark: we shall later increase some of these prices.)

\begin{itemize}

\item Items $e$ of value $v(e) > 1$ are referred to as {\em huge}, and are priced $p(e) = 1$.

\item
Items $e$ of value $\frac{2}{3} \le v(e) \le 1$ are referred to as {\em large}, and are priced $p(e) = \frac{2}{3}$.

\item Items $e$ of value  $\frac{1}{3} \le v(e) < \frac{2}{3}$ are referred to as {\em medium}, and are priced $p(e) = \frac{1}{3}$.

\item Items $e$ of value $v(e) < \frac{1}{3}$ are referred to as {\em small}, and are priced $p(e) = v(e)$.

\end{itemize}

Let us first confirm that no bundle is priced more than~1. This holds for bundles with at most two items, because $v(e_k) < \frac{2}{3}$, and no bundle can contain two items of price $\frac{2}{3}$.
It remains to consider bundles with at least three items.

At least one bundle that contains more than a single item has value above $\frac{4}{3}$, as otherwise the truncated proportional share is strictly smaller than $\frac{4}{3}$. Let $B_j$ be such a bundle. Necessarily, $B_j$ has either two or three items. Let $e_s$ denote the smallest item in $B_j$, and note that necessarily $v(e_s) = \frac{1}{3} + \delta$ for some $\delta > 0$. If $B_j$ contains three items, then necessarily $\delta \le \frac{1}{6}$.
Likewise, $\delta \le \frac{1}{6}$ also if $B_j$ has two items. This is because $v(e_k) < \frac{2}{3}$, implying that $j < k$, and consequently $e_s$ has smaller value than the second item in $B_k$.  That is, $v(e_s) \le v(e_{2n - k + 1}) \le \frac{1}{2}$.

Consider any bundle $B_i$ with three items (which may also be $B_j$ itself, if $B_j$ has three items). If $B_i$'s second item was added before $e_s$ (this is the case for $B_j$), then the value of this second item is above $v(e_s) > \frac{1}{3}$, meaning that every item in $B_i$ has value smaller than $\frac{2}{3}$, and hence each of its three items has price at most $\frac{1}{3}$. If $B_i$'s second item was added after $e_s$, then the first item in $B_i$ has value above  $1 - \delta$ (since this is a lower bound on the  value of  $B_j\setminus\{e_s\}$) and price $\frac{2}{3}$, and the sum of prices of the remaining two items in the bundle is at most $2\delta \le \frac{1}{3}$.

Consider now any bundle $B_i$ with four or more items. At most two items in $B_i$ have value larger than $e_s$. Regardless of whether there are two such items or only one, their sum of values is at least $1 - \delta$ (as the other items arrive after $e_s$, and $v(B_k) \ge 1 - \delta$ at the time of arrival of $e_s$), and their sum of prices is $\frac{2}{3}$. The sum of values (and hence also prices) of the remaining items in $B_i$ is at most $2\delta \le \frac{1}{3}$.

Recall that $v(B_q) = 1$. We claim that the price of $B_q$ is strictly smaller than~1. For every item $e$ we have that $p(e) \le v(e)$, with equality only if either $v(e) = \frac{2}{3}$ or $v(e) \le \frac{1}{3}$. Hence $p(B_q) \le 1$. Assume now for the sake of contradiction that $p(B_q) = 1$. Necessarily $v(e_q) > \frac{1}{3}$ (as otherwise $v(e_s) \le \frac{1}{3}$), and hence $v(e_q) = \frac{2}{3}$. This means that all other items of $B_q$ arrive after $e_s$ (as their total value is $\frac{1}{3}$ which is smaller than $v(e_s)$). But then the value in $B_q$ at the time that $e_s$ arrives (which is $\frac{2}{3}$) is smaller than the value in $B_j$ at that time (which is at least $1 - \delta \ge \frac{5}{6}$), contradicting the assumption that greedy-EFX places $e_s$ is $B_j$.

This completes the proof that the sum of prices is strictly smaller than $n$.

Given $\delta$ as defined above, we now increase the above prices, preserving the property that the price of every bundle is at most~1, and that the price of $B_q$ is strictly less than~1.

\begin{itemize}

\item
For every medium item $e$ of value $\frac{2}{3} - \delta < v(e) < \frac{2}{3}$, we increase its price to $p(e) = \frac{1}{2}$. We refer to these items as {\em special} items. Observe that if the first item in a bundle is special, then the second item has value at least $v(e_s) = \frac{1}{3} + \delta$, and hence the bundle has two items whose sum of prices is at most~1. (If the bundle is $B_q$, the second item cannot be special and the sum of prices remains strictly less than~1.) Also, there cannot be a bundle in which the first item is large and the second item is special, because $e_k$ is not large, and $e_{2n-k+1}$ is not special.

\item
The price of every small item $e$ is increased to $p(e) = \frac{v(e)}{6\delta}$ (making the price larger than the value). Note that small items belong to bundles in which the other items (or single item) have total value at least $1 - \delta$ (and total price $\frac{2}{3}$). As such, the total value of small items in the bundle is at most $2\delta$, and the total price of the bundle remains at most~1. (If $B_q$ contains small items, then their sum of values is at most $\delta$, and consequently $B_q$'s price remains strictly below~1.)

\end{itemize}

We now verify that agent $q$ cannot afford a bundle of value $\frac{4}{3}$.

The agent can afford at most one large item, paying $\frac{2}{3}$. If she does so, she has strictly less than $\frac{1}{3}$ budget left, and can buy small items of value strictly less than $\frac{1}{3}$, and the total value cannot reach $\frac{4}{3}$.

If the agent buys two medium items that are not special, she gets value at most $\frac{4}{3} - 2\delta$. The agent then has a budget strictly less than $\frac{1}{3}$ left, with which she can buy only small items. This gives additional value of strictly less than $\frac{1}{3} \cdot 6\delta = 2\delta$, and hence the total value purchased is strictly less than $\frac{4}{3}$.

If the agent buys one special item and one medium item that is not special, she gets value at most $\frac{4}{3} - \delta$. The agent then has a budget strictly less than $\frac{1}{6}$ left, with which she can buy only small items. This gives additional value of strictly less than $\frac{1}{6} \cdot 6\delta = \delta$, and hence the total value purchased is strictly less than $\frac{4}{3}$.
\end{proof}

{The proof of Theorem~\ref{thm:BK17a} follows from Claims~\ref{claim:1/3}, \ref{claim:2/3} and~\ref{claim:middleThird}, together with Claims~\ref{claim:atMost2}, \ref{claim:5/6} and~\ref{claim:middle} of Lemma~\ref{lem:BK17}.}
\end{proof}

The analysis of Greedy-EFX in the proof of Theorem~\ref{thm:BK17a} is tight up to low order terms. In fact, this holds even with respect to the maximin share.

\begin{proposition}
\label{pro:BK17a}
For every $\delta > 0$, there is an instance with identical additive valuation functions in which greedy-EFX does not provide an agent more than a $\frac{3}{4} + \delta$ of her  maximin share. (In our example, $n$ is linear in $\frac{1}{\delta}$.)
\end{proposition}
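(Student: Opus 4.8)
The plan is to recognize that, under identical additive valuations, the greedy-EFX algorithm is exactly the classical longest-processing-time (LPT) list-scheduling heuristic for the machine-covering problem. First I would argue that with a common valuation $v=v_q$ the envy relation is a total preorder determined by bundle value, so there are never any envy cycles, and ``an agent that no one envies'' is precisely an agent holding a currently minimum-value bundle. Hence greedy-EFX processes items in non-increasing order of value (the ordered instance) and places each item into a bundle of currently least value, breaking ties by the fixed rule used in Proposition~\ref{pro:EFXupper} (first item to agent~1, with the stated alternation). This is verbatim the LPT rule that assigns each job, largest first, to a least-loaded machine. I would also note that for identical valuations the maximin share of every agent equals the optimum of the corresponding machine-covering instance, namely the maximum over partitions into $n$ parts of the minimum part value, since that is literally the definition of MMS.

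With these two identifications the statement reduces to a purely scheduling claim: exhibit, for every $\delta>0$, a multiset of item values and $n=m$ machines for which LPT attains minimum load at most $\frac34+\delta$ times the optimal covering value, with $m$ linear in $1/\delta$. This is exactly the known tightness of LPT for machine covering (due to Deuermeyer, Friesen, and Langston), whose worst-case instances realize covering ratio $\frac{3m-1}{4m-2}=\frac34+\frac{1}{8m-4}\to\frac34$; taking $m=\Theta(1/\delta)$ drives the ratio below $\frac34+\delta$, so $m$, and hence $n$, is linear in $1/\delta$ as required. The second step is therefore to present such an instance explicitly: a collection of items spanning only a narrow range of values, chosen so that the optimal partition can balance all $n$ bundles up to the common covering value, while the largest-first rule is forced to cluster the larger items and thereby starve one machine.

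I would then carry out two concrete verifications on the chosen instance. First, exhibit the balanced partition directly, certifying that the MMS is the intended covering optimum (the upper bound MMS $\le$ average is immediate, so it suffices to display a partition meeting the claimed minimum). Second, simulate the LPT / greedy-EFX trajectory round by round, tracking which bundle is least loaded at each step, to pin down the final value of the starved agent's bundle and show it is at most $\frac34+\delta$ of the MMS. Because the claim is stated only with the inequality $\frac34+\delta$, rather than the exact extremal ratio, there is freedom to use a slightly simplified family instead of the precise extremal one, provided its covering ratio still tends to $\frac34$.

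The main obstacle is the explicit scheduling construction together with its trajectory analysis. The difficulty is genuine: whenever the item values are too uniform, or the item counts are multiples of $n$, the least-loaded rule balances perfectly and greedy-EFX matches the optimum, so no gap appears (indeed for small $n$, LPT is essentially optimal for covering). Producing a gap close to $\frac14$ requires items at scale roughly $\frac14$ of the optimum whose counts are deliberately off a multiple of $n$, arranged so that the optimum exploits the spread of sizes to balance while the decreasing-order rule cannot. Verifying that the optimum really stays high while LPT is driven down --- that is, that the regrouping available to the MMS partition is unavailable to the greedy order --- and bookkeeping the least-loaded machine across all rounds for general $m$, is where essentially all of the work lies.
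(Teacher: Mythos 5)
Your structural reductions are all correct: under identical valuations envy is induced by bundle value, so there are indeed no envy cycles, greedy-EFX is exactly LPT list scheduling for machine covering, and the MMS is by definition the covering optimum. In substance this is also the paper's route --- its proof is nothing but an explicit LPT-covering worst-case instance --- so the real difference is that the paper carries out the construction you defer. Concretely: with $\epsilon=\frac{1}{8n-3}$ it takes $2n$ items whose values decrease arithmetically from $\frac{1}{2}+\frac{\epsilon}{2}$ down to $\frac{1}{4}+\frac{3}{4}\epsilon$, plus $n-1$ further items of value $\frac{1}{4}+\frac{3}{4}\epsilon$ (so $3n-1$ items in all, one short of a multiple of $n$, exactly the shape you predicted). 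The trajectory analysis you flag as the ``main obstacle'' then takes three lines: greedy pairs item $i$ with item $2n+1-i$, every pair sums to exactly $\frac{3}{4}+\frac{5}{4}\epsilon$, so after $2n$ rounds all bundles are tied; the remaining $n-1$ identical items then fill $n-1$ distinct bundles, leaving one agent with $\frac{3}{4}+\frac{5}{4}\epsilon\le\frac{3}{4}+\delta$, while the MMS equals $1$ (pair items $1$ and $2$ together, after which each remaining bundle gets three items of total value exactly $1$). Your alternative --- importing the known extremal instances for LPT machine covering --- is a legitimate, genuinely different way to finish, and it even yields the sharp ratio $\frac{3m-1}{4m-2}$; just credit it correctly (the exact bound is due to Csirik, Kellerer and Woeginger; Deuermeyer, Friesen and Langston proved the asymptotic $\frac{3}{4}$ guarantee). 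What keeps your write-up from being a complete proof on its own terms is precisely the hedge between these two finishes: you neither commit to the citation nor exhibit an instance, and you explicitly declare the construction to be where ``essentially all of the work lies.'' Either commitment closes that hole; without one of them, what you have is a correct plan (with all the right identifications) rather than a proof.
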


\begin{proof}
For a given value of $\delta > 0$, choose $n$ sufficiently large so that $\epsilon$ in the example below satisfies $\epsilon \le \frac{4}{5}\delta$.

There are $3n-1$ items, and $\epsilon = \frac{1}{8n-3}$. (Under this choice of $\epsilon$ that the numbers in the following example work out.) The first item has value $\frac{1}{2} + \frac{\epsilon}{2}$, thereafter item values decrease by $\epsilon$ until they reach value $\frac{1}{4} + \frac{3}{4}\epsilon$ for item $2n$. All remaining $n-1$ items have value $\frac{1}{4} + \frac{3}{4}\epsilon$. Greedy-EFX will put the first $2n$ items in different bundles (each bundle will contain the pair of items $i$ and $2n+1-i$), each of value $\frac{3}{4} + \frac{5}{4}\epsilon$. Thereafter, only $n-1$ items remain, so one bundle will have a final value of $\frac{3}{4} + \frac{5}{4}\epsilon \le \frac{3}{4} + \delta$. The maximin share is~1, putting the first two items in the same bundle, and for the remaining bundles using greedy-EFX.
\end{proof}

{Proposition \ref{pro:BK17a} shows that for identical additive valuations, 
greedy-EFX does not provide an agent more than a $\frac{3}{4}$ of her  maximin share, 
although there exists an allocation that gives every agent her maximin share (as the valuations are identical).}



\section{Beyond Goods: Chores and Mixed Manna}\label{app:chores}
In this Appendix we present an extension of our APS definition to allocation instances with \textit{chores} and with \textit{mixed manna}, and briefly discuss this extension.

Consider an agent $i$ with a valuation function $v_i$ over the set $\items$ of items {that is normalized ($v_i(\emptyset) = 0$)}. 
An item $j$ is a {\em good} with respect to $v_i$ if $v_i(S \cup \{j\}) \ge v_i(S)$ for every set $S \subseteq \items \setminus \{j\}$. 
An item $j$ is a {\em bad} (also referred to as a {\em chore}) with respect to $v_i$ if $v_i(S \cup \{j\}) \le v_i(S)$ for every set $S \subseteq \items \setminus \{j\}$.  If neither all items are required to be goods nor all items are required to be bads (with respect to $v_i$), we refer to $\items$ as being a {\em mixed manna} for agent $i$ with valuation function $v_i$. 
Clearly, settings with only goods or only bads are special cases of mixed manna.

Recall that for goods we defined valuation functions to be set functions that are normalized, namely, $v_i(\emptyset) = 0$, and monotone non-decreasing. Once chores are involved,  the requirement that valuation functions are non-decreasing is removed. The value of an item or of a (non-empty) bundle might be negative.  

Agent $i$ has entitlement $b_i$, where $0 < b_i \le 1$. Intuitively, when all items are {\em goods}, the entitlement represents what fraction of the grand bundle of goods the agent is entitled to receive. 
When items are {\em chores}, the entitlement of the agent represents what fraction of the chores {should be assigned to} the agent. {As items are indivisible, some deviations from these exact fractions might be necessary.} 


We now extend the definition of the AnyPrice share to the setting of mixed manna. Recall that two definitions were presented for the APS in case of goods. We first present Definition~\ref{def:APSmixed} which is the natural extension of Definition~\ref{def:anyprice-bundles}, replacing the inequality in the last constraint by an equality. {That is, instead of requiring that for every item $j\in \items$ it holds that $\sum_{T: j\in T} \lambda_T \geq b_i$, we now require that 
$\sum_{T: j\in T} \lambda_T = b_i$.}


\begin{definition}
\label{def:APSmixed} 	
	Consider a setting in which agent $i$ with valuation $v_i$ has entitlement $b_i$ to a set of indivisible items $\items$.
	The \emph{AnyPrice share (APS)} of $i$, denoted by $\anypricei$, 
	is the maximum value $z$ she can get 
	by coming up with non-negative
	weights $\{\lambda_T\}_{T\subseteq \items}$ that total to $1$ (a distribution over sets), 
	such that any set $T$ of value below $z$ has a weight of zero, and
	every item appears in sets of total weight exactly $b_i$:

	$$ \anypricei = \max_{z} z $$ \\ subject to the following set of constraints being feasible for $z$: 
	\begin{itemize}
		\item $\sum_{T\subseteq \items} \lambda_T=1$
		\item  $\lambda_T\geq 0$  for every bundle  ${T\subseteq \items}$
		\item  $\lambda_T= 0$ for every bundle ${T\subseteq \items}$ s.t. $v_i(T)<z$ %
		\item $\sum_{T: j\in T} \lambda_T = b_i$ for every item $j\in \items$
	\end{itemize}
\end{definition}
Recall that for the case of goods the APS is at least as large as the MMS. This property also holds for  mixed manna (and thus for chores). 
This is because Definition~\ref{def:APSmixed} defines the APS as a solution to a maximization problem that is a relaxed (fractional) version of the (integral) maximization problem defining the MMS.

For mixed manna, the APS may be either positive or negative (or zero). We present some observations that are useful in determining its sign and some of its properties.

\begin{proposition}
\label{pro:positiveManna}
{Let $\items$ be mixed manna for agent $i$ with valuation function $v_i$.} If $v_i(\items) \ge 0$ then $\anypricei \ge 0$.
\end{proposition}

\begin{proof}
{The LP of Definition~\ref{def:APSmixed} is feasible for $z=0$. This can be seen by} taking bundle $\items$ with coefficient $b_i$, and the empty bundle $\emptyset$ with coefficient $1 - b_i$.
\end{proof}

Fixing a valuation function $v_i$, if there are only goods, then the APS is  a non-decreasing  function of the entitlement. Likewise, if there are only chores, then the APS is a non-increasing  function of the entitlement. However, for the case of mixed manna, the APS need not be a monotone function of the entitlement. Consider for example an instance (that is non-additive) with $\items = \{a,b\}$, where $v_i(a), v_i(b) > 0$, and $v_i(\items) < 0$. Then if $b_i < \frac{1}{2}$ the APS is zero (the LP in Definition~\ref{def:APSmixed} can be supported on the three bundles $\{a\}, \{b\}, \emptyset$), if $b_i  = \frac{1}{2}$ the APS is positive (the support is $\{a\}, \{b\}$), and if $b_i > \frac{1}{2}$ the APS is negative (the bundle $\items$ must be in the support of the LP solution). 
While for mixed manna the APS is not necessarily monotone in the entitlement, we next show that it is always a quasiconcave function of the entitlement.

\begin{remark}
\label{rem:quasiconcave}
A proof similar to that of Proposition~\ref{pro:positiveManna} shows that for a fixed valuation function $v_i$, the APS is a {\em quasiconcave} function of the entitlement $b_i$. Namely, for every $0 \le b_1 < b_2 < b_3 \le 1$ it holds that $\anyprice{b_2}{v_i}{\items} \ge \min\left[\anyprice{b_1}{v_i}{\items},\anyprice{b_3}{v_i}{\items}\right]$. This follows by considering optimal solutions,  $\{\lambda_T^1\}_T$ for $b_1$ and $\{\lambda_T^3\}_T$ for $b_3$, in Definition~\ref{def:APSmixed}. Then $\{\frac{b_1 \cdot \lambda_T^1+b_3 \cdot\lambda_T^3}{b_1+b_3}\}_T$  is a feasible solution for $b_2$, and the bundle of minimum value in its support is in the support of at least one of the two optimal solutions (for $b_1$ and $b_3$).
\end{remark}
Quasiconcavity implies that for every valuation function $v_i$ there is a threshold entitlement $\tau$, such that $\anypricebi$ is a weakly increasing function of $b_i$ for $b_i\in [0,\tau]$, and weakly decreasing for $b_i \in [\tau,1]$. In particular, if  the APS is negative for entitlement $b_i$, it remains negative for every entitlement $b_i^+ > b_i$.

\begin{proposition}
\label{pro:worstItem}  
{For any {normalized valuation $v_i$ and any} entitlement $b_i$,} if there are only chores, then $\anypricei \le \min_{j\in \items} v_i(j)$.
\end{proposition}

\begin{proof}
Let $z = \anypricei$. Then the LP of Definition~\ref{def:APSmixed} is feasible for this value of $z$. Let $j$ be the item of smallest (most negative) value under $v_i$. Then in the feasible solution of the LP, there must be some bundle $T$ that contains $j$. Hence $v_i(T) \ge z$. As there are only chores, $v_i(T) \le v_i(j)$. 
\end{proof}


\begin{proposition}
\label{pro:APSversusProp}
{For any entitlement $b_i$,} if $v_i$ is an additive valuation (allowing mixed manna), then the AnyPrice share cannot be larger than the proportional share: $\anypricei \le b_i \cdot v_i(\items)$.
\end{proposition}

\begin{proof}
Let $z$ be a value for which the LP of Definition~\ref{def:APSmixed} is feasible. Then $z$ is the smallest value of a bundle in the support of the underlying distribution, whereas the proportional value is the average value (due to additivity).
\end{proof}

If all items are chores (even without assuming additivity), then the value of every non-empty bundle is negative.
It is convenient to view the absolute values of these negative values as positive {\em \disutilities}. Thus we replace the non-positive valuation function $v_i$ by the non-negative  \disutility\ function $c_i$, where for every $S \subseteq \items$ we have $c_i(S) =-v_i(S)$. With respect to \disutilities, the APS is positive, and the agents wish to receive bundles of small \disutility. There are allocation instances in which every allocation gives some agent a bundle of \disutility\ larger than her APS. (This follows from a similar result concerning MMS for chores and agents with equal entitlements~\citep*{ARSW17}, because with equal entitlements, the APS \disutility\ is never larger than the MMS \disutility.)  Consequently, we wish to find allocations that give every agent a bundle of \disutility\ no more than $\rho$ times her APS, with $\rho$ being as small as possible. {For $\rho = 2$, this can be achieved for additive valuations, using standard approaches.}

\begin{proposition}
\label{pro:BoBWchores}
Consider an allocation instance with $n$ agents with valuations $(v_1,v_2,\ldots,v_n)$ and 
arbitrary entitlements $(b_1,b_2,\ldots,b_n)$ for a set $\items$ of indivisible chores. If all \disutility\ functions are additive, then there is an allocation in which every agent $i$ gets a bundle of \disutility\ at most $2\cdot \anypricei$.
\end{proposition}

\begin{proof}
Consider a fractional allocation in which every agent $i$ (with entitlement $b_i$) gets a fraction $b_i$ of every item $j$. This fractional allocation gives every agent her proportional share. It is well known that every fractional allocation can be rounded to give an integral allocation in which every agent gets a bundle whose \disutility\ is the same as the \disutility\ of her original fractional bundle, up to the \disutility\ of one item~\citep*{LST90}. As her fractional \disutility\ is at most her APS \disutility (Proposition~\ref{pro:APSversusProp}, after transforming values to \disutilities) and every item has value at most the APS (Proposition~\ref{pro:worstItem}, after transforming values to \disutilities), the proposition follows.
\end{proof}


For completeness, we present the price based definition of the APS for chores (in analogy to Definition~\ref{def:anyprice-prices} for goods).
{The price based definition is derived by considering the dual of the fractional MMS definition. 
We have shown the  equivalence of the two definitions for the case of goods in Proposition~\ref{prop:anyprice-eqe}.  Similar arguments (which are omitted) show the equivalence in the case of chores.}


\begin{definition}
\label{def:APSchores}
	Consider a setting in which agent $i$ with \disutility\ function $c_i$ has entitlement $b_i$ to a set of indivisible chores $\items$.
	The \emph{AnyPrice share (APS)} of agent $i$, denoted by $\anypriceic$, is the \disutility\ she can limit herself to whenever the chores in $\items$ are adversarially priced with a total price of $1$, and she picks her bundle of chores among those bundles of total price at least $b_i$: 
	$$\anypriceic = \max_{(p_1,p_2,\ldots,p_m)\in \prices}\ \ \min_{S\subseteq \items} \left\{c_i(S) \Big | \sum_{j\in S} p_j\geq b_i\right\}$$
	
\end{definition}

Observe that both in Definition~\ref{def:anyprice-prices} (for goods) and in Definition~\ref{def:APSchores} (for chores), all prices are non-negative. The difference between the definitions is that for the case of goods, the feasible bundles for an agent $i$ are those priced at most $b_i$, whereas for chores the feasible bundles are those priced at least $b_i$. 
We note that the APS in Definition~\ref{def:APSchores} measures dis-utility, and thus has a positive value which equals  to minus the value of Definition~\ref{def:APSmixed}.

For allocation instances that involve a mixture of goods and chores, there are instances in which agents have additive valuations, equal entitlements,  the MMS of every agent is strictly positive, whereas in every allocation (that allocates all items) some agent receives a bundle of value at most~0 \citep*{KMT2020}. As the APS is at least as large as the MMS, it follows that for mixed manna it is not always possible to find an allocation that gives every agent a positive fraction of her APS.

\end{document}